\definecolor{BlueGreen}{rgb}{0,1,0}
\newcommand{\Procname}[1]{\begin{flushleft}#1\end{flushleft}}
\newcommand\Real{\mathbb{R}}
\def\cost{\mathit{cost}}
\newcommand{\action}  {{\ensuremath \alpha}}
\newcommand{\actions} {{\ensuremath {Act}}}
\def\steps{\mathit{Steps}}
\newcommand\tab{\qquad}
\newcommand\ie{{i.\,e.\@}}
\newcommand\wrt{{with respect to }}
\newcommand\upto{up to\ }
\newcommand\et{\emph{et al.\ }}
\newcommand\support{\mathit{Supp}}
\newcommand\size[1]{\left\lvert#1\right\rvert}
\newcommand\R{\mathbf{R}}
\newcommand\N{\mathcal{N}}
\newcommand\M{\mathcal{M}}
\renewcommand\O{\mathcal{O}}
\newcommand\D{\mathcal{M}}
\newcommand\C{\mathcal{M}}
\newcommand\distrs{\mathit{Dist}}
\renewcommand\P{\mathbf{P}}
\newcommand\Pmu[1]{\mathbf{P}(#1,\cdot)} 
\newcommand\simrel{\precsim}
\newcommand\weight{\sqsubseteq}
\newcommand\aux{{\mathord{\bot}}}
\newcommand\source{{\mathord{\nnearrow}}}
\newcommand\sink{{\mathord{\ssearrow}}}
\newcommand\init{\mathit{init}}
\newcommand\dist{\mathit{Dist}}
\newcommand\rate{\mathit{Rate}}
\newcommand\post{\mathit{post}}
\newcommand\pre{\mathit{pre}}
\newcommand\psize{\mathit{h}}
\newcommand\listener{\mathit{Listener}}
\newcommand\overeq[1]{\stackrel{#1}{=}}
\newcommand\wsrel{\precapprox} 
\newcommand{\prog}[1]{\textnormal{\scshape#1}}
\newcommand\mbaier{k}
\def\capacity{c}
\newcommand\fanout{\mathit{fan}}
\newcommand\pa{\mathcal{M}=(S,Act,\P,L)}
\newcommand\cpa{\mathcal{M}=(S,Act,\R,L)}
\newcommand\dtmc{\D=(S,\P,L)}
\newcommand\ctmc{\C=(S,\mathbf{R},L)}
\newcommand{\transby}[1]{\ensuremath{\xrightarrow{#1}}}
\newcommand{\combinedby}[1]{\ensuremath{\stackrel{#1}\leadsto}}
\def\doi{4 (4:6) 2008}
\begin{document}
\title[Deciding Probabilistic Simulations]{Flow Faster: Efficient
Decision Algorithms for Probabilistic Simulations\rsuper *}

\author[L. Zhang]{Lijun Zhang\rsuper a}    
\address{{\lsuper{a,b}}Department of Computer Science, Saarland  University, Germany}  
\email{\{zhang,hermanns\}@cs.uni-sb.de}  
\thanks{{\lsuper{a,b}}This work is supported by the DFG as part of the Transregional
Collaborative Research Center SFB/TR 14 AVACS and by the European
Commission under the IST framework 7 project QUASIMODO}   

\author[H.~Hermanns]{Holger Hermanns\rsuper b}      
\address{\vskip-6 pt}  

\author[F.~Eisenbrand]{Friedrich Eisenbrand\rsuper c} 
\address{{\lsuper c}Department of Mathematics, EPFL,  Switzerland}        
\email{friedrich.eisenbrand@epfl.ch}  
\thanks{{\lsuper c}Supported by Deutsche Forschungsgemeinschaft (DFG)  within
Priority Programme 1307 "Algorithm Engineering".}    

\author[D.~N.~Jansen]{David N.\ Jansen\rsuper d}
\address{{\lsuper d}%
Software Modelling and Verification, RWTH Aachen University, Germany, and Model-Based System Development, Radboud University, Nijmegen, The Netherlands}
\email{D.Jansen@cs.ru.nl}
\thanks{{\lsuper d}
Work partly supported by the DFG Research Training Group AlgoSyn
(project nr. 1298).}



\keywords{Markov chains, strong- and weak-simulation, decision
algorithms, parametric maximum flow}
\subjclass{F.2.1, F.3.1, G.2.2, G.3}
\titlecomment{{\lsuper *}An extended abstract of the paper has appeared in~\cite{ZHEJ07,ZH07}.}

\begin{abstract}
  \noindent 
  Strong and weak simulation relations have been proposed for Markov
  chains, while strong simulation and strong probabilistic simulation
  relations have been proposed for probabilistic automata.  This paper
  investigates whether they can be used as effectively as their
  non-probabilistic counterparts.  It presents drastically improved
  algorithms to decide whether some (discrete- or continuous-time)
  Markov chain strongly or weakly simulates another, or whether a
  probabilistic automaton strongly simulates another.  The key
  innovation is the use of parametric maximum flow techniques to
  amortize computations.  We also present a novel algorithm for
  deciding strong probabilistic simulation preorders on probabilistic
  automata, which has polynomial complexity via a reduction to an LP
  problem. When extending the algorithms for probabilistic automata to
  their continuous-time counterpart, we retain the same complexity for
  both strong and strong probabilistic simulations.
\end{abstract}

\maketitle

\section{Introduction}
\noindent
Many verification methods have been introduced to prove the
correctness of systems exploiting rigorous mathematical foundations.
As one of the automatic verification techniques, model checking has
successfully been applied to automatically find errors in complex
systems. The power of model checking is limited by the  state
space explosion problem.  Notably, minimizing the system to the
bisimulation~\cite{Milner80,Park81} quotient is a favorable approach.
As a more aggressive attack to the problem, simulation
relations~\cite{Milner71} have been proposed for these models.  In
particular, they provide the principal ingredients to perform
abstractions of the models, while preserving safe CTL properties
(formulas with positive universal path-quantifiers only)~\cite{CGL94}.

Simulation relations are preorders on the state space such that
whenever state $s'$ simulates state $s$ (written $s \simrel s'$) then
$s'$ can mimic all stepwise behaviour of $s$, but  $s'$ may
perform steps that cannot be matched by $s$. One of the interesting
aspects of simulation relations is that they allow a verification by
``local'' reasoning.  Based on this, efficient algorithms for deciding
simulation preorders have been proposed
in~\cite{BloomP95,HenzingerHK95}.

Randomisation has been employed widely for performance and
dependability models, and consequently the study of verification
techniques of probabilistic systems with and without nondeterminism
has drawn a lot of attention in recent years.  A variety of
equivalence and preorder relations, including strong and weak
simulation relations, have been introduced and widely considered for
probabilistic models.  In this paper we consider discrete-time Markov
chains (DTMCs) and discrete-time probabilistic automata
(PAs)~\cite{Segala-thesis}. PAs extend labelled transition systems
(LTSs) with probabilistic selection, or, viewed differently, extend
DTMCs with nondeterminism.  They constitute a natural model of
concurrent computation involving random phenomena. In a PA, a labelled
transition leads to a probability distribution over the set of states,
rather than a single state. The resulting model thus exhibits both
non-deterministic choice (as in LTSs) and probabilistic choice (as in
Markov chains).

Strong simulation relations have been
introduced~\cite{JonssonL91,LarsenS91} for probabilistic systems.  For
$s \simrel s'$ ($s'$ strongly simulates $s$), it is required that
every successor distribution of $s$ via action $\alpha$ (called
$\alpha$-successor distribution) has a corresponding
$\alpha$-successor distribution at $s'$.  Correspondence of
distributions is naturally defined with the concept of weight
functions~\cite{JonssonL91}.  In the context of model checking, strong
simulation relations preserve safe PCTL formulas~\cite{SegalaL95}.
Probabilistic simulation~\cite{SegalaL95} is a relaxation of strong
simulation in the sense that it allows for convex combinations of
multiple distributions belonging to equally labelled transitions. More
concretely, it may happen that for an $\alpha$-successor distribution
$\mu$ of $s$, there is no $\alpha$-successor distribution of $s'$
which can be related to $\mu$, yet there exists a so-called
$\alpha$-\emph{combined transition}, a convex combination of several
$\alpha$-successor distributions of $s'$.  Probabilistic simulation
accounts for this and is thus coarser than strong simulation, but
still preserves the same class of PCTL-properties as strong simulation
does.

Apart from discrete time models, this paper considers continuous-time
Markov chains (CTMCs) and continuous-time probabilistic automata
(CPAs)~\cite{Knast69,WolovickJ06}. In CPAs, the transition delays are
governed by exponential distributions. CPAs can be considered also as
extensions of CTMCs with nondeterminism. CPAs are natural foundational
models for various performance and dependability modelling formalisms
including stochastic activity networks~\cite{SandersM91}, generalised
stochastic Petri nets~\cite{MarsanBCDF98} and interactive Markov
chains~\cite{Hermanns02}.  Strong simulation and probabilistic
simulation have been introduced for continuous-time
models~\cite{BKHW05,ZH07}. For CTMCs, $s\simrel s'$ requires that
$s\simrel s'$ holds in the embedded DTMC, and additionally, state $s'$
must be ``faster'' than $s$ which manifests itself by a higher exit
rate. Both strong simulation and probabilistic simulation preserve
safe CSL formulas~\cite{BaierHHK03}, which is a continuous stochastic
extension of PCTL, tailored to continuous-time models.

Weak simulation is proposed in~\cite{BKHW05} for Markov chains. In
weak simulation, the successor states are split into visible and
invisible parts, and the weight function conditions are only imposed
on the transitions leading to the visible parts of the successor
states.  Weak simulation is strictly coarser than the afore-mentioned
strong simulation for Markov chains, thus allows further reduction of
the state space. It preserves the safe PCTL- and CSL-properties
without the next state formulas for DTMCs and CTMCs
respectively~\cite{BKHW05}.

Decision algorithms for strong and weak simulations over Markov
chains, and for strong simulation over probabilistic automata are not
efficient, which makes it as yet unclear whether they can be used as
effectively as their non-probabilistic counterparts.  In this paper we
improve efficient decision algorithms, and devise new algorithms for
deciding strong and strong probabilistic simulations for probabilistic
automata.  Given the simulation preorder, the simulation quotient
automaton is in general smaller than the bisimulation quotient
automaton. Then, for safety and liveness properties, model checking
can be performed on this smaller quotient automata.  The study of
decision algorithms is also important for specification relations: The
model satisfies the specification if the automaton for the
specification simulates the automaton for the model.  In many
applications the specification cannot be easily expressed by logical
formulas: it is rather a probabilistic model itself.  Examples of this
kind include various recent wireless network protocols, such as
ZigBee~\cite{GrossHP07}, Firewire Zeroconf~\cite{BohnenkampSHV03}, or
the novel IEEE~802.11e, where the central mechanism is selecting among
different-sided dies, readily expressible as a probabilistic
automaton~\cite{MangoldZHW04}.

The common strategy used by decision algorithms for simulations is as
follows. The algorithm starts with a relation $R$ which is guaranteed
to be coarser than the simulation preorder $\simrel$. Then, the
relation $R$ is successively be refined.  In each iteration of the
refinement loop, pairs $(s,s')$ are eliminated from the relation $R$
if the corresponding simulation conditions are violated with respect
to the current relation. In the context of labelled transitions
systems, this happens if $s$ has a successor state $t$, but we cannot
find a successor state $t'$ of $s'$ such that $(t,t')$ is also in the
current relation $R$. For DTMCs, this correspondence is formulated by
the existence of a weight function for distributions
$(\Pmu{s},\Pmu{s'})$ with respect to the current relation $R$.
Checking this weight function condition amounts to checking whether
there is a maximum flow over the network constructed out of
$(\Pmu{s},\Pmu{s'})$ and the current relation $R$.  The complexity for
one such check is however rather expensive, it has time complexity
$\O(n^3/\log n)$.  If the iterative algorithm reaches a fix-point, the
strong simulation preorder is obtained. The number of iterations of
the refinement loop is at most $\O(n^2)$, and the overall
complexity~\cite{BEMC00} amounts to $\O(n^7/\log n)$ in time and
$\O(n^2)$ in space.

Fixing a pair $(s,s')$, we observe that the networks for this pair across
iterations of the refinement loop are very similar: They differ from
iteration to iteration only by deletion of some edges induced by the
successive clean up of $R$.  We exploit this by adapting a parametric
maximum flow algorithm~\cite{GalloGT89} to solve the maximum flow problems
for the arising sequences of similar networks, hence arriving at efficient
simulation decision algorithms.  The basic idea is that all computations
concerning the pair $(s,s')$ can be performed in an incremental way: after
each iteration we save the current network together with maximum flow
information.  Then, in the next iteration, we update the network, and
derive the maximum flow while using the previous maximum flow function. The
maximum flow problems for the arising sequences of similar networks with
respect to the pair $(s,s')$ can be computed in time $\O(|V|^3)$ where
$|V|$ is the number of nodes of the network. This leads to an overall time
complexity $\O(m^2n)$ for deciding the simulation preorder. Because of the
storage of the networks, the space complexity is increased to
$\O(m^2)$. Especially in the very common case where the state fanout of a
model is bounded by a constant $g$ (and hence $m\le g n$), our strong
simulation algorithm has time and space complexity $\O(n^2)$.  The
algorithm can be extended easily to handle CTMCs with same time and space
complexity.  For weak simulation on Markov chains, the parametric maximum
flow technique cannot be applied directly.  Nevertheless, we manage to
incorporate the parametric maximum flow idea into a decision algorithm with
time complexity $\O(m^2n^3)$ and space complexity $\O(n^2)$.  An earlier
algorithm~\cite{BHK04} uses LP problems~\cite{Karmarkar84,Schr86} as
subroutines. The maximum flow problem is a special instance of an LP
problem but can be solved much more efficiently~\cite{AMO93}.

We extend the algorithm to compute strong simulation preorder to also
work on PAs. It takes the skeleton of the algorithm for Markov chains:
It starts with a relation $R$ which is coarser than $\simrel$, and
then refines $R$ until $\simrel$ is achieved. In the refinement loop,
a pair $(s,s')$ is eliminated if the corresponding simulation
conditions are violated with respect to the current relation. For PAs,
this means that there exists an $\alpha$-successor distribution $\mu$
of $s$, such that for all $\alpha$-successor distribution $\mu'$ of
$s'$, we cannot find a weight function for $(\mu,\mu')$ with respect
to the current relation $R$. Again, as for Markov chains, the
existence of such weight functions can be reduced to maximum flow
problems. Combining with the parametric maximum flow
algorithm~\cite{GalloGT89}, we arrive at the same time complexity
$\O(m^2n)$ and space complexity $\O(m^2)$ as for Markov chains.
The above maximum flow based procedure cannot be applied to deal with
strong \emph{probabilistic} simulation for PAs. The reason is that an
$\alpha$-combined transition of state $s$ is a convex combination of
several $\alpha$-successor distributions of $s$, thus induces
uncountable many such possible combined transitions.  The
computational complexity of deciding strong probabilistic simulation
has not been investigated before. We show that it can be reduced to
solving LP problems. The idea is that we introduce for each
$\alpha$-successor distribution a variable, and then reformulates the
requirements concerning the combined transitions by linear constraints
over these variables. This allows us to construct a set of LP problem
such that whether a pair $(s,s')$ should be thrown out of the current
pair $R$ is equivalent to whether each of the LP problem has a
solution.

The algorithms for PAs are then extended to handle their
continuous-time analogon, CPAs. In the algorithm, for each pair
$(s,s')$ in the refinement loop, an additional rate condition is
ensured by an additional check via comparing the appropriate rates of
$s$ and $s'$.  The resulting algorithm has the same time and space
complexity.

\paragraph{Related Works}
In the non-probabilistic setting, the most efficient algorithms for
deciding simulation preorders have been proposed
in~\cite{BloomP95,HenzingerHK95}. The complexity is $\O(mn)$ where $n$ and
$m$ denote the number of states and transitions of the transition system
respectively. 
For Markov chains, Derisavi \emph{et al.}~\cite{DHS03} presented an
$\O(m\log n)$ algorithm for strong bisimulation.  Weak bisimulation for
DTMCs can be computed in $\O(n^3)$ time~\cite{BaierH97}.  For strong
simulation, Baier {\it et al.\@}~\cite{BEMC00} introduced a polynomial
decision algorithm with complexity $\O(n^7/\log n)$, by tailoring a network
flow algorithm~\cite{GoldbergT88} to the problem, embedded into an
iterative refinement loop. In \cite{BHK04}, Baier {\it et al.\@} proved
that weak simulation is decidable in polynomial time by reducing it to
linear programming (LP) problems.
%
For a subclass of PAs (reactive systems), Huynh and
Tian~\cite{HuynhT92} presented an $\O(m \log n)$ algorithm for
computing strong bisimulation.  Cattani and Segala~\cite{CattaniS02}
have presented decision algorithms for strong and \emph{bi}simulation
for PAs.  They reduced the decision problems to LP problems.  To
compute the coarsest strong simulation for PAs, Baier
\et~\cite{BEMC00} presented an algorithm which reduces the query
whether a state strongly simulates another to a maximum flow problem.
Their algorithm has complexity $\O((mn^6+m^2n^3)/\log n)$\footnote{The $m$
used in paper~\cite{BEMC00} is slightly different from the $m$ as we use
it. A detailed comparison is provided later, in
Remark~\ref{remark:baier_complexity} of Section~\ref{sec:pp}. }.  Recently,
algorithm for computing simulation and bisimulation metrics for concurrent
games~\cite{CdAMR08} has been studied.

\paragraph{Outline of The Paper} 
The paper proceeds by recalling the definition of the models and simulation
relations in Section~\ref{sec:pre}.  In Section~\ref{sec:flow} we give a
short interlude on maximum flow problems. In Section~\ref{sec:strong} we
present a combinatorial method to decide strong simulations.  In this
section we also introduce new decision algorithms for deciding strong
probabilistic simulations for PAs and CPAs.  In Section~\ref{sec:weak} we
focus on algorithms for weak simulations.  Section~\ref{sec:concl}
concludes the paper.

\section{Preliminaries}\label{sec:pre}
\noindent In Subsection~\ref{sec:mm}, we recall the definitions of fully
probabilistic systems, discrete- and continuous-time Markov
chains~\cite{Steward94}, and the nondeterministic extensions of these
discrete-time~\cite{SegalaL95} and continuous-time
models~\cite{Puter94,BaierHKH05}. In Subsection~\ref{sec:sim_intro} we
recall the definition of simulation relations.

\subsection{Markov Models} \label{sec:mm}
Firstly, we introduce some general notations. Let $X,Y$ be finite
sets. For $f:X\to \Real$, let $f(A)$ denote $\sum_{x\in A}f(x)$ for
all $A\subseteq X$.  For $f:X\times Y \to \Real$, we let $f(x,A)$
denote $\sum_{y\in A}f(x,y)$ for all $x\in X$ and $A\subseteq Y$, and
$f(A,y)$ is defined similarly. Let $AP$ be a fixed, finite set of
atomic propositions.

For a finite set $S$, a distribution $\mu$ over $S$ is a function
$\mu:S\to [0,1]$ satisfying the condition $\mu(S)\le 1$. The support
of $\mu$ is defined by $\support(\mu) = \{s \mid \mu(s) > 0\}$, and
the size of $\mu$ is defined by $\size{\mu} =\size{\support(\mu)}$.
The distribution $\mu$ is called stochastic if $\mu(S)=1$, absorbing
if $\mu(S)=0$. We sometimes use an auxiliary state (not a~\emph{real}
state) $\aux\not\in S$ and set $\mu(\aux)=1-\mu(S)$. If $\mu$ is not
stochastic we have $\mu(\aux)>0$. Further, let $S_\aux$ denote the set
$S\cup\{\aux\}$, and let $\support_\aux(\mu) = \support(\mu) \cup \{
\aux\}$ if $\mu(\aux)>0$ and $\support_\aux(\mu) = \support(\mu)$
otherwise.  We let $\distrs(S)$ denote the set of distributions over
the set $S$.

\begin{defi}
  A labelled fully probabilistic system (FPS) is a tuple $\dtmc$ where
  $S$ is a finite set of states, $\P:S\times S\to [0,1]$ is a
  probability matrix such that $\Pmu{s} \in \distrs(S)$ for
  all $s \in S$, and $L:S\to 2^{AP}$ is a labelling function.
\end{defi}

A state $s$ is called stochastic and absorbing if the distribution
$\Pmu{s}$ is stochastic and absorbing respectively. For $s\in S$, let
$\post(s) = \support(\Pmu{s})$, and let $\post_\aux(s) =
\support_\aux(\Pmu{s})$.

\begin{defi}
  A labelled discrete-time Markov chain (DTMC) is an FPS $\dtmc$ where
  $s$ is either absorbing or stochastic for all $s\in S$.
\end{defi}

FPSs and DTMCs are \emph{time-abstract}, since the duration between
triggering transitions is disregarded. We observe the state only at a
discrete set of time points $0,1,2,\ldots$.  We recall the definition of
CTMCs which are \emph{time-aware}:

\begin{defi}
A labelled continuous-time Markov chain (CTMC) is a
tuple $\ctmc$ with $S$ and $L$ as before, and  $\R:
S\times S\to \Real_{\ge 0}$ is a rate matrix. 
\end{defi}

For CTMC $\M$, let $\post(s)=\{s'\in S\mid \R(s,s')>0\}$ for all $s\in S$.
The rates give the average delay of the corresponding transitions.
Starting from state $s$, the probability that within time $t$ a successor
state is chosen is given by $1-e^{-\R(s,S)t}$.  The probability that a
specific successor state $s'$ is chosen within time $t$ is thus given by
$(1-e^{-\R(s,S)t})\cdot
\R(s,s')/\R(s,S)$. A CTMC induces an embedded DTMC, which
captures the time-abstract behaviour of it:

\begin{defi}
Let $\ctmc$ be a CTMC. The embedded DTMC of $\C$ is defined by
$emb(\C)=(S,\P,L)$ with $\P(s,s')=\R(s,s')/\R(s,S)$ if $\R(s,S)>0$ and $0$
otherwise.
\end{defi}
We will also use $\P$ for a CTMC directly, without referring
to its embedded DTMC explicitly. If one is interested in time-abstract
properties (e.\,g.\@, the probability to reach a set of states) of a
CTMC, it is sufficient to analyse its embedded DTMC.

For a given FPS, DTMC or CTMC, its \emph{fanout} is defined by $\max_{s\in
S} \size{\post(s)}$. The number of states is defined by $n=\size{S}$, and
the number of transitions is defined by $m=\sum_{s\in
S}\size{\post(s)}$. For $s\in S$, $reach(s)$ denotes the set of states that
are reachable from $s$ with positive probability.  For a relation $R
\subseteq S\times S$ and $s\in S$, let $R(s)$ denote the set $\{s'\in
S\mid (s,s')\in R\}$. Similarly, for $s'\in S$, let $R^{-1}(s')$ denote the set $\{s\in S\mid
(s,s')\in R\}$. If $(s,s')\in R$, we write also $s\mathrel{R}s'$.

Markov chains are purely probabilistic.  Now we consider extensions of
Markov chains with nondeterminism.  We first recall the definition of
probabilistic automata, which can be considered as the \emph{simple
  probabilistic automata} with transitions allowing deadlocks
in~\cite{Segala-thesis}.

\begin{defi}
  A probabilistic automaton (PA) is a tuple $\pa$ where
  $S$ and $L$ are defined as before,
  $Act$ is a finite set of actions,
  $\P\subseteq S\times Act\times \distrs(S)$ is a finite set,
  called the probabilistic transition matrix.
\end{defi}

For $(s,\alpha,\mu)\in\P$, we use $s\transby{\alpha}\mu$ as a
shorthand notation, and call $\mu$ an $\alpha$-successor distribution
of $s$. Let $Act(s)=\{\alpha \mid \exists \mu: s\transby{\alpha}
\mu\}$ denote the set of actions enabled at $s$. For $s\in S$ and
$\alpha\in Act(s)$, let $Steps_\alpha(s)=\{\mu\in \distrs(S)\mid
s\transby{\alpha}\mu\}$ and $\steps(s)=\bigcup_{\alpha\in
Act(s)}Steps_\alpha(s)$. The fanout of a state $s$ is defined by
$\fanout(s) =
\sum_{\alpha \in Act(s)}\sum_{\mu\in Steps_\alpha(s)} (\size{\mu}+1)$.
Intuitively, $\fanout(s)$ denotes the total sum of the sizes of outgoing
distributions of state $s$ plus their labelling. The fanout of $\M$ is
defined by $\max_{s\in S}\fanout(s)$. Summing up over all states, we define
the size of the transitions by $m=\sum_{s \in S}
\fanout(s)$.

A \emph{Markov decision process} (MDP)~\cite{Puter94} arises from a
PA $\M$ if for $s\in S$ and $\alpha\in Act$, there is at most
one $\alpha$-successor distribution $\mu$ of $s$, which must be
stochastic.

We consider a continuous-time counterpart of PAs where the
transitions are described by rates instead of probabilities. A rate
function is simply a function $r: S \to \Real_{\ge 0}$. Let
$\size{r}=\size{\{s \mid r(s)>0\}}$ denote the size of $r$. Let $\rate(S)$
denote the set of all rate functions.
\begin{defi}
  A continuous-time PA (CPA) is a tuple $(S, Act, \R, L)$ where $S$,
  $Act$, $L$ as defined for PAs, and $\R \subseteq S\times Act\times
  \rate(S)$ a finite set, called the rate matrix.
\end{defi}
We write $s \transby{\alpha} r$ if $(s,\alpha,r) \in \R$, and call $r$ an
$\alpha$-successor rate function of $s$. For transition $s \transby{\alpha}
r$, the sum $r(S)$ is also called the exit rate of it. Given that the
transition $s\transby{\alpha} r$ is chosen from state $s$, the probability
that any successor state is chosen within time $t$ is given by
$1-e^{-r(S)t}$, and a specific successor state $s'$ is chosen within time
$t$ is given by $(1-e^{-r(S)t})\cdot \frac{r(s')}{r(S)}$.  The notion of
$Act(s)$, $Steps_\alpha(s)$, $\steps(s)$, fanout and size of transitions
for PAs can be extended to CPAs by replacing occurrence of distribution
$\mu$ by rate function $r$ in an obvious way.

The model continuous-time Markov decision processes
(CTMDPs)~\cite{Puter94,BaierHKH05} can be considered as special CPAs where
for $s\in S$ and $\alpha \in Act$, there exists at most one rate
function $r\in \rate(S)$ such that $s \transby{\alpha} r$. The model
CTMDPs considered in paper~\cite{WolovickJ06} essentially agree with
our CPAs.

\subsection{Strong and Weak Simulation Relations}
\label{sec:sim_intro}
We first recall the notion of strong simulation on Markov
chains~\cite{BKHW05}, PAs~\cite{SegalaL95}, and CPAs~\cite{ZH07}.  Strong
probabilistic simulation is defined in Subsection~\ref{sec:spsdef}.  Weak
simulation for Markov chains will be given in Subsection~\ref{sec:ws}. The
notion of simulation up to $R$ is introduced in
Subsection~\ref{subs:simulation_upto_R}.

 \subsubsection{Strong Simulation}
\label{sec:ss}
Strong simulation requires that each successor distribution of one
state has a corresponding successor distribution of the other
state. The correspondence of distributions is naturally defined with
the concept of \emph{weight functions}~\cite{JonssonL91}, adapted to FPSs
as in~\cite{BKHW05}. 
For a relation $R
\subseteq S\times S$, we let $R_\aux$ denote the set $R \cup \{(\aux,s)\mid s \in S_\aux\}$.

\begin{defi}\label{def:weight}
  Let $\mu,\mu'\in\dist(S)$ and $R\subseteq S\times S$. A weight function
  for $(\mu, \mu')$ \wrt $R$ is a function $\Delta:S_\aux\times
  S_\aux\rightarrow[0,1]$ such that 
\begin{enumerate}[(1)] 
  \item
  $\Delta(s,s')>0$ implies $s\mathrel{R_\aux}s'$, 
  \item \label{def:weight_relate_mu_Delta} $\mu(s)=\Delta(s,S_\aux)$ for $s\in
  S_\aux$ and 
  \item \label{def:weight_relate_muprime_Delta} $\mu'(s')=\Delta(S_\aux,s')$ for $s'\in S_\aux$.  
\end{enumerate} 

We write $\mu\weight_R\mu'$ if there exists a weight function for
$(\mu,\mu')$ \wrt $R$.
\end{defi}

The first condition requires that only pairs $(s,s')$ in the relation
$R_\aux$ have a positive weight. In other words, for $s,s'\in S$ with $s'
\not \in R_\aux(s)$, it holds that $\Delta(s,s')=0$. Strong simulation requires
similar states to be related via weight functions on their
distributions~\cite{JonssonL91}.

\begin{defi}
\label{def:strongsimulation-fps}
Let $\dtmc$ be an FPS, and let $R\subseteq S\times S$.  The relation $R$ is
a strong simulation on $\D$ iff for all $s_1,s_2$ with $s_1\mathrel{R}s_2$:
$L(s_1)=L(s_2)$ and $\Pmu{s_1}
\weight_R \Pmu{s_2}$.

We say that $s_2$ strongly
simulates $s_1$ in $\D$, denoted by $s_1\simrel_\M s_2$, iff there
exists a strong simulation $R$ on $\D$ such that $s_1\mathrel{R}s_2$.
\end{defi}

By definition, it can be shown~\cite{BKHW05} that $\simrel_\M$ is reflexive
and transitive, thus a
\emph{preorder}. Moreover, $\simrel_\M$ is the coarsest strong simulation relation for
$\M$. If the model $\M$ is clear from the context, the subscript $\M$ may
be omitted. Assume that $s_1\simrel s_2$ and let $\Delta$ denote the
corresponding weight function. If $\P(s_2,\bot)>0$, we have that
$\P(s_2,\bot)=\sum_{s\in S_\bot}\Delta(s,\bot)=\Delta(\aux,\aux)$. The
second equality follows by the fact that $\bot$ can not strongly simulate
any real state in $S$. Another observation is that if $s$ is absorbing,
then it can be strongly simulated by any other state $s'$ with
$L(s)=L(s')$.

\begin{figure}[tbp]
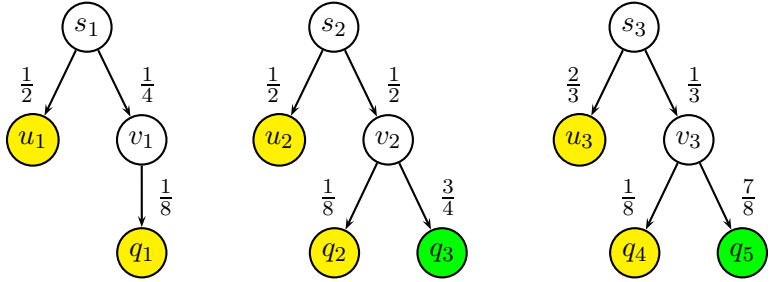

\begin{center}
\psset{levelsep=1.5cm,arrows=->}
\pstree{\Tcircle{$s_1$}}{%
  \Tcircle[fillstyle=solid,fillcolor=yellow]{$u_1$} \tlput{$\frac{1}{2}$}
  \pstree{\Tcircle{$v_1$} \trput{$\frac{1}{4}$} }{%
  \Tcircle[fillstyle=solid,fillcolor=yellow]{$q_1$} \trput{$\frac{1}{8}$} }
  }\hspace{1cm}
\pstree{\Tcircle{$s_2$}}{%
  \Tcircle[fillstyle=solid,fillcolor=yellow]{$u_2$}  \tlput{$\frac{1}{2}$}
  \pstree{\Tcircle{$v_2$} \trput{$\frac{1}{2}$} }{%
    \Tcircle[fillstyle=solid,fillcolor=yellow]{$q_2$} \tlput{$\frac{1}{8}$}
    \Tcircle[fillstyle=solid,fillcolor=BlueGreen]{$q_3$} \trput{$\frac{3}{4}$}
  }
}\hspace{1cm}
\pstree{\Tcircle{$s_3$}}{%
  \Tcircle[fillstyle=solid,fillcolor=yellow]{$u_3$}  \tlput{$\frac{2}{3}$}
  \pstree{\Tcircle{$v_3$} \trput{$\frac{1}{3}$} }{%
    \Tcircle[fillstyle=solid,fillcolor=yellow]{$q_4$} \tlput{$\frac{1}{8}$}
    \Tcircle[fillstyle=solid,fillcolor=BlueGreen]{$q_5$} \trput{$\frac{7}{8}$}
  }
}
\end{center}
\caption{An FPS for illustrating the simulation relations\protect\footnotemark.}
\label{fig:strongsim_fps}
\end{figure}
\footnotetext{Although
                this graph is not connected, it shows a single FPS.
                Similarly, later figures will show a single DTMC, CTMC
                etc.}
\begin{exa}
\label{exa:ss_fps}
Consider the FPS depicted in Figure~\ref{fig:strongsim_fps}. Recall that
labelling of states is indicated by colours in the states.  Since the
yellow (grey) states are absorbing, they strongly simulate each other. The
same holds for the green (dark grey) states.  We show now that $s_1\simrel
s_2$ but $s_2
\not\simrel s_3$.

Consider first the pair $(s_1,s_2)$. Let
$R=\{(s_1,s_2),(u_1,u_2),(v_1,v_2), (q_1,q_2)\}$. We show that $R$ is a
strong simulation relation. First observe that $L(s)=L(s')$ for all
$(s,s')\in R$. Since states $u_1,q_1$ are absorbing, the conditions for the
pairs $(u_1,u_2)$ and $(q_1,q_2)$ hold trivially.  To show the conditions
for $(v_1, v_2)$, we consider the function $\Delta_1$ defined by:
$\Delta_1(q_1,q_2)=\frac{1}{8}$, $\Delta_1(\aux,q_3)=\frac{3}{4}$,
$\Delta_1(\aux,\aux)=\frac{1}{8}$ and $\Delta_1(\cdot)=0$ otherwise. It is
easy to check that $\Delta_1$ is a weight function for
$(\Pmu{v_1},\Pmu{v_2})$ with respect to $R$.  Now consider $(s_1,s_2)$.
The weight function $\Delta_2$ for $(\Pmu{s_1},\Pmu{s_2})$ with respect to
$R$ is given by $\Delta_2(u_1,u_2)=\frac{1}{2}$ and
$\Delta_2(v_1,v_2)=\Delta_2(\aux,v_2)=\frac{1}{4}$ and $\Delta_2(\cdot)=0$
otherwise. Thus $R$ is a strong simulation which implies that $s_1 \simrel
s_2$.

Consider the pair $(s_2,s_3)$.  Since $\P(s_2,v_2)=\frac{1}{2}$, to
establish the Condition~\ref{def:weight_relate_mu_Delta} of
Definition~\ref{def:weight}, we should have
$\frac{1}{2}=\Delta(v_2,S_\aux)$. Observe that $v_3$ is the only successor
state of $s_3$ which can strongly simulate $v_2$, thus
$\Delta(v_2,S_\aux)= \Delta(v_2,v_3)$. However, for state $v_3$
we have $\P(s_3,v_3)<\Delta(v_2,v_3)$, which violates the
Condition~\ref{def:weight_relate_muprime_Delta} of
Definition~\ref{def:weight}, thus we cannot find such a weight
function. Hence, $s_2\not\simrel s_3$.
\end{exa}

Since each DTMC is a special case of an FPS,
Definition~\ref{def:strongsimulation-fps} applies directly for DTMCs.  For
CTMCs we say that $s_2$ strongly simulates $s_1$ if, in addition to the
DTMC conditions, $s_2$ can move stochastically
\emph{faster} than $s_1$~\cite{BKHW05}, which manifests itself by a
higher rate.

\begin{defi}
\label{def:ctmc}\label{def:strongsimulation-ctmc}
Let $\ctmc$ be a CTMC and let $R\subseteq S\times S$. The relation $R$ is a
strong simulation on $\C$ iff for all $s_1,s_2$ with $s_1\mathrel{R}s_2$:
$L(s_1)=L(s_2)$, $\Pmu{s_1}
\weight_R \Pmu{s_2}$ and $\R(s_1,S)\le \R(s_2,S)$.

We say that $s_2$ strongly simulates $s_1$ in $\C$, denoted by
$s_1\simrel_\M s_2$, iff there exists a strong simulation $R$ on $\C$
such that $s_1\mathrel{R}s_2$.
\end{defi}

Thus, $s\simrel_\M s'$ holds if $s\simrel_{emb(\M)} s'$, and $s'$ is faster
than $s$. By definition, it can be shown that $\simrel_\M$ is a preorder,
and is the coarsest strong simulation relation for $\M$.  For PAs, strong
simulation requires that every $\alpha$-successor distribution of $s_1$ is
related to an $\alpha$-successor distribution of $s_2$ via a weight
function~\cite{SegalaL95,JonssonL91}:
\begin{defi}
\label{def:simple_ss}
Let $\pa$ be a PA and let $R \subseteq S \times S$. The relation $R$ is a
strong simulation on $\mathcal{M}$ iff for all $s_1,s_2$ with $s_1
\mathrel{R} s_2$: $L(s_1)=L(s_2)$ and if $s_1\transby{\alpha}\mu_1$ then
there exists a transition $s_2\transby{\alpha}\mu_2$ with $\mu_1 \weight_R
\mu_2$.

We say that $s_2$ strongly simulates $s_1$ in $\M$, denoted $s_1
\simrel_\M s_2$, iff there exists a strong simulation $R$ on $\M$
such that $s_1 \mathrel{R} s_2$.
\end{defi}

\begin{figure}[tbp]
  \begin{center}
    \psset{levelsep=1.5cm,arrows=->,nodesep=0pt,labelsep=0pt,npos=0.3}
    \begin{pspicture}(0,-0.2)(4.1, 3)
      \rput(1.75,2.5){\circlenode{s1}{$s_1$}}
      \rput(-0.3,0){\circlenode[fillstyle=solid,fillcolor=yellow]{s2}{$u_1$}}
      \rput(0.5,0){\circlenode[fillstyle=solid,fillcolor=BlueGreen]{s3}{$v_1$}}
      \rput(3,0){\circlenode[fillstyle=solid,fillcolor=yellow]{s6}{$u_2$}}
      \rput(3.8,0){\circlenode[fillstyle=solid,fillcolor=BlueGreen]{s7}{$v_2$}}
      \ncline{s1}{s2} \nbput[npos=0.7]{.4}\nbput{$\alpha$}
      \ncline{s1}{s3} \naput[npos=0.7]{.6} 
      \ncline{s1}{s6} \naput[npos=0.7]{.6}
      \ncline{s1}{s7} \naput[npos=0.7]{.4} \naput{$\alpha$}
      \psarc{-}(1.75,2){.6}{200}{223} 
      \psarc{-}(1.75,2){.6}{317}{340} 
    \end{pspicture}\hspace{2.5cm}
    \begin{pspicture}(0,-0.2)(4.1, 3)
      \rput(1.75,2.5){\circlenode{s1}{$s_2$}}
      \rput(-0.3,0){\circlenode[fillstyle=solid,fillcolor=yellow]{s2}{$u_3$}}
      \rput(0.5,0){\circlenode[fillstyle=solid,fillcolor=BlueGreen]{s3}{$v_3$}}
      \rput(1.4,0){\circlenode[fillstyle=solid,fillcolor=yellow]{s4}{$u_4$}}
      \rput(2.1,0){\circlenode[fillstyle=solid,fillcolor=BlueGreen]{s5}{$v_4$}}
      \rput(3,0){\circlenode[fillstyle=solid,fillcolor=yellow]{s6}{$u_5$}}
      \rput(3.8,0){\circlenode[fillstyle=solid,fillcolor=BlueGreen]{s7}{$v_5$}}
      \ncline{s1}{s2} \nbput[npos=0.7]{.4} \nbput{$\alpha$}
      \ncline{s1}{s3} \nbput[npos=0.7]{.6} 
      \ncline{s1}{s4} \nbput[npos=0.7]{.5}
      \ncline{s1}{s5} \naput[npos=0.7]{.5} \naput{$\alpha$}
      \ncline{s1}{s6} \naput[npos=0.7]{.6}
      \ncline{s1}{s7} \naput[npos=0.7]{.4} \naput{$\alpha$}
      \psarc{-}(1.75,2){.6}{200}{223} 
      \psarc{-}(1.75,2){.6}{317}{340} 
      \psarc{-}(1.75,2){.5}{256}{284} 
    \end{pspicture}
  \end{center}
\caption{A PA for illustrating the simulation relations.}
\label{fig:pa_one}
\end{figure}
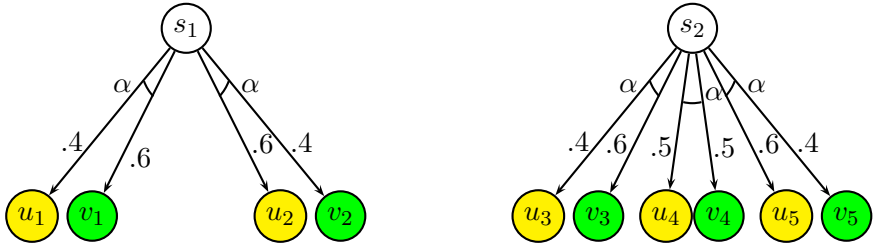

\begin{exa}\label{fig:pa_strongsimulation}
  Consider the PA in Figure~\ref{fig:pa_one}. Then, it is easy to check
  $s_1 \simrel s_2$: each $\alpha$-successor distribution of $s_1$ has a
  corresponding $\alpha$-successor distribution of $s_2$. However, $s_1$
  does not strongly simulate $s_2$, as the middle $\alpha$-successor
  distribution of $s_2$ can not be related by any $\alpha$-successor
  distribution of $s_1$.
\end{exa}

Now we consider CPAs. For a rate function $r$, we let $\mu(r) \in
\distrs(S)$ denote the induced distribution defined by: if $r(S)>0$
then $\mu(r)(s)$ equals $r(s)/r(S)$ for all $s\in S$, and if $r(S)=0$,
then $\mu(r)(s)=0$ for all $s\in S$. Now we introduce the notion of
strong simulation for CPAs~\cite{ZH07}, which can be considered as an
extension of the definition for CTMCs~\cite{BKHW05}:

\begin{defi}
Let $\cpa$ be a CPA and let $R \subseteq S \times S$. The relation $R$ is a
strong simulation on $\mathcal{M}$ iff for all $s_1,s_2$ with $s_1
\mathrel{R} s_2$: $L(s_1)=L(s_2)$ and if $s_1\transby{\alpha} r_1$
then there exists a transition $s_2\transby{\alpha} r_2$ with
$\mu(r_1) \weight_R \mu(r_2)$ and $r_1(S) \le r_2(S)$.

We write $s_1 \simrel_\M s_2$ iff there exists a strong simulation
$R$ on $\M$ such that $s_1 \mathrel{R} s_2$.
\end{defi}

Similar to CTMCs, the additional rate condition $r_1(S) \le r_2(S)$
indicates that the transition $s_2 \transby{\alpha} r_2$ is faster than
$s_1 \transby{\alpha} r_1$.  As a shorthand notation, we use $r_1\weight_R
r_2$ for the condition $\mu(r_1)\weight_R \mu(r_2)$ and $r_1(S)\le
r_2(S)$. For both PAs and CPAs, $\simrel_\M$ is the coarsest strong
simulation relation.

\subsubsection{Strong Probabilistic Simulations}\label{sec:spsdef}

We recall the definition of strong probabilistic simulation, which is
coarser than strong simulation, but still preserves the same class of
PCTL-properties as strong simulation does.  We first recall  the notion
of combined transition~\cite{Segala-thesis}, a convex combination of several
equally labelled transitions:

\begin{defi}
\label{def:combined}
  Let $\pa$ be a PA. Let $s\in S$, $\alpha\in Act(s)$ and $k=
  \size{\steps_\alpha(s)}$. Assume that $\steps_\alpha(s) = \{ \mu_1,
  \ldots, \mu_k \}$.  The tuple $(s,\alpha,\mu)$ is a combined transition,
  denoted by $s \combinedby{\alpha} \mu$, iff there exist constants $c_1,
  \ldots, c_k \in [0,1]$ with $\sum_{i=1}^k c_i=1$ such that $\mu =
  \sum_{i=1}^k c_i \mu_i$.
\end{defi}

The key difference to Definition~\ref{def:simple_ss} is
the use of $\combinedby{\alpha}$ instead of $\transby{\alpha}$:

\begin{defi}
\label{def:combined_ss}
Let $\pa$ be a PA and let $R \subseteq S \times S$. The relation $R$ is a
strong probabilistic simulation on $\mathcal{M}$ iff for all $s_1,s_2$ with
$s_1 \mathrel{R} s_2$: $L(s_1) = L(s_2)$ and if $s_1\transby{\alpha} \mu_1$
then there exists a combined transition $s_2 \combinedby{\alpha} \mu_2$
with $\mu_1 \weight_R \mu_2$.

We write $s_1 \simrel^p_\M s_2$ iff there exists a strong
probabilistic simulation $R$ on $\M$ such that $s_1 \mathrel{R} s_2$.
\end{defi}

Strong probabilistic simulation is insensitive to combined
transitions\footnote{The combined transition defined
in~\cite{Segala-thesis} is more general in two dimensions: First, successor
distributions are allowed to combine different actions. Second,
$\sum_{i=1}^kc_i\le 1$ is possible. The induced strong probabilistic
probabilistic simulation preorder is, however, the same. }, thus, it is a
relaxation of strong simulation. Similar to strong simulation,
$\simrel_\M^p$ is the coarsest strong probabilistic simulation relation for
$\M$.  Since MDPs can be considered as special PAs, we obtain the notions
of strong simulation and strong probabilistic simulation for
MDPs. Moreover, these two relations coincide for MDPs as, by definition,
for each state there is at most one successor distribution per action.

\begin{exa}
  We consider again the PA depicted in Figure~\ref{fig:pa_one}. From
  Example~\ref{fig:pa_strongsimulation} we know that $s_2\not\simrel
  s_1$. In comparison to state $s_1$, state $s_2$ has one additional
  $\alpha$-successor distribution: to states $u_4$ and $v_4$ with
  equal probability $0.5$. This successor distribution can be
  considered as a combined transition of the two successor
  distributions of $s_1$: each with constant $0.5$. Hence, we have
  $s_2 \simrel^p s_1$.
\end{exa}

We extend the notion of strong probabilistic simulation for PAs to
CPAs. First, we introduce the notion of combined transitions for CPAs. In
CPAs the probability that a transition occurs is exponentially
distributed. The combined transition should also be exponentially
distributed. The following example shows that a straightforward extension
of Definition~\ref{def:combined} does not work.

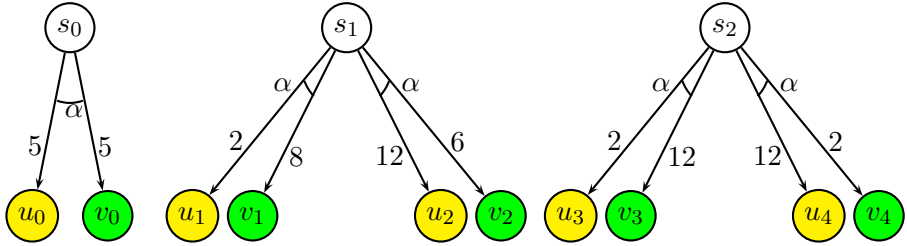
\begin{figure}[tbp]  
  \begin{center}
    \psset{levelsep=1.5cm,arrows=->,nodesep=0pt,labelsep=0pt,npos=0.3}
    \begin{pspicture}(0,0)(1, 2.5)
      \rput(0.5,2.5){\circlenode{s1}{$s_0$}}
      \rput(0,0){\circlenode[fillstyle=solid,fillcolor=yellow]{u1}{$u_0$}}
      \rput(1,0){\circlenode[fillstyle=solid,fillcolor=BlueGreen]{v1}{$v_0$}}
      \ncline{s1}{u1} \nbput[npos=0.7]{5} \naput[npos=0.4,labelsep=3pt]{$\alpha$}
      \ncline{s1}{v1} \naput[npos=0.7]{5}
      \psarc{-}(0.5,2){.5}{250}{290}
    \end{pspicture}\hspace{1cm}
    \begin{pspicture}(-0.3,0)(4.1, 2.7)
      \rput(1.75,2.5){\circlenode{s1}{$s_1$}}
      \rput(-0.3,0){\circlenode[fillstyle=solid,fillcolor=yellow]{s2}{$u_1$}}
      \rput(0.5,0){\circlenode[fillstyle=solid,fillcolor=BlueGreen]{s3}{$v_1$}}
      \rput(3,0){\circlenode[fillstyle=solid,fillcolor=yellow]{s6}{$u_2$}}
      \rput(3.8,0){\circlenode[fillstyle=solid,fillcolor=BlueGreen]{s7}{$v_2$}}
      \ncline{s1}{s2} \nbput[npos=0.7]{2}\nbput{$\alpha$}
      \ncline{s1}{s3} \naput[npos=0.7]{8} 
      \ncline{s1}{s6} \nbput[npos=0.7]{12}
      \ncline{s1}{s7} \naput[npos=0.7]{6} \naput{$\alpha$}
      \psarc{-}(1.75,2){.6}{200}{223} 
      \psarc{-}(1.75,2){.6}{317}{340} 
    \end{pspicture}\hspace{0.5cm}
    \begin{pspicture}(-0.3,0)(4.1, 2.7)
      \rput(1.75,2.5){\circlenode{s1}{$s_2$}}
      \rput(-0.3,0){\circlenode[fillstyle=solid,fillcolor=yellow]{s2}{$u_3$}}
      \rput(0.5,0){\circlenode[fillstyle=solid,fillcolor=BlueGreen]{s3}{$v_3$}}
      \rput(3,0){\circlenode[fillstyle=solid,fillcolor=yellow]{s6}{$u_4$}}
      \rput(3.8,0){\circlenode[fillstyle=solid,fillcolor=BlueGreen]{s7}{$v_4$}}
      \ncline{s1}{s2} \nbput[npos=0.7]{2}\nbput{$\alpha$}
      \ncline{s1}{s3} \naput[npos=0.7]{12} 
      \ncline{s1}{s6} \nbput[npos=0.7]{12}
      \ncline{s1}{s7} \naput[npos=0.7]{2} \naput{$\alpha$}
      \psarc{-}(1.75,2){.6}{200}{223} 
      \psarc{-}(1.75,2){.6}{317}{340} 
    \end{pspicture}
  \end{center}
  \caption{A Continuous-time Probabilistic Automaton.}
  \label{fig:simple_cpa}
\end{figure}

\begin{exa}\label{exa:cpa}
  For this purpose we consider the CPA in Figure~\ref{fig:simple_cpa}. Let
  $r_1$ and $r_2$ denote left and the right $\alpha$-successor rate
  functions out of state $s_1$. Obviously, they have different exit rates:
  $r_1(S)= 10$, $r_2(S) = 18$. Taking each with probability $0.5$, we would
  get the combined transition $r=0.5r_1+0.5r_2$: $r(\{ u_1, u_2 \})=7$ and
  $r(\{v_1, v_2\})=7$. However, $r$ is hyper-exponentially distributed: the
  probability of reaching yellow (grey) states ($u_1$ or $u_2$) within time
  $t$ under $r$ is given by: $ 0.5 \cdot \frac{2}{10} \cdot (1-e^{-10t}) +
  0.5 \cdot \frac{12}{18} \cdot (1-e^{-18t}) $.  Similarly, the probability
  of reaching green (dark grey) states within time $t$ is given by: $0.5
  \cdot \frac{8}{10} \cdot (1-e^{-10t}) + 0.5 \cdot
\frac{6}{18} \cdot (1-e^{-18t}) $. 

From state $s_2$, the two $\alpha$-successor rate functions have the same
exit rate $14$. Let $r_1'$ and $r_2'$ denote left and the right
$\alpha$-successor rate functions out of state $s_2$. In this case the
combined transition $r'=0.5r_1'+0.5r_2'$ is also exponentially distributed
with rate $14$: the probability to reach yellow (grey) states ($u_3$ and
$u_4$) within time $t$ is $\frac{7}{14}
\cdot (1-e^{-14t})$, which is the same as the probability of reaching green
(dark grey) states within time $t$.
\end{exa}

Based on the above example, it is easy to see that to get a combined
transition which is still exponentially distributed, we must consider rate
functions with the same exit rate:

\begin{defi}\label{def:combined_cpa}
  Let $\cpa$ be a CPA. Let $s\in S$, $\alpha\in Act(s)$ and let $ \{ r_1,
  \ldots, r_k \} \subseteq \steps_\alpha(s)$ where $r_i(S) = r_j(S)$ for
  $i,j\in \{1,\ldots, k\}$. The tuple $(s,\alpha,r)$ is a combined
  transition, denoted by $s \combinedby{\alpha} r$, iff there exist
  constants $c_1, \ldots, c_k \in [0,1]$ with $\sum_{i=1}^k c_i=1$ such
  that $r = \sum_{i=1}^k c_i r_i$.
\end{defi}

In the above definition, unlike for the PA case, only $\alpha$-successor
rate functions with the same exit rate are combined together. Similar to
PAs, strong probabilistic simulation is insensitive to combined
transitions, which is thus a relaxation of strong simulation:

\begin{defi}\label{def:cpa_sps}
Let $\cpa$ be a CPA and let $R \subseteq S \times S$. The relation $R$ is a
strong probabilistic simulation on $\M$ iff for all $s_1,s_2$ with $s_1
\mathrel{R} s_2$: $L(s_1) = L(s_2)$ and if $s_1\transby{\alpha} r_1$ then
there exists a combined transition $s_2
\combinedby{\alpha} r_2$ with $r_1 \weight_R r_2$.

We write $s_1 \simrel_\M^p s_2$ iff there exists a strong simulation
$R$ on $\M$ such that $s_1 \mathrel{R} s_2$.
\end{defi}

Recall $r_1\weight_R r_2$ is a shorthand notation for $\mu(r_1)
\weight_R \mu(r_2)$ and $r_1(S) \le r_2(S)$. By definition, the defined
strong probabilistic simulation $\simrel_\M^p$ is the coarsest strong
probabilistic simulation relation for $\M$.

\begin{exa}
  Reconsider the CPA in Figure~\ref{fig:simple_cpa}. As discussed in
  Example~\ref{exa:cpa}, the two $\alpha$-successor rate functions of $s_1$
  cannot be combined together, thus the relation $s_0 \simrel^p s_1$ cannot
  be established. However, $s_0 \simrel^p s_2$ holds: denoting the left rate
  function of $s_2$ as $r_1$ and the right rate function as $r_2$, we
  choose as the combined rate function $r=0.5r_1+0.5r_2$.  Obviously, the
  conditions in Definition~\ref{def:cpa_sps} are satisfied.
\end{exa}

\subsubsection{Weak Simulations}
\label{sec:ws}
We now recall the notion of weak simulation~\cite{BKHW05} on Markov
chains\footnote{In~\cite{ZHEJ07}, we have also considered decision
algorithm for weak simulation for FPSs, which is defined
in~\cite{BKHW05}. However, as indicated in~\cite{thesis}, the proposed weak
simulation for FPSs contains a subtle flaw, which cannot be fixed in an
obvious way. Thus, in this paper we restrict to weak simulation on DTMCs
and CTMCs.}.  Intuitively, $s_2$ weakly simulates $s_1$ if they have the
same labelling, and if their successor states can be grouped into sets
$U_i$ and $V_i$ for $i=1,2$, satisfying certain conditions. Consider
Figure~\ref{fig:split}. We can view steps to $V_i$ as \emph{stutter} steps
while steps to $U_i$ are
\emph{visible} steps. With respect to the visible steps, it is then
required that there exists a weight function for the conditional
distributions: ${\Pmu{s_1}}/{K_1}$ and ${\Pmu{s_2}}/{K_2}$ where
$K_i$ intuitively is the probability to perform a visible step from
$s_i$. The stutter steps must respect the weak simulation relations: thus states
in $V_2$ should weakly simulate $s_1$, and state $s_2$ should weakly
simulate states in $V_1$. This is depicted by dashed arrows in the figure.
For reasons we will explain later in Example~\ref{exa:split_dtmc}, the
definition needs to account for states which partially belong to $U_i$ and
partially to $V_i$.  Technically, this is achieved by functions $\delta_i$
that distribute $s_i$ over $U_i$ and $V_i$ in the definition
below.  For a given pair $(s_1,s_2)$ and  functions
$\delta_i:S\rightarrow[0,1]$, let
$U_{\delta_i},V_{\delta_i}\subseteq S$ (for $i=1,2$) denote the sets
\begin{equation}
U_{\delta_i}=\{u\in \post(s_i)\mid \delta_i(u)>0\},\ 
V_{\delta_i}= \{v\in \post(s_i)\mid \delta_i(v)<1\}
\label{eq:weaksimUV}
\end{equation}
If $(s_1,s_2)$ and $\delta_i$ are clear from the context, we write $U_i$,
$V_i$ instead.

\begin{figure}[tbp]  
\includegraphics[scale=.4]{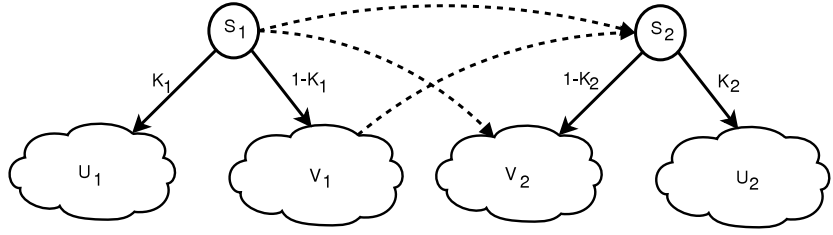}
  \caption{Splitting of successor states for weak simulations.}
  \label{fig:split}
\end{figure}

\begin{defi}  {\label{def:dtmc_weak_simulation}} 
Let $\dtmc$ be a DTMC and let $R\subseteq S\times S$. The relation $R$ is a
weak simulation on $\D$ iff for all $s_1,s_2$ with $s_1\mathrel{R}s_2$:
$L(s_1)=L(s_2)$ and there exist functions $\delta_i:S\rightarrow[0,1]$ such
that:
\begin{enumerate}[(1)]
\item   \label{ws:stuttersimulates}
        (a) $v_1\mathrel{R}s_2$ for all $v_1\in V_1$,
  and (b) $s_1\mathrel{R}v_2$ for all $v_2\in
V_2$
\item   \label{ws:weightfunction}
        there exists a function $\Delta:S\times S\rightarrow [0,1]$
  such that:
  \begin{enumerate}[(a)]
  \item \label{ws:wfgt0} $\Delta(u_1,u_2)>0$ implies $u_1\in U_1,u_2\in U_2$ and
    $u_1\mathrel{R}u_2$.
  \item \label{ws:wfsums} if $K_1>0$ and $K_2>0$ then for all states $w\in S$:
    \[K_1\cdot \Delta(w,U_2) = \P(s_1,w)\delta_1(w),\ 
    K_2\cdot \Delta(U_1,w) =
    \P(s_2,w)\delta_2(w)\] 
    where $K_i=\sum_{u_i\in U_i}\delta_i(u_i)\cdot\P(s_i,u_i)$ for
    $i=1,2$.
  \end{enumerate}
\item   \label{ws:reachability}
        for $u_1\in U_1$ there exists a path fragment
  $s_2,w_1,\ldots, w_n,u_2$ with positive probability such that $n\ge
  0$, $s_1\mathrel{R}w_j$ for $0<j\le n$, and $u_1\mathrel{R}u_2$.
\end{enumerate}
We say that $s_2$ weakly simulates $s_1$ in $\D$, denoted
$s_1\wsrel_\D s_2$, iff there exists a weak simulation
$R$ on $\D$ such that $s_1\mathrel{R}s_2$.
\end{defi}
Note again that the sets $U_i,V_i$ in the above definition are defined
according to Equation~\ref{eq:weaksimUV} with respect to the pair
$(s_1,s_2)$ and the functions $\delta_i$. The functions $\delta_i$ can be
considered as a generalisation of the characteristic function of $U_i$ in
the sense that we may \emph{split} the membership of a state to $U_i$ and
$V_i$ into
\emph{fragments} which sum up to $1$. For example, if
$\delta_1(s)=\frac{1}{3}$, we say that $\frac{1}{3}$ fragment of the state
$s$ belongs to $U_1$, and $\frac{2}{3}$ fragment of $s$ belongs to
$V_1$. Hence, $U_i$ and $V_i$ are not necessarily disjoint.  Observe that
$U_i=\emptyset$ implies that $\delta_i(s)=0$ for all $s\in S$. Similarly,
$V_i=\emptyset$ implies that $\delta_i(s)=1$ for all $s\in S$.

Condition~\ref{ws:reachability} will in the sequel be called the
\emph{reachability condition}. If  $K_1>0$ and
$K_2=0$, which implies that $U_2=\emptyset$ and $U_1\not=\emptyset$, the
reachability condition guarantees that for any visible step $s_1\to u_1$
with $u_1\in U_1$, $s_2$ can reach a state $u_2$ which simulates $u_1$
while passing only through states simulating $s_1$.  Assume that we have
$S=\{s_1,s_2,u_1\}$ where $L(s_1)=L(s_2)$ and $u_1$ has a different
labelling. There is only one transition $\P(s_1,u_1)=1$.  Obviously
$s_1\not\wsrel s_2$.  Dropping Condition~\ref{ws:reachability} would mean
that $s_1\wsrel s_2$.  We illustrate the use of fragments of states in the
following example:

\begin{figure}[tbp]
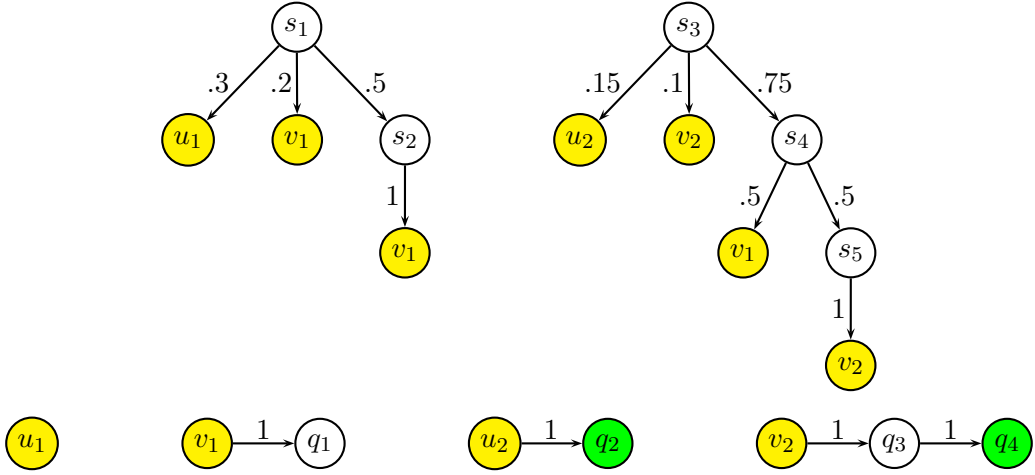

\begin{center}
\psset{levelsep=1.5cm,arrows=->,labelsep=1pt,treemode=B}
\pstree{\Tcircle{$s_1$}}{%
  \Tcircle[fillstyle=solid,fillcolor=yellow]{$u_1$}  \tlput{$.3$}
  \Tcircle[fillstyle=solid,fillcolor=yellow]{$v_1$}  \tlput{$.2$}
  \pstree{\Tcircle{$s_2$} \trput{$.5$}}{%
    \Tcircle[fillstyle=solid,fillcolor=yellow]{$v_1$}  \tlput{$1$}
  }
}\hspace{1.5cm}
\pstree{\Tcircle{$s_3$}}{%
  \Tcircle[fillstyle=solid,fillcolor=yellow]{$u_2$}  \tlput{$.15$}
  \Tcircle[fillstyle=solid,fillcolor=yellow]{$v_2$}  \tlput{$.1$}
\pstree{\Tcircle{$s_4$}  \trput{$.75$}}{%
  \Tcircle[fillstyle=solid,fillcolor=yellow]{$v_1$}  \tlput{$.5$}
  \pstree{\Tcircle{$s_5$} \trput{$.5$}}{%
    \Tcircle[fillstyle=solid,fillcolor=yellow]{$v_2$}  \tlput{$1$}
  }
}
}

\vspace{.3cm}
\hspace{.3cm}
\psset{levelsep=1.5cm,arrows=->,labelsep=1pt,treemode=R}
\Tcircle[fillstyle=solid,fillcolor=yellow]{$u_1$}  
\hspace{1.5cm}
\pstree{\Tcircle[fillstyle=solid,fillcolor=yellow]{$v_1$}}{%
  \Tcircle{$q_1$}  \taput{$1$}
}\hspace{1.5cm}
\pstree{\Tcircle[fillstyle=solid,fillcolor=yellow]{$u_2$}}{%
  \Tcircle[fillstyle=solid,fillcolor=BlueGreen]{$q_2$}  \taput{$1$}
}\hspace{1.5cm}
\pstree{\Tcircle[fillstyle=solid,fillcolor=yellow]{$v_2$}}{%
  \pstree{\Tcircle{$q_3$} \taput{$1$}}{%
    \Tcircle[fillstyle=solid,fillcolor=BlueGreen]{$q_4$}  \taput{$1$}
  }
}
\end{center}
  \caption{A DTMC where splitting states is necessary to establish the weak
  simulation. In the model some states are drawn more than once.}
  \label{fig:split_dtmc}
\end{figure}
\begin{exa}
\label{exa:split_dtmc}
Consider the DTMC depicted in Figure~\ref{fig:split_dtmc}.  For states
$u_1,u_2,v_1,v_2$, obviously the following pairs $(u_1,u_2),(u_1,v_2),
(v_1,v_2)$ are in the weak simulation relation.  The state $u_2$ cannot
weakly simulate $v_1$.  Since $v_2$ weakly simulates $v_1$, it holds that
$s_2
\wsrel s_5$. Similarly, from $u_1\wsrel v_1$ we can easily show that $s_1\wsrel s_4$.  We
observe also that $s_2\not\wsrel s_3$: $K_1>0$ and $K_2>0$ since both $s_2$
and $s_3$ have yellow (grey) successor states, but the required function
$\Delta$ cannot be established since $u_2$ cannot weakly simulate any
successor state of $s_2$ (which is $v_1$). Thus $s_2\not\wsrel s_3$.

Without considering fragments of states, we show that a weak simulation
between $s_1$ and $s_3$ cannot be established. Since $s_2\not\wsrel s_3$,
we must have $U_1=\{u_1,v_1,s_2\}$ and $V_1=\emptyset$. The function
$\delta_1$ is thus defined by $\delta_1(u_1)=\delta_1(v_1)=\delta(s_2)=1$
which implies that $K_1=1$. Now consider the successor states of
$s_3$. Obviously $\delta_2(u_2)=\delta_2(v_2)=1$, which implies that
$u_2,v_2\in U_2$. We consider the following two cases:
\begin{enumerate}[$\bullet$]
\item The case $\delta_2(s_4)=1$. In this case we have that $K_2=1$. 
A function $\Delta$ must be defined satisfying Condition~\ref{ws:wfsums} in
Definition~\ref{def:dtmc_weak_simulation}. Taking $w=s_4$, the following
must hold: $K_2\cdot \Delta(U_1,s_4) = \P(s_3,s_4)\delta_2(s_4)$. As
$K_2=1, \P(s_3,s_4)=0.75$ and $\delta_2(s_4)=1$, it follows that
$\Delta(U_1,s_4)=0.75$. The state $s_2$ is the only successor of $s_1$ that
can be weakly simulated by $s_4$, so $\Delta(s_2,s_4)=0.75$ must
hold. However, the equation $K_1\cdot
\Delta(s_2,U_2) = \P(s_1,s_2)\delta_1(s_2)$ does not hold any more, as
on the left side we have $0.75$ but on the right side we have $0.5$ instead.
\item The case $\delta_2(s_4)=0$. In this case we have still
$K_2>0$. Similar to the previous case it is easy to see that the required
function $\Delta$ cannot be defined: the equation $K_1\cdot
\Delta(s_2,U_2) = \P(s_1,s_2)\delta_1(s_2)$ does not hold since the left
side is $0$ (no states in $U_2$ can weakly simulate $s_2$) but the right
side equals $0.5$.
\end{enumerate}\smallskip

\noindent Thus without splitting, $s_3$ does not weakly simulate
$s_1$. We show it holds that $s_1 \wsrel s_3$. It is sufficient to
show that the relation $R=\{(s_1,s_3),(u_1,u_2),(v_1,v_2)$,
$(q_1,q_3), (s_1,s_4)$, $(u_1,v_1)$, $(v_1,v_1),(q_1,q_1)$,
$(s_2,s_5),(s_2,s_4)\}$ is a weak simulation relation. By the
discussions above, it is easy to verify that every pair except
$(s_1,s_3)$ satisfies the conditions in
Definition~\ref{def:dtmc_weak_simulation}. We show now that the
conditions hold also for the pair $(s_1,s_3)$. The function $\delta_1$
with $\delta_1(u_1)=\delta_1(v_1)=\delta_1(s_2)=1$ is defined as
above, also the sets $U_1=\{u_1,v_1,s_2\}$, $V_1=\emptyset$. The
function $\delta_2$ is defined by: $\delta_2(u_2)=\delta_2(v_2)=1$ and
$\delta_2(s_4)=\frac{1}{3}$, which implies that $U_2=\{u_2,v_2,s_4\}$
and $V_2=\{s_4\}$. Thus, we have $K_1=1$ and $K_2=0.5$. Since
$s_1\wsrel s_4$, Condition~\ref{ws:stuttersimulates} holds trivially
as $(s_1,s_4)\in R$. The reachability condition also holds
trivially. To show that Condition~\ref{ws:weightfunction} holds, we
define the function $\Delta$ as follows: $\Delta(u_1,u_2)=0.3$,
$\Delta(v_1,v_2)=0.2$ and $\Delta(s_2,s_4)=0.5$. We show that
$K_2\cdot \Delta(U_1,w) = \P(s_3,w)\delta_2(w)$ holds for all $w\in
S$. It holds that $K_2=0.5$. First observe that for $w\not\in U_2$
both sides of the equation equal $0$. Let first $w=u_2$ for which we
have that $\P(s_3,u_2)\delta_2(u_2)=0.15$. Since
$\Delta(U_1,u_2)=0.3$, also the left side equals $0.15$. The case
$w=v_2$ can be shown in a similar way. Now consider $w=s_4$. Observe
that $\Delta(U_1,s_4)=0.5$ thus the left side equals $0.25$. The right
side equals $\P(s_3,s_4)\delta_2(s_4)=0.75\cdot \frac{1}{3}=0.25$
thus the equation holds.  The equation $K_1\cdot \Delta(w,U_2) =
\P(s_1,w)\delta_1(w)$ can be shown in a similar way. Thus $\Delta$
satisfies all the conditions, which implies that $s_1\wsrel s_3$.
\end{exa}

Weak simulation for CTMCs is defined as follows.

\begin{defi}[\cite{BKHW05,BaierKHH02}]
\label{def:ctmc_weak_simulation}
Let $\ctmc$ be a CTMC and let $R\subseteq S\times S$. The relation $R$ is a
weak simulation on $\C$ iff for $s_1\mathrel{R}s_2$: $L(s_1)=L(s_2)$ and
there exist functions $\delta_i:S\rightarrow[0,1]$ (for $i=1,2$) satisfying
Equation~\ref{eq:weaksimUV} and Conditions~\ref{ws:stuttersimulates} and
\ref{ws:weightfunction} of Definition~\ref{def:dtmc_weak_simulation} and
the \emph{rate condition}:
\begin{enumerate}[(1')]
\item[(3')]\hfil $K_1\cdot \R(s_1,S)\le K_2\cdot\R(s_2,S)$\hfill
\end{enumerate}
  We say that $s_2$ weakly simulates $s_1$ in $\C$, denoted
  $s_1\wsrel_\M s_2$, iff there exists a weak simulation $R$
  on $\C$ such that $s_1\mathrel{R}s_2$.
\end{defi}
In this definition, the rate condition~$\ref{ws:reachability}'$ strengthens
the reachability condition of the preceding definition.  If $U_1 \not=
\emptyset$, we have that $K_1 > 0$; the rate condition then requires that
$K_2 > 0$, which implies $U_2 \not= \emptyset$. For both DTMCs and CTMCs,
the defined weak simulation $\wsrel$ is a preorder~\cite{BKHW05}, and is
the coarsest weak simulation relation for $\M$.

\subsubsection{Simulation \upto R} \label{subs:simulation_upto_R}
For an arbitrary relation $R$ on the state space $S$ of an FPS with
$s_1\mathrel{R}s_2$, we say that $s_2$ simulates $s_1$ strongly up~to
$R$, denoted $s_1\simrel_R s_2$, if $L(s_1)=L(s_2)$ and $\Pmu{s_1}
\weight_R\Pmu{s_2}$. Otherwise we write $s_1\not\simrel_R s_2$.  Since
only the first step is considered for $\simrel_R$, $s_1\simrel_R s_2$
does not imply $s_1\simrel_\M s_2$ unless $R$ is a strong simulation.
By definition, $R$ is a strong simulation if and only if for all $s_1
\mathrel{R} s_2$ it holds that $s_1 \simrel_R s_2$.  Likewise, we say
that $s_2$ simulates $s_1$ weakly \upto $R$, denoted by $s_1 \wsrel_R
s_2$, if there are functions $\delta_i$ and $U_i,V_i,\Delta$ as
required by Definition~\ref{def:dtmc_weak_simulation} for this pair of
states.  Otherwise, we write $s_1\not \wsrel_R s_2$.  Similar to
strong simulation \upto $R$, $s_1\wsrel_Rs_2$ does not imply
$s_1\wsrel_\D s_2$, since no conditions are imposed on pairs in $R$
different from $(s_1,s_2)$.  Again, $R$ is a weak simulation if and
only if for all $s_1 \mathrel{R} s_2$ it holds that $s_1 \wsrel_R
s_2$.  These conventions extend to DTMCs, CTMCs, PAs and CPAs in an
obvious way. For PAs and CPAs, strong probabilistic simulation up to
$R$, denoted by $\simrel^p_R$, is also defined analogously.

\begin{figure}[tbp]
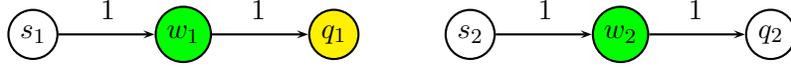

\begin{center}
\psset{levelsep=2cm,arrows=->,treemode=R}
\pstree{\Tcircle{$s_1$}}{%
  \pstree{\Tcircle[fillstyle=solid,fillcolor=BlueGreen]{$w_1$} \taput{$1$} }{%
    \Tcircle[fillstyle=solid,fillcolor=yellow]{$q_1$} \taput{$1$}
  }
}\hspace{1cm}
\pstree{\Tcircle{$s_2$}}{%
  \pstree{\Tcircle[fillstyle=solid,fillcolor=BlueGreen]{$w_2$} \taput{$1$} }{%
    \Tcircle[fillstyle=solid]{$q_2$} \taput{$1$}
  }
}
\end{center}
\caption{A simple FPS for illustrating the simulation up to $R$.}
\label{fig:fps_upto_R}
\end{figure}
\begin{exa}
  Consider the FPS in Figure~\ref{fig:fps_upto_R}. Let
  $R=\{(s_1,s_2),(w_1,w_2)\}$. Since $L(q_1)\neq L(q_2)$ we have that
  $w_1\not\simrel w_2$. Thus, $R$ is not a strong simulation. However,
  $s_1\simrel_R s_2$, as the weight function is given by
  $\Delta(w_1,w_2)=1$. Let $R'=\{(s_1,s_2)\}$, then, $s_1\not\simrel_{R'}
  s_2$.
\end{exa}

\section{Maximum Flow Problems}
\label{sec:flow}
\noindent Before introducing algorithms to decide the simulation preorder, we
briefly recall the preflow algorithm \cite{GoldbergT88} for finding the
maximum flow over the network $\N=(V,E,\capacity)$ where $V$ is a finite set
of vertices, $E\subseteq V\times V$ is a set of edges, and
$\capacity: E\to \Real_{>0} \cup \{ \infty \}$ is the capacity function.
$V$ contains a distinguished \emph{source} vertex $\source$ and a distinguished
\emph{sink} vertex $\sink$.   We extend the capacity function to all
vertex pairs: $\capacity(v,w)=0$ if $(v,w)\not \in E$. A \emph{flow} $f$ on $\N$ is
a function $f:V\times V\to\Real$ that satisfies:
\begin{enumerate}[(1)]
\item   \label{mfl:capacity}
        $f(v,w)\le \capacity(v,w)$ for all $(v,w)\in V\times V$
        \hfill \emph{capacity constraints}
\item   \label{mfl:antisymmetry}
        $f(v,w)=-f(w,v)$ for all $(v,w)\in V\times V$
        \hfill \emph{antisymmetry constraint}
\item   \label{mfl:conservation}
        $f(V,v)=0$ at vertices $v\in V\setminus \{\source,\sink\}$
        \hfill \emph{conservation rule}
\end{enumerate}
The value of a flow function $f$ is given by $f(\source,V)$. A
\emph{maximum flow} is a flow of maximum value.
A
\emph{preflow} is a function $f:V\times V\to\Real$ satisfying
Conditions~\ref{mfl:capacity} and \ref{mfl:antisymmetry} above, and the
relaxation of Condition~\ref{mfl:conservation}:
\begin{enumerate}[(1')]
\item[$(\ref{mfl:conservation}')$] $f(V,v)\ge 0$ for all
$v\in V\setminus \{\source\}$.
\end{enumerate}
The \emph{excess} $e(v)$ of a vertex
$v$ is defined by $f(V,v)$. A vertex $v\not\in\{\source,\sink\}$ is
called \emph{active} if $e(v)>0$. Observe
that if in a preflow function no vertex $v$ is active for $v\in V\setminus
\{\source,\sink\}$, it is then also a flow function.  A pair
$(v,w)$ is a \emph{residual edge} of $f$ if $f(v,w)<\capacity(v,w)$. The set
of residual edges \wrt $f$ is denoted by $E_f$. The \emph{residual
  capacity} $\capacity_f(v,w)$ of the residual edge $(v,w)$ is defined by
$\capacity(v,w)-f(v,w)$. If $(v,w)$ is not a residual edge, it is called
\emph{saturated}.  A \emph{valid distance function} (called
\emph{valid labelling} in \cite{GoldbergT88}) $d$ is a function $V\to
\mathbb{N}\cup\{\infty\}$ satisfying: $d(\source)=\size{V}$, $d(\sink)=0$
and $d(v)\le d(w)+1$ for every residual edge $(v,w)$. A residual edge
$(v,w)$ is \emph{admissible} if $d(v)=d(w)+1$.

Related to maximum flows are minimum cuts.  A \emph{cut} of a network
$\N=(V,E,\capacity)$ is a partition of $V$ into two disjoint sets $(X, X')$ such
that $\source \in X$ and $\sink \in X'$.  The \emph{capacity} of $(X,X')$
is the sum of all capacities of edges from $X$ to $X'$, i.\,e.\@, $\sum_{v
\in X, w \in X'} \capacity(v,w)$.  A \emph{minimum cut} is a cut with
minimal capacity.  The \emph{Maximum Flow Minimum Cut
Theorem}~\cite{AMO93} states that the capacity of a minimum
cut is equal to the value of a maximum flow.

\paragraph{The Preflow Algorithm.}
We initialise the preflow $f$ by: $f(v,w)=\capacity(v,w)$ if
$v=\source$ and $0$ otherwise. The distance function $d$ is
initialised by: $d(v)=\size{V}$ if $v=\source$ and $0$ otherwise. The
preflow algorithm preserves the validity of the preflow $f$ and the distance
function $d$.  If there is an active vertex $v$ such that the residual
edge $(v,w)$ is admissible, we \emph{push}
$\delta:=\min\{e(v),\capacity_f(v,w)\}$ amount of flow from $v$ toward the sink along
the admissible edge $(v,w)$ by increasing $f(v,w)$ (and decreasing
$f(w,v)$) by $\delta$.  The excesses of $v$ and $w$ are then modified
accordingly by: $e(v)=e(v)-\delta$ and $e(w)=e(w)+\delta$.  If $v$ is
active but there are no admissible edges leaving it, one may
\emph{relabel} $v$ by letting $d(v) := \min\{d(w)+1\mid (v,w)\in
E_f\}$.  Pushing and relabelling are repeated until there are no
active vertices left.  The algorithm terminates if no such operations
apply. The resulting final preflow $f$ is a maximum flow.

\paragraph{Feasible Flow Problem.}
Let $A \subseteq E$ be a subset of edges of the network
$\N=(V,E,\capacity)$, and define the lower bound function $l: A \to
\Real_{>0}$ which satisfies $l(e) \le \capacity(e)$ for all $e \in A$. We
address the \emph{feasible flow} problem which consists of finding a flow
function $f$ satisfying the condition: $f(e) \ge l(e)$ for all $e\in A$. We
briefly show that this problem can be reduced to the maximum flow
problem~\cite{AMO93}.

We can replace a minimum flow requirement on
edge $v \to w$ by turning $v$ into a demanding vertex (\ie, a vertex
that consumes part of its inflow) and turning $w$ into a supplying
vertex (\ie, a vertex that creates some outflow \emph{ex nihilo}). The
capacity of edge $v \to w$ is then reduced accordingly.

Now, we are going to look for a flow-like function for the updated network. The
function should satisfy the capacity constraints, and the difference
between outflow and inflow in each vertex corresponds to its supply or
demand, except for $\source$ and $\sink$. To remove that last
exception, we add an edge from $\sink$ to $\source$ with capacity
$\infty$.

We then apply another transformation to the updated network so that we
can apply the maximum flow algorithm. We add new source and target
vertices $\source'$ and $\sink'$. For each supplying vertex $s$, we
add an edge $\source' \to s$ with the same capacity as the supply of
the vertex. For each demanding vertex $d$, we add an edge $d \to
\sink'$ with the same capacity as the demand of the vertex.
In~\cite{AMO93} it is shown that the original network has a feasible flow
if and only if the transformed network has a flow $h$ that saturates all
edges from $\source'$ and all edges to $\sink'$.  The flow
$h$ necessarily is a maximum flow, and if there is an $h$, each maximum
flow satisfies the requirement; therefore it can be found by the maximum
flow algorithm.  An example will be given in Example~\ref{exa:feasible} in
Section~\ref{sec:weak}.

\section{Algorithms for Deciding Strong Simulations}
\label{sec:strong}
\noindent We first recall  the basic algorithm to compute the largest strong
simulation relation $\simrel$ in Subsection~\ref{sec:strong_basic}.
Then, we refine this algorithm to deal with strong simulation on
Markov chains in Subsection~\ref{sec:strong_fps}, and extend it to
deal with probabilistic automata in Subsection~\ref{sec:pp}. In
Subsection~\ref{sec:sps} we present an algorithm for deciding
strong probabilistic simulation for probabilistic automata.

\subsection{Basic Algorithm to Decide Strong Simulation}
\label{sec:strong_basic}

The algorithm in~\cite{BEMC00}, copied as $\prog{SimRel}_s$ in
Algorithm~\ref{fig:simfps}, takes as a parameter a model, which, for now,
is an FPS $\D$.  The subscript `$s$' stands for strong simulation; a very
similar algorithm, namely $\prog{SimRel}_w$, will be used for weak
simulation later.  To calculate the strong simulation relation for $\D$,
the algorithm starts with the initial relation $R_1=\{(s_1,s_2)\in S\times
S\mid L(s_1)=L(s_2)\}$ which is coarser than $\simrel_\M$. In iteration
$i$, it generates $R_{i+1}$ from $R_i$ by deleting each pair $(s_1,s_2)$
from $R_i$ if $s_2$ cannot strongly simulate $s_1$ \upto $R_i$, \ie,
$s_1\not\simrel_{R_i} s_2$. This proceeds until there is no such pair left,
\ie, $R_{i+1}=R_i$.  Invariantly throughout the loop it holds that $R_i$ is
coarser than $\simrel_\M$ (i.\,e.\@, $\simrel_\M$ is a sub-relation of
$R_i$).  We obtain the strong simulation preorder $\mathord{\simrel_\M} =
R_i$, once the algorithm terminates.
 
\begin{algorithm}[tbp]
  \caption{Basic algorithm to decide strong simulation.}
  \label{fig:simfps}
    \Procname{$\prog{SimRel}_s(\D)$}
  \begin{algorithmic}[1]
        \STATE  $R_1 \gets \{(s_1,s_2)\in S\times S\mid L(s_1)=L(s_2)\}$ and $i \gets 0$
        \REPEAT
                \STATE  $i \gets i + 1$
                \STATE  $R_{i+1} \gets \emptyset$
                        \label{algfps:until.start.body}
                \FORALL{$(s_1,s_2)\in R_i$}
                        \IF{$s_1\simrel_{R_i} s_2$}   \label{algsim:checksimrel}
                                \STATE  $R_{i+1}\gets R_{i+1}\cup \{(s_1,s_2)\}$
                        \ENDIF
                \ENDFOR \label{algfps:until.end.body}
        \UNTIL{$R_{i+1}=R_i$}
        \STATE  \textbf{return} $R_i$
  \end{algorithmic}
\end{algorithm}

The decisive part of the algorithm is the check in
Line~\ref{algsim:checksimrel}, \ie, whether $s_1\simrel_{R_i} s_2$.
This can be answered via solving a maximum flow problem on a
particular network $\N(\Pmu{s_1}$, $\Pmu{s_2},R_i)$ constructed from
$\Pmu{s_1}$, $\Pmu{s_2}$ and $R_i$.  This network is the relevant part
of a graph containing two copies $t\in S_\aux$ and $\overline{t}\in
\overline{S_\aux}$ of each state where $\overline{S_\aux}=\{\overline
t\mid t\in S_\aux\}$ as follows: Let $\source$ (the source) and
$\sink$ (the sink) be two additional vertices not contained in
$S_\aux\cup\overline{S_\aux}$.  For $\mu,\mu'\in\distrs(S)$, and a
relation $R\subseteq S\times S$ we define the network $\N(\mu, \mu',
R)= (V,E,\capacity)$ with the set of vertices $
V=\{\source,\sink\}\cup \support_\aux(\mu) \cup
\overline{\support_\aux(\mu')} $ and the set of edges $E$ is defined
by $ E=\{(s,\overline t)\mid (s,t)\in \mathrel{R_\aux} \}\cup
\{(\source,s) \} \cup \{(\overline t, \sink) \} $ where $ s \in
\support_\aux(\mu)$ and $t \in \support_\aux(\mu')$.  Recall the
relation $R_\aux$ is defined by $R \cup \{(\aux,s)\mid s \in
S_\aux\}$.  The capacity function $\capacity$ is defined as follows:
$\capacity(\source,s)=\mu(s)$ for all $s\in \support_\aux(\mu)$,
$\capacity(\overline t,\sink)=\mu'(t)$ for all $t\in
\support_\aux(\mu')$, and $\capacity(s,\overline t)=\infty$ for all
other $(s,\overline{t})\in E$.  This network is a bipartite network,
since the vertices can be partitioned into two subsets
$V_1:=\support_\aux(\mu) \cup\{\sink\}$ and
$V_2:=\overline{\support_\aux(\mu')}\cup\{\source\}$ such that all
edges have one endpoint in $V_1$ and another in $V_2$.  Later, we will
use two variations of this network: For $\gamma\in\Real_{>0}$, we let
$\N(\mu, \gamma\mu', R)$ denote the network obtained from $\N(\mu,
\mu', R)$ by setting the capacities to the sink $\sink$ to:
$\capacity(\overline t,\sink)=\gamma\mu'(t)$.  For two states
$s_1,s_2$ of an FPS or a CTMC, we let $\N(s_1,s_2,R)$ denote the
network $\N(\Pmu{s_1},\Pmu{s_2},R)$.

The following lemma expresses the crucial relationship between maximum
flows and weight functions on which the algorithm is based. It is a
direct extension of~\cite[Lemma~5.1]{BEMC00}:
 
\begin{lem}
\label{lem:weight_equivalent}
Let $S$ be a finite set of states and $R$
be a relation on $S$. Let $\mu,\mu'\in \dist(S)$. Then, $\mu\weight_R
\mu'$ iff the maximum flow of the network $\N(\mu, \mu',R)$ has value $1$.
\end{lem}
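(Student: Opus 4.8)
The statement is a biconditional, so the plan is to pass back and forth between weight functions and flows, exploiting the fact that both the total capacity leaving $\source$ and the total capacity entering $\sink$ equal $1$. Indeed, $\sum_{s\in\support_\aux(\mu)}\capacity(\source,s)=\mu(S_\aux)=\mu(S)+\mu(\aux)=1$, and likewise the capacities into $\sink$ sum to $\mu'(S_\aux)=1$. Hence every flow on $\N(\mu,\mu',R)$ has value at most $1$, and a flow attains value $1$ exactly when it saturates all source edges, equivalently all sink edges. This observation is what lets ``value $1$'' encode the two conservation requirements of a weight function.

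For the forward direction I would take a weight function $\Delta$ for $(\mu,\mu')$ \wrt $R$ and define a flow $f$ on $\N(\mu,\mu',R)$ by saturating the boundary, $f(\source,s)=\mu(s)$ and $f(\overline t,\sink)=\mu'(t)$, and by routing $f(s,\overline t)=\Delta(s,t)$ across the middle, extended antisymmetrically and set to $0$ on all other pairs. Condition~(1) of Definition~\ref{def:weight} guarantees $\Delta(s,t)>0$ only when $(s,\overline t)\in E$, so $f$ places flow only on genuine edges; the capacity constraints hold because the middle edges have capacity $\infty$ while the boundary edges are met with equality, and $\Delta\ge 0$ keeps $f$ in range. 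Conservation at a left vertex $s$ reduces to $\sum_{\overline t}f(s,\overline t)=\Delta(s,S_\aux)=\mu(s)=f(\source,s)$, which is precisely Condition~(2), and conservation at a right vertex $\overline t$ is Condition~(3). Finally the value is $f(\source,V)=\mu(S_\aux)=1$.

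For the converse I would fix a maximum flow $f$ of value $1$. Since the value equals the total source capacity, every source edge is saturated, $f(\source,s)=\mu(s)$; since it equals the total sink capacity, every sink edge is saturated, $f(\overline t,\sink)=\mu'(t)$. I then set $\Delta(s,t):=f(s,\overline t)$ for $s,t\in S_\aux$, taking $\Delta(s,t)=0$ whenever $s\notin\support_\aux(\mu)$, or $t\notin\support_\aux(\mu')$, or $(s,\overline t)\notin E$. Because a left vertex $s$ has no edges to or from other left vertices or the sink, the capacity constraints together with antisymmetry force the flow across those pairs to vanish, so conservation $f(V,s)=0$ rearranges to $\Delta(s,S_\aux)=\mu(s)$, giving Condition~(2); symmetrically, conservation at the right vertices combined with sink saturation yields Condition~(3). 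Condition~(1) holds because $\capacity(s,\overline t)=0$ for non-edges forces $f(s,\overline t)=0$ there.

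The main obstacle is cleanly securing $\Delta\ge 0$, so that $\Delta$ genuinely maps into $[0,1]$ and Condition~(1) can be read off directly; an arbitrary maximum flow might in principle carry cancelling flow on the antisymmetric reverse pairs. Here I would use that $\N(\mu,\mu',R)$ is acyclic, its vertices admitting the topological order $\source$, $\support_\aux(\mu)$, $\overline{\support_\aux(\mu')}$, $\sink$. By flow decomposition the flow then splits into $\source$-to-$\sink$ path flows only, with no cycles; each path traverses every middle edge in its forward direction, whence $f(s,\overline t)\ge 0$ on each edge and the net reverse flow vanishes, so I may replace $f$ by this non-negative flow of the same value. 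The remaining care is bookkeeping around the auxiliary state: the vertex $\aux$ occurs on the left exactly when $\mu$ is non-stochastic, the relation $R_\aux$ lets $\aux$ send flow to every right vertex, and $\mu(S_\aux)=\mu'(S_\aux)=1$ makes the two capacity bounds coincide at $1$, closing both directions.
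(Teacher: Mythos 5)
Your proof is correct, but it takes a more self-contained route than the paper: the paper's entire proof is the observation that, once the auxiliary state $\aux$ is added, $\mu$ and $\mu'$ become stochastic distributions over $S_\aux$, after which it simply cites Lemma~5.1 of~\cite{BEMC00}, where the correspondence between weight functions and maximum flows of value $1$ is established for stochastic distributions. What you have done, in effect, is reconstruct that cited lemma from scratch, and your two translations (saturate the boundary and route $\Delta$ across the middle; conversely, read $\Delta$ off a value-$1$ flow after noting that value $1$ forces saturation of all source and sink edges) are exactly the standard argument. One remark on your ``main obstacle'': the nonnegativity of $f(s,\overline t)$ needs no flow decomposition at all. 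The paper extends $\capacity$ by $\capacity(v,w)=0$ for $(v,w)\notin E$, and the reverse pair $(\overline t,s)$ is never an edge, so the capacity constraint gives $f(\overline t,s)\le 0$ and antisymmetry immediately yields $f(s,\overline t)\ge 0$; your acyclicity/decomposition argument is valid but redundant. Two minor bookkeeping points you gloss over: Condition~(1) of Definition~\ref{def:weight} alone only gives $(s,t)\in R_\aux$, and you also need Conditions~(2) and~(3) to see that $\Delta$ vanishes outside $\support_\aux(\mu)\times\support_\aux(\mu')$ so that it is supported on genuine edges; and the bound $\Delta(s,t)\le 1$ in the converse follows from conservation at $s$ together with $f(\source,s)=\mu(s)\le 1$. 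Neither affects correctness. The trade-off is the usual one: the paper's citation is shorter and leans on~\cite{BEMC00}, while your version makes the lemma verifiable without consulting the reference and makes explicit where the $\aux$-normalisation (total source and sink capacity both equal to $1$) enters.
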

\begin{proof}
As we introduced the auxiliary state $\aux$, $\mu$ and $\mu'$ are
stochastic distributions in $\distrs(S_\aux)$.  The rest of the proof
follows directly from \cite[Lemma~5.1]{BEMC00}.
\end{proof}

Thus we can decide $s_1\simrel_{R_i} s_2$ by computing the maximum flow in
$\N(s_1,s_2,R_i)$ and then check whether it has value $1$.  We recall the
correctness and complexity of $\prog{SimRel}_s$ which will also be used
later.
\begin{thm}[\cite{BEMC00}]
\label{thm:correctness_basic}
If $\prog{SimRel}_s(\D)$ terminates, the returned relation equals
$\simrel_\M$. Moreover, $\prog{SimRel}_s(\D)$ runs in time $\O(n^7/\log
n)$ and in space $\O(n^2)$.
\end{thm}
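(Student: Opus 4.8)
The statement has two parts: correctness (the fixpoint equals $\simrel_\M$) and the complexity bounds. Since the theorem is attributed to~\cite{BEMC00}, my plan is to reconstruct the argument that establishes both, assuming Lemma~\ref{lem:weight_equivalent} and the semantics of $\prog{SimRel}_s$ already stated.

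For \textbf{correctness}, the plan is to first record the loop invariant claimed in the text, namely that $\mathord{\simrel_\M}\subseteq R_i$ holds throughout the loop, and then argue that the terminal relation is itself a strong simulation. I would prove the invariant by induction on $i$. The base case $\mathord{\simrel_\M}\subseteq R_1$ holds because $s_1\simrel_\M s_2$ forces $L(s_1)=L(s_2)$ (the labelling condition of Definition~\ref{def:strongsimulation-fps}), and $R_1$ collects exactly the equally-labelled pairs. For the inductive step, suppose $\mathord{\simrel_\M}\subseteq R_i$ and take any pair $(s_1,s_2)\in\mathord{\simrel_\M}$; I must show it survives into $R_{i+1}$, i.\,e.\@ $s_1\simrel_{R_i}s_2$. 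Since $\simrel_\M$ is a strong simulation, $\Pmu{s_1}\weight_{\simrel_\M}\Pmu{s_2}$, and monotonicity of $\weight$ in its relation argument together with the inductive hypothesis $\mathord{\simrel_\M}\subseteq R_i$ yields $\Pmu{s_1}\weight_{R_i}\Pmu{s_2}$, which is precisely $s_1\simrel_{R_i}s_2$. Hence no pair of $\simrel_\M$ is ever deleted. Upon termination $R_{i+1}=R_i$, which by the definition of the loop body means $s_1\simrel_{R_i}s_2$ for every $(s_1,s_2)\in R_i$; by the characterisation recalled in Subsection~\ref{subs:simulation_upto_R} ($R$ is a strong simulation iff $s_1\simrel_R s_2$ for all related pairs), $R_i$ is a strong simulation. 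Being a strong simulation it is contained in the coarsest one, $R_i\subseteq\mathord{\simrel_\M}$, and combined with the invariant $\mathord{\simrel_\M}\subseteq R_i$ this gives $R_i=\mathord{\simrel_\M}$.

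For \textbf{complexity}, the plan is to bound the three cost factors and multiply. Each call to the decision procedure in Line~\ref{algsim:checksimrel} is, by Lemma~\ref{lem:weight_equivalent}, a single maximum-flow computation on a bipartite network $\N(s_1,s_2,R_i)$ whose vertex set has $\O(n)$ nodes; using the preflow algorithm of~\cite{GoldbergT88} recalled in Section~\ref{sec:flow}, one such maximum flow costs $\O(n^3/\log n)$. The inner \textbf{forall} loop performs at most one such check per pair, and $\size{R_i}\le n^2$, so one iteration of the \textbf{repeat} loop costs $\O(n^2\cdot n^3/\log n)=\O(n^5/\log n)$. It remains to bound the number of iterations: each non-final iteration removes at least one pair from a relation of size at most $n^2$, so there are at most $\O(n^2)$ iterations. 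Multiplying gives $\O(n^2)\cdot\O(n^5/\log n)=\O(n^7/\log n)$ time. The space bound $\O(n^2)$ follows because at any moment only the current relation $R_i$ (of size $\O(n^2)$) and a single network of $\O(n)$ vertices (hence $\O(n^2)$ edges) are stored.

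\textbf{Main obstacle.} The routine parts are the counting arguments for complexity; the delicate point is the correctness direction showing $R_i\subseteq\mathord{\simrel_\M}$ at termination, which hinges on the exact equivalence ``$R$ is a strong simulation iff $s_1\simrel_R s_2$ for all $(s_1,s_2)\in R$'' and on the monotonicity of $\weight_R$ in $R$. The latter I would justify directly from Definition~\ref{def:weight}: enlarging $R$ can only enlarge $R_\aux$ and thus relax Condition~(1), so any weight function for $(\mu,\mu')$ \wrt a smaller relation remains valid \wrt a larger one. Since this theorem merely transfers the result of~\cite[Theorem]{BEMC00} to the FPS setting—where the only genuine change is the auxiliary state $\aux$ already absorbed into Lemma~\ref{lem:weight_equivalent}—the cleanest write-up simply invokes that correspondence and verifies that the complexity accounting is unaffected.
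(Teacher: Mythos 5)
Your proposal is correct and follows essentially the same route as the paper: the loop invariant $\mathord{\simrel_\M}\subseteq R_i$ proved by induction via monotonicity of weight functions in the relation argument, the observation that the terminal fixpoint is itself a strong simulation and hence contained in the coarsest one, and the same $\O(n^4)$ total flow checks times $\O(n^3/\log n)$ per check (your per-iteration packaging versus the paper's telescoping sum $\sum_{i=1}^{\size{R_1}}i$ yields the same bound). The only slip is attributive, not substantive: the $\O(n^3/\log n)$ flow bound is not achieved by the plain preflow algorithm of~\cite{GoldbergT88} (which gives $\O(n^3)$) but by the improved algorithms the paper cites as~\cite{CheriyanHM90,Gol98}.
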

\begin{proof}
  First we show that after the last iteration (say iteration $k$), it
  holds that $\simrel$ is coarser than $R_k$: It holds that $R_{k+1}=R_k$,
  thus for all $(s_1,s_2)\in R_k$, we have that $s_1\simrel_{R_k} s_2$.  As
  for all $(s_1,s_2) \in R_k \subseteq R_1$, we have $L(s_1) = L(s_2)$,
  $R_k$ is a strong simulation relation by
  Definition~\ref{def:strongsimulation-fps}, thus $\simrel$ is coarser than
  $R_k$.

  Now we show by induction
  that the loop of the algorithm invariantly ensures that
  $R_i$ is coarser than $\simrel$.
  Assume $i=1$. By definition of strong simulation,
  $s_1\simrel s_2$ implies $L(s_1)=L(s_2)$.  Thus,
  the initial relation $R_1$ is coarser than the simulation
  relation $\simrel$.
  Now assume that $R_i$ is coarser than $\simrel$ for some $1 \leq i < k$;
  we will show that also $R_{i+1}$ is coarser than $\simrel$.
  Pick a pair $(s_1,s_2) \in \mathord{\simrel}$ arbitrarily.
  By Definition~\ref{def:strongsimulation-fps}, $\Pmu{s_1}
\weight_\simrel \Pmu{s_2}$,
  so there exists a weight function for $(\Pmu{s_1},\Pmu{s_2})$ \wrt
  $\simrel$.  Inspection of Definition~\ref{def:weight} shows that the same
  function is also a weight function \wrt any set coarser than
  $\simrel$.  As $R_i$ is coarser than $\simrel$ by induction hypothesis,
  we conclude that $\Pmu{s_1} \weight_{R_i} \Pmu{s_2}$, and from
  Subsection~\ref{subs:simulation_upto_R}, $s_1 \simrel_{R_i} s_2$.  This
  implies that $(s_1,s_2) \in R_{i+1}$ by line~\ref{algsim:checksimrel} for
  all $s_1 \simrel s_2$. Therefore, $R_{i+1}$ is coarser than $\simrel$ for
  all $i=1\ldots,k$.
  
  Now we show the complexity. For one network
  $\N(s_1,s_2,R_i)=(V,E,\capacity)$, the sizes of the vertices
  $\size{V}$ and edges $\size{E}$ are bounded by $2n+4$ and $(n+1)^2 +
  2n$, respectively. The number of edges meets the worst case bound
  $\O(n^2)$. To the best of our knowledge, the best complexity of the
  flow computation for the network $G$ is $\O(\size{V}^3/\log
  \size{V})=\O(n^3/\log n)$~\cite{CheriyanHM90,Gol98}.  In the
  algorithm $\prog{SimRel}_s$, only one pair, in the worst case, is
  removed from $R_i$ in iteration $i$, which indicates that the test
  whether $s_1\simrel_{R_i} s_2$ is called $\size{R_1}$ times,
  $\size{R_1}-1$ times and so on. Altogether it is bounded by
  $\sum_{i=1}^{\size{R_1}}i \le \sum_{i=1}^{n^2}i \in \O(n^4) $.
  Hence, the overall time complexity amounts to $\mathcal{O}(n^7/\log
  n)$.  The space complexity is $\O(n^2)$ because of the
  representation of the transitions in $\N(s_1,s_2,R_i)$.
\end{proof} 

\subsection{An Improved Algorithm for FPSs}  
\label{sec:strong_fps}

We first analyse the behaviour of $\prog{SimRel}_s$ in more detail.  For
this, we consider an arbitrary pair $(s_1,s_2)$, and assume that
$(s_1,s_2)$ stays in relation $R_1, \ldots, R_k$ throughout the iterations
$i=1, \ldots, k$, until the pair is either found not to satisfy
$s_1\simrel_{R_k} s_2$ or the algorithm terminates with a fix-point after
iteration $k$.  Then altogether the maximum flow algorithms are run
$k$-times for this pair.  However, the networks $\N(s_1,s_2,R_i)$
constructed in successive iterations are very similar, and may often be
identical across iterations: They differ from iteration to iteration only
by deletion of some edges induced by the successive cleanup of $R_i$. For
our particular pair $(s_1,s_2)$ the network might not change at all in some
iterations, because the deletions from $R_i$ do not affect their direct
successors.  We are going to exploit this observation by an algorithm that
reuses the already computed maximum flow, in a way that whatever happens is
good: If no changes occur from $\N(s_1,s_2,R_{i-1})$ to $\N(s_1,s_2,R_i)$,
then the maximum flow is the same as the one in the previous iteration. If
changes do occur, the preflow algorithm can be applied to get the new
maximum flow very fast, using the maximum flow and distance function
constructed in the previous iteration as a starting point.

To understand the algorithm, we look at the network
$\N(s_1,s_2,R_1)$.  Let $D_1,\ldots,D_k$ be pairwise disjoint
subsets of $R_1$, which correspond to the pairs deleted from
$R_1$ in iteration $i$, so $R_{i+1} = R_i
\setminus D_i$ for $1 \leq i \leq k$.  Let $f_i^{(s_1,s_2)}$ denote the maximum
flow of the network $\N(s_1,s_2,R_i)$ for $1 \leq i \leq k$.  We
sometimes omit the superscript $(s_1,s_2)$ in the parameters if the pair
$(s_1,s_2)$ is clear from the context. We address the problem of checking
$\size{f_i}=1$ for all $i = 1,
\ldots, k$.  Our algorithm \emph{sequence of maximum flows}
$\prog{Smf}(i, \N(s_1,s_2,R_{i-1}), f_{i-1}, d_{i-1}, D_{i-1})$ is shown as
Algorithm~\ref{fig:smf}. It executes iteration $i$ of a parametric flow
algorithm, where $\N(s_1,s_2,R_{i-1})$ is the network for $(s_1,s_2)$ and
$f_{i-1}$ and $d_{i-1}$ are the flow and the distance function resulting
from the previous iteration $i-1$; and $D_{i-1}$ is a set of edges that have
to be deleted from $\N(s_1,s_2,R_{i-1})$ to get the current network.  The
algorithm returns a tuple, in which the first component is a boolean that
tells whether $\size{f_i} = 1$; it also returns the new network
$\N(s_1,s_2,R_i)$, flow  $f_i$ and distance function $d_i$ to be
reused in the next iteration.  $\prog{Smf}$ is inspired by the
\emph{parametric maximum algorithm} in \cite{GalloGT89}.  A
variant of $\prog{Smf}$ is used in the first iteration, shown in
lines~\ref{algsmfinit.start}--\ref{algsmfinit.end}.

\begin{algorithm}[tbp]
  \caption{Algorithm for a sequence of maximum flows.}
\label{fig:smf}
    \Procname{$\prog{Smf}(i, \N(s_1,s_2,R_{i-1}), f_{i-1}, d_{i-1}, D_{i-1})$}
  \begin{algorithmic}[1]
        \STATE  $\N(s_1,s_2,R_i) \gets \N(s_1,s_2,R_{i-1}\setminus
  D_{i-1})$
                and $f_i \gets f_{i-1}$
                and $d_i \gets d_{i-1}$
                \label{algsmf.reinitall.start}
        \FORALL{$(u_1,u_2) \in D_{i-1}$}   \label{algsmf.reinit.start}
                \STATE  $f_i(\overline{u_2},\sink) \gets f_i(\overline{u_2},\sink) - f_i(u_1,\overline{u_2})$
                \STATE  $f_i(u_1,\overline{u_2}) \gets 0$
        \ENDFOR   \label{algsmf.reinit.end} \label{algsmf.reinitall.end}
        \STATE  Apply the preflow algorithm to calculate the maximum flow
                for $\N(s_1,s_2,R_i)$, \hfill\\ 
                but initialise the preflow to $f_i$
                and the distance function to $d_i$.
                \label{algsmf.callpreflow}
        \STATE  \textbf{return} $(\size{f_i} = 1, \N(s_1,s_2,R_i), f_i, d_i)$
                        \label{algsmf.return}
  \end{algorithmic}

\bigskip

    \Procname{$\prog{Smf}_\init(i, s_1, s_2, R_i)$}
  \begin{algorithmic}[1]
        \addtocounter{ALC@line}{10}
        \STATE  Initialise the network $\N(s_1,s_2,R_i)$.
                \label{algsmfinit.start}
        \STATE  Apply the preflow algorithm to calculate the maximum flow for $\N(s_1,s_2,R_i)$.        \label{algsmfinit.callpreflow}
        \STATE  \textbf{return} $(\size{f_i} = 1, \N(s_1,s_2,R_i), f_i, d_i)$
                \label{algsmfinit.end}
  \end{algorithmic}
\end{algorithm}

\begin{algorithm}[tbp]
  \caption{Improved algorithm for deciding strong simulation for FPSs.}
  \label{fig:simfpssmf}
    \Procname{$\prog{SimRel}^\mathrm{FPS}_s(\D)$}
  \begin{algorithmic}[1]
        \STATE  $R_1 \gets \{(s_1,s_2)\in S\times S\mid L(s_1)=L(s_2)\}$ and $i \gets 1$        \label{algsrf:initR}
        \STATE  $R_2 \gets \emptyset$
                \label{algsrf:iteration1.start}
        \FORALL{$(s_1,s_2)\in R_1$}
                \STATE  $\listener_{(s_1,s_2)}$ $\gets \{(u_1,u_2)\mid u_1\in
                        \pre(s_1)\wedge u_2\in \pre(s_2) \wedge L(u_1)=L(u_2)\}$        \label{algsrf:initlistener}
                \STATE  $(\mathit{match}, \N(s_1,s_2,R_1), f_1^{(s_1,s_2)}, d_1^{(s_1,s_2)}) \gets
                        \prog{Smf}_\init(1, s_1, s_2, R_1)$ 
                \IF{$\mathit{match}$}
                        \STATE  $R_2 \gets R_2 \cup \{(s_1,s_2)\}$
                \ENDIF
        \ENDFOR \label{algsrf:iteration1.end}
        \WHILE{$R_{i+1} \not= R_i$}   \label{algsrf:until.start}
                \STATE  $i \gets i + 1$
                \STATE  $R_{i+1} \gets \emptyset$ and $D_{i-1} \gets R_{i-1} \setminus R_i$     \label{algsrf:until.start.body}
                \FORALL{$(s_1,s_2)\in R_i$}
                        \label{algsrf:setDs1s2.start}
                        \STATE  $D_{i-1}^{(s_1,s_2)} \gets \emptyset$
                \ENDFOR
                \FORALL{$(s_1,s_2)\in D_{i-1}, \quad (u_1,u_2)\in \listener_{(s_1,s_2)} \cap R_{i-1}$}
                        \STATE  $D_{i-1}^{(u_1,u_2)} \gets D_{i-1}^{(u_1,u_2)} \cup \{(s_1,s_2)\}$
                \ENDFOR   \label{algsrf:setDs1s2.end}
                \FORALL{$(s_1,s_2)\in R_i$}
                        \STATE $(\mathit{match}, \N(s_1,s_2,R_i),
                                f_i^{(s_1,s_2)}, d_i^{(s_1,s_2)})$ \\
                                $\qquad \qquad \qquad \gets \prog{Smf}(i,
                                \N(s_1,s_2,R_{i-1}), f_{i-1}^{(s_1,s_2)}, 
                                d_{i-1}^{(s_1,s_2)},
                                D_{i-1}^{(s_1,s_2)})$ 
                                \label{algsrf:callsmf}
                        \IF{$\mathit{match}$}   \label{algsrf:testmatch}
                                \STATE  $R_{i+1}\gets
                                R_{i+1}\cup \{(s_1,s_2)\}$ \label{algsrf:insertnew}
                        \ENDIF
                \ENDFOR \label{algsrf:until.end.body}
        \ENDWHILE       \label{algsrf:until.end}
        \STATE  \textbf{return} $R_i$
  \end{algorithmic}
\end{algorithm}

This algorithm for sequence of maximum flow problems is called in an
improved version of $\prog{SimRel}_s$ shown as
Algorithm~\ref{fig:simfpssmf}.
Lines~\ref{algsrf:iteration1.start}--\ref{algsrf:iteration1.end} contain
the first iteration, very similar to the first iteration of
Algorithm~\ref{fig:simfps}
(lines~\ref{algfps:until.start.body}--\ref{algfps:until.end.body}).  At
line~\ref{algsrf:initlistener} we prepare for later iterations the set
\[\listener_{(s_1,s_2)}=
  \{(u_1,u_2) \mid u_1\in \pre(s_1)\wedge u_2\in \pre(s_2) \wedge
  L(u_1)=L(u_2)\} \quad,
\]
where $\pre(s)=\{t\in S\mid \P(t,s)>0\}$.  This set contains all pairs
$(u_1,u_2)$ such that the network $\N(u_1,u_2,R_1)$ contains the edge
$(s_1,\overline{s_2})$.  Iteration~$i$ (for $i>1$) of the loop
(lines~\ref{algsrf:until.start.body}--\ref{algsrf:until.end.body})
calculates $R_{i+1}$ from $R_i$.  In
lines~\ref{algsrf:setDs1s2.start}--\ref{algsrf:setDs1s2.end}, we collect
edges that should be removed from $\N(u_1,u_2,R_{i-1})$ in the sets
$D_{i-1}^{(u_1,u_2)}$.  At line~\ref{algsrf:callsmf}, the algorithm
$\prog{Smf}$ constructs the maximum flow for parameters using information
from iteration~$i-1$.  It uses the set $D_{i-1}^{(s_1,s_2)}$ to update the
network $\N(s_1,s_2,R_{i-1})$, flow $f_{i-1}$, a distance function
$d_{i-1}$; then it constructs the maximum flow $f_i$ for the network
$\N(s_1,s_2,R_i)$.  If $\prog{Smf}$ returns true, $(s_1,s_2)$ is inserted
into $R_{i+1}$ and survives this iteration (line~\ref{algsrf:insertnew}).

Consider the algorithm $\prog{Smf}$ and assume that $i>1$.  At
lines~\ref{algsmf.reinitall.start}--\ref{algsmf.reinit.end}, we
remove the edges $D_{i-1}$ from the network $\N(s_1,s_2,R_{i-1})$ and
generate the preflow $f_i$ based on the flow $f_{i-1}$, which is the maximum flow of the network
$\N(s_1,s_2,R_{i-1})$, by
\begin{enumerate}[$\bullet$]
\item setting $f_i(u_1,\overline{u_2})=0$ for all deleted edges
  $(u_1,u_2)\in D_{i-1}$, and
\item reducing $f_i(\overline{u_2}, \sink)$ such that the
  preflow $f_i$ becomes consistent with the (relaxed) flow conservation
  rule.
\end{enumerate}
The excess $e(v)$ is increased if there exists $(v,w)\in D_{i-1}$ such that
$f_{i-1}(v,w)>0$, and unchanged otherwise.  Hence, $f_i$ after
line~\ref{algsmf.reinitall.end} is a preflow.  The distance function
$d_{i-1} = d_i$ is still valid for this preflow, since removing the set of
edges $D_{i-1}$ does not introduce new residual edges.  This guarantees
that, at line~\ref{algsmf.callpreflow}, the \emph{preflow algorithm} finds
a maximum flow over the network $\N(s_1,s_2,R_i)$.  In
line~\ref{algsmf.return}, $\prog{Smf}$ returns whether the flow has value 1
together with information to be reused in the next iteration.  (If
$\size{f_k}<1$ at some iteration $k$, then $\size{f_j}<1$ for all
iterations $j\ge k$ because deleting edges does not increase the maximum
flow.  In that case, it would be sufficient to return $\mathbf{false}$.)
We prove the correctness and complexity of the algorithm $\prog{Smf}$:

\begin{lem}
\label{smf_correctness} 
  Let $(s_1,s_2)\in R_1$.  Then, $\prog{Smf}_\init$ returns true iff $s_1
  \simrel_{R_1} s_2$.
  For some $i>1$, let $\N(s_1,s_2,R_{i-1})$, $f_{i-1}$, and $d_{i-1}$ be as
  returned by some earlier call to $\prog{Smf}$ or $\prog{Smf}_\init$.  Let
  $D_{i-1} = (R_{i-1} \setminus R_{i}) \cap (\post(s_1) \times \post(s_2))$
  be the set of edges that will be removed from the network
  $\N(s_1,s_2,R_{i-1})$ during the $(i-1)$th call of $\prog{Smf}$.  Then,
  the $(i-1)$th call of $\prog{Smf}$ returns true iff $s_1 \simrel_{R_i}
  s_2$.
\end{lem}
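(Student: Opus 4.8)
The plan is to reduce both parts of the claim to the correctness of the maximum-flow computation and then invoke Lemma~\ref{lem:weight_equivalent} together with the definition of $\simrel_R$. Since every pair handled by the algorithm is drawn from $R_1$, we always have $L(s_1)=L(s_2)$; hence $s_1\simrel_{R_i}s_2$ is equivalent to $\Pmu{s_1}\weight_{R_i}\Pmu{s_2}$, which by Lemma~\ref{lem:weight_equivalent} is in turn equivalent to the maximum flow of $\N(s_1,s_2,R_i)$ having value $1$. As both $\prog{Smf}$ and $\prog{Smf}_\init$ return the boolean $\size{f_i}=1$, it suffices to prove that the $f_i$ they return is genuinely a \emph{maximum} flow of $\N(s_1,s_2,R_i)$.

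For $\prog{Smf}_\init$ this is immediate: line~\ref{algsmfinit.callpreflow} runs the preflow algorithm from its standard initialisation on the freshly built network $\N(s_1,s_2,R_1)$, so its output $f_1$ is a maximum flow by correctness of the preflow algorithm. Chaining the equivalences above yields that $\prog{Smf}_\init$ returns true iff $s_1\simrel_{R_1}s_2$, which settles the first part.

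The substance lies in the case $i>1$. Here $\N(s_1,s_2,R_i)$ arises from $\N(s_1,s_2,R_{i-1})$ by deleting exactly the edges $\{(u_1,\overline{u_2})\mid (u_1,u_2)\in D_{i-1}\}$, because by definition of $\N(\cdot)$ an edge $(u_1,\overline{u_2})$ is present iff $(u_1,u_2)\in R_\aux$ with $u_1\in\post(s_1)$, $u_2\in\post(s_2)$, and the $\aux$-pairs are never removed; this matches the hypothesis $D_{i-1}=(R_{i-1}\setminus R_i)\cap(\post(s_1)\times\post(s_2))$. The key step is to show that after the reinitialisation loop (lines~\ref{algsmf.reinitall.start}--\ref{algsmf.reinitall.end}) the pair $(f_i,d_{i-1})$ is a valid preflow together with a valid distance function for the updated network; once this holds, the warm-started preflow algorithm at line~\ref{algsmf.callpreflow} returns a maximum flow of $\N(s_1,s_2,R_i)$, and we conclude as in the initial case. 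That $f_i$ is a preflow is a direct check of the three defining conditions: zeroing $f_i(u_1,\overline{u_2})$ for a deleted edge only raises the excess of the internal vertex $u_1$ (never the source), while the matching decrease of $f_i(\overline{u_2},\sink)$ restores conservation at $\overline{u_2}$ and stays nonnegative, since the total flow withdrawn at $\overline{u_2}$ is bounded by its former outflow $f_{i-1}(\overline{u_2},\sink)=\sum_u f_{i-1}(u,\overline{u_2})$.

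I expect the validity of the distance function to be the main obstacle. One must verify that $d_{i-1}$ still satisfies $d(v)\le d(w)+1$ on \emph{every} residual edge of the new preflow. Deleting $(u_1,\overline{u_2})$ and setting its flow to $0$ can only \emph{remove} residual edges (the forward infinite-capacity edge, and its reverse when $f_{i-1}(u_1,\overline{u_2})>0$), so those deletions are harmless. The delicate point is the decrease of $f_i(\overline{u_2},\sink)$: since a surviving pair carries $\size{f_{i-1}}=1$, all sink edges were saturated, so reducing one of them makes $(\overline{u_2},\sink)$ residual again, and validity then demands $d_{i-1}(\overline{u_2})\le d(\sink)+1=1$. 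The crux is therefore to argue carefully that every vertex $\overline{u_2}$ whose sink flow is reduced indeed has distance label at most $1$ in $d_{i-1}$, exploiting that each withdrawal is caused by a previously positive flow $f_{i-1}(u_1,\overline{u_2})$ through an edge that is now gone, together with the structure of the shallow bipartite network. With this residual-graph bookkeeping around the sink established, Lemma~\ref{lem:weight_equivalent} and the labelling equivalence finish the proof exactly as in the initial case.
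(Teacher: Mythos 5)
Your skeleton coincides with the paper's proof: reduce everything via Lemma~\ref{lem:weight_equivalent} and the observation that $L(s_1)=L(s_2)$ for pairs in $R_1$, dispose of $\prog{Smf}_\init$ by plain preflow correctness, and for $i>1$ argue that after the reinitialisation loop $f_i$ is a preflow and $d_{i-1}$ is a valid distance function, so that the warm-started preflow algorithm at line~\ref{algsmf.callpreflow} returns a maximum flow of $\N(s_1,s_2,R_i)$. Your verification that $f_i$ is a preflow is correct and matches the paper's discussion. The gap is at the step you yourself call the crux: you only \emph{announce} that one must ``argue carefully'' that every $\overline{u_2}$ whose sink flow is reduced satisfies $d_{i-1}(\overline{u_2})\le 1$, and you never supply that argument. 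Worse, the claim you propose to establish is false in general, so this route cannot be completed as stated. Take $\post(s_1)=\{a,b\}$, $\post(s_2)=\{x,y\}$, capacities $\capacity(\source,a)=\capacity(\source,b)=\frac{1}{2}$, $\capacity(\overline{x},\sink)=\capacity(\overline{y},\sink)=\frac{1}{2}$, and middle edges $(a,\overline{x}),(a,\overline{y}),(b,\overline{x})$. A legal run of the preflow algorithm pushes $\frac{1}{2}$ along $\source,a,\overline{x},\sink$, then relabels $b$ to $2$ and pushes into $\overline{x}$, whereupon $\overline{x}$ (its sink edge being saturated) relabels to $2$ and pushes back to $a$, which reroutes via $\overline{y}$. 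This run terminates with $\size{f}=1$, $f(b,\overline{x})=\frac{1}{2}>0$, and final label $d(\overline{x})=2$. If the pair $(b,x)$ is deleted in the next iteration, the reinitialisation zeroes $f(b,\overline{x})$ and $f(\overline{x},\sink)$, so $(\overline{x},\sink)$ becomes residual while $d(\overline{x})=2>1=d(\sink)+1$: the reused labelling is \emph{not} valid, and the bound you hoped for fails.

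To be fair, you have put your finger on a point the paper itself treats too briskly: its proof rests on the assertion that ``removing the set of edges $D_{i-1}$ does not introduce new residual edges,'' which is accurate for the deleted middle edges but not for the sink edges, since $\size{f_{i-1}}=1$ forces every edge $(\overline{u_2},\sink)$ to be saturated, and reducing its flow while its capacity $\P(s_2,u_2)$ stays fixed makes it residual exactly when the deleted edge carried positive flow. (This differs from the Gallo--Grigoriadis--Tarjan setting, where sink capacities decrease \emph{together with} the flow, so no new residual sink edge appears.) So your instinct is sharper than the paper's one-line justification; but since your proposal neither proves the needed validity nor offers a repair --- for instance recomputing or resetting the labels of the affected sink-adjacent vertices, or arguing directly about the relabel operation $d(v):=\min\{d(w)+1\mid (v,w)\in E_f\}$ applied to an initially invalid labelling --- it does not yet constitute a proof of the lemma.
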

\begin{proof}
  By Lemma~\ref{lem:weight_equivalent}, $\prog{Smf}_\init$ returns true iff
  $|f_1|=1$, which is equivalent to $s_1 \simrel_{R_1} s_2$.  Let $i>1$. As
  discussed, at the beginning of line~\ref{algsmf.callpreflow}, the
  function $f_{i-1}$ is a flow (thus a preflow) with value 1, and the
  distance function $d_{i-1}$ is a valid distance function.  It follows
  directly from the correctness of the preflow algorithm~\cite{AhujaOST94}
  that after line~\ref{algsmf.callpreflow}, $f_i$ is a maximum flow for
  $\N(s_1,s_2,R_i)$. Thus, $\prog{Smf}$ returns true (i.e. $|f_i|=1$) which
  is equivalent to $s_1 \simrel_{R_i} s_2$.
\end{proof}
\begin{lem}
\label{smf_complexity}
  Consider the pair $(s_1,s_2)$ and assume that $\size{\post(s_1)}
  \le\size{\post(s_2)}$.  All calls to $\prog{Smf}(i,\N(s_1,s_2, \cdot),
  \cdots)$ related to $(s_1,s_2)$ together run in time
  $\O(\size{\post(s_1)} \size{\post(s_2)}^2)$.
\end{lem}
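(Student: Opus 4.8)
The key observation is that this is a classic amortized-complexity argument for parametric maximum flow, so the plan is to bound the total work across all $k$ iterations rather than bounding each iteration separately. I need to track two things: the cost of the first call via $\prog{Smf}_\init$, and the cumulative cost of all subsequent $\prog{Smf}$ calls, which reuse the preflow and distance function from the previous iteration.

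Let me look at what quantities matter. The network $\N(s_1,s_2,R_i)$ has vertex set $\{\source,\sink\}\cup\support_\aux(\Pmu{s_1})\cup\overline{\support_\aux(\Pmu{s_2})}$, so $|V| = O(\size{\post(s_1)}+\size{\post(s_2)}) = O(\size{\post(s_2)})$ under the assumption. The target bound is $O(\size{\post(s_1)}\size{\post(s_2)}^2)$. If I think of $|V| \approx \size{\post(s_2)}$, then $O(|V|^3)$ would give $\size{\post(s_2)}^3$, which is too big. So the refinement to $\size{\post(s_1)}\size{\post(s_2)}^2$ must come from the bipartite structure — the source side has only $O(\size{\post(s_1)})$ vertices.

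Now let me recall the parametric max-flow analysis of Gallo-Grigoriadis-Tarjan. The crucial fact about the preflow (push-relabel) algorithm is that distance labels are monotonically nondecreasing and bounded, so relabel work and the number of saturating pushes can be amortized across the whole sequence of parameter changes, *provided* the parameter changes monotonically (here: edges are only deleted, capacities to the sink only decrease, so the max flow only decreases — this is exactly the monotone setting GGT needs). So the total relabeling work over all iterations is $O(|V|^2)$ (each of $|V|$ labels rises at most $O(|V|)$ times), saturating pushes total $O(|V|\cdot|E|)$, and nonsaturating pushes — normally the bottleneck — total $O(|V|^2\sqrt{|E|})$ or $O(|V|^3)$ depending on the selection rule, but summed over the whole sequence rather than per-iteration.

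Here is the plan I would carry out. First, establish that the sequence of networks is monotone: across iterations only edges of $E$ are deleted (via the $D_{i-1}$ sets) and capacities into $\sink$ only decrease, so the maximum flow value is nonincreasing; this is what licenses treating the whole sequence as one parametric-flow run. Second, verify the precondition that $\prog{Smf}$ at line~\ref{algsmf.callpreflow} always starts from a valid preflow with a valid distance function (this was already argued in the text preceding the lemma, and I would cite it) — so the distance labels never need to be reset between iterations and genuinely accumulate monotonically across the entire sequence. Third, and this is the main quantitative step, I would bound the three types of work. Relabelings: each vertex's label goes from its initial value up to at most $2|V|-1$, so total relabel operations and the work to scan for admissible edges is $O(|V|\cdot|E|) = O(\size{\post(s_2)}\cdot\size{\post(s_1)}\size{\post(s_2)}) = O(\size{\post(s_1)}\size{\post(s_2)}^2)$, using $|E| = O(\size{\post(s_1)}\size{\post(s_2)})$ from the bipartite complete-ish structure. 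Saturating pushes are likewise $O(|V|\cdot|E|)$ by the standard argument (between two consecutive saturating pushes on an edge, both endpoints must be relabeled). Fourth, bound the nonsaturating pushes, which is the real obstacle.

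**The hard part** is bounding nonsaturating pushes by $O(\size{\post(s_1)}\size{\post(s_2)}^2)$ over the entire sequence. The generic push-relabel bound is $O(|V|^2|E|)$, far too large; I need the sharper $O(|V|^3)$-style analysis (e.g.\ the FIFO or highest-label discharge potential-function argument), but adapted so that (a) it is amortized across all iterations of the parametric sequence, not charged afresh each iteration, and (b) it exploits the bipartite asymmetry to replace one factor of $|V|=\size{\post(s_2)}$ by $\size{\post(s_1)}$. For point (a), the standard potential-function argument (e.g.\ $\Phi = \sum_{\text{active }v} d(v)$) works across iterations precisely because deleting edges and lowering sink capacities only resets excess on the deletion endpoints and never decreases distance labels, so the potential cannot jump up uncontrollably between iterations; I would sum the increases of $\Phi$ from relabelings and saturating pushes over the whole run and charge nonsaturating pushes against its decreases. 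For point (b), I would use that all flow must pass through the $O(\size{\post(s_1)})$ source-side vertices and that the bipartite depth is small, so that the relevant potential is bounded by $|V|\cdot\size{\post(s_1)}$ rather than $|V|^2$. Putting the three bounds together, and adding the initial $\prog{Smf}_\init$ cost of $O(|V|^3/\log|V|) = O(\size{\post(s_2)}^3/\log\size{\post(s_2)})$ — which I would need to confirm is dominated by, or can be sharpened to, the claimed bound, most plausibly by observing that the first computation can itself be run within $O(\size{\post(s_1)}\size{\post(s_2)}^2)$ via the same bipartite refinement — yields the stated total of $O(\size{\post(s_1)}\size{\post(s_2)}^2)$.
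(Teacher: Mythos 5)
Your high-level framing agrees with the paper's proof: the whole call sequence is treated as a single monotone parametric run (only edge deletions, the preflow and distance function carried over, labels never reset), the work is split into edge deletions, relabels, saturating and nonsaturating pushes, and the initial $\prog{Smf}_\init$ computation is simply folded into the same amortized analysis rather than charged separately. But the step you yourself flag as the hard part---the nonsaturating pushes---is exactly where the proposal fails. The potential $\Phi=\sum_{\text{active }v}d(v)$ you commit to cannot deliver $\O(\size{\post(s_1)}\size{\post(s_2)}^2)$: writing $V_1=\post(s_1)\cup\{\sink\}$ and $V_2=\post(s_2)\cup\{\source\}$ as in the paper, every saturating push may activate a fresh node and raise $\Phi$ by up to the label bound $\Theta(\size{V_1})$, and since there are $\O(\size{V_1}\size{E})$ saturating pushes, the cumulative increase of $\Phi$---hence the number of nonsaturating pushes chargeable to its decrease---is only bounded by $\O(\size{V_1}^2\size{E})=\O(\size{V_1}^3\size{V_2})$, which exceeds the target by a factor $\size{V_1}^2/\size{V_2}$ (e.g.\ $n^4$ versus $n^3$ when the two sides are balanced). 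Your remark that ``the relevant potential is bounded by $\size{V}\cdot\size{\post(s_1)}$'' bounds $\Phi$ at one instant, which controls nothing; what must be bounded is its total increase over the run.

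The paper's proof avoids the potential function altogether and uses the highest-label (Max-d) selection discipline with the counting argument of Gusfield and Tarjan: between two consecutive relabel operations at most $\size{V}$ nonsaturating pushes can occur (each selected highest active node is fully discharged and contributes at most one), and in this bipartite network every distance label stays below $4\size{V_1}$ (with $d(\source)$ initialised to $2\size{V_1}$), so the total number of relabels is $\O(\size{V_1}\size{V})$ and the nonsaturating pushes up to the last relabel number $\O(\size{V_1}\size{V}^2)$. After the last relabel, each remaining iteration contributes at most $\size{V}-1$ further nonsaturating pushes, and here a second ingredient absent from your proposal is essential: the number of nontrivial iterations satisfies $k\le\size{E}\le\size{V_1}\size{V_2}-1$ because each such iteration deletes at least one edge, so this tail costs $\O(k\size{V})=\O(\size{V_1}\size{V_2}^2)$. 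To repair your argument you would need (i) to fix the highest-label rule and replace the potential bookkeeping by the between-relabels count, (ii) to invoke the bipartite label bound $4\size{V_1}$ on the \emph{smaller} side (guaranteed by the hypothesis $\size{\post(s_1)}\le\size{\post(s_2)}$) rather than the generic $2\size{V}-1$ you use, and (iii) to bound the iteration count by the edge count; as it stands, the decisive quantitative step is asserted but not achievable by the method you describe.
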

\begin{proof}
  In the bipartite network $\N(s_1,s_2,R_1)$, the set of vertices are
  partitioned into subsets $V_1 = \post(s_1) \cup \{ \sink \}$ and
  $V_2 = \post(s_2) \cup \{ \source \}$ as described in
  Section~\ref{sec:strong_basic}.  Generating the initial network
  (line~\ref{algsmfinit.start}) takes time in $\O(\size{V_1}
  \size{V_2})$. In our sequence of maximum flow problems, the number
  of (nontrivial) iterations, denoted by $k$, is bounded by the number
  of edges, i.\,e., $k \le \size{E} \le \size{V_1} \size{V_2} - 1$.
  We split the work being done by all calls to
  $\prog{Smf}(i,\N(s_1,s_2, \cdot), \ldots)$ together with the initial
  call to the preflow algorithm (line~\ref{algsmfinit.callpreflow} and
  line~\ref{algsmf.callpreflow}) into edge deletions, relabels,
  non-saturating pushes, saturating pushes.  (A non-saturating push
  along an edge $(u,v)$ moves all excess at $u$ to $v$; by such a
  push, the number of active nodes never increases.)
  
  All edge deletions take time proportional to $\sum_{i=1}^k
  \size{D_i}$, which is less than the number of edges in the network.
  Therefore, edge deletions take time $\O(|V_1||V_2|)$.  For all $v\in
  V$, it holds that $d_{i+1}(v)=d_i(v)$, i.e., the labelling function
  at the beginning of iteration $i+1$ is the same as the labelling
  function at the end of iteration $i$. 
  
  We discuss the time for relabelling and saturating
  pushes~\cite{AhujaOST94}. For a bipartite network, the distance of
  the source can be initialised to $d(\source) = 2\size{V_1}$ instead
  of $\size{V}$, and $d(v)$ never grows above $4\size{V_1}$ for all
  $v\in V$.  For $v\in V$, let $I(v)$ denote the set of nodes
  containing $w$ such that either $(v,w)\in E$ or $(w,v)\in E$.
  Intuitively, it represents edges which could be admissible leaving
  $v$.  The time for relabel operations with respect to node $v$ is
  thus $(4\size{V_1})|I(v)|$. Altogether, this gives the time for all
  relabel operations: $\sum_{v\in V}((4\size{V_1})|I(v)|) \in
  \O(\size{V_1}\size{E})$.  Between two consecutive saturating pushes
  on $(v,w)$, the distances $d(v)$ and $d(w)$ must increase by $2$.
  Thus, the number of saturating pushes on edge $(v,w)$ is bounded by
  $4 \size{V_1}$.  Summing over all edges, the work for saturating
  pushes is bounded by $\O(\size{V_1}\size{E})$.
  
  Now we discuss the analysis of the number of non-saturating pushes,
  which is very similar to the proof of Theorem~2.2
  in~\cite{GusfieldT94} where Max-d version of the algorithm is used.
  Assume that in iteration $l\le k$ of $\prog{Smf}$, the last
  relabelling action occurs.  In the Max-d version~\cite{GusfieldT94},
  always the active node with the highest label is selected, and once
  an active node is selected, the excess of this node is pushed until
  it becomes $0$.  This implies that, between any two relabel
  operations, there are at most $n$ active nodes processed (otherwise
  the algorithm terminates and we get the maximum flow). Also observe
  that at each time an active node is selected, at most one
  non-saturating push can occur, which implies that there are at most
  $n$ non-saturating pushes between node label increases. Since
  $d_i(v)$ is bounded by $4|V_1|$, the number of relabels altogether
  is bounded by $\O(|V_1||V|)$.  Thus, the number of non-saturating
  pushes before the iteration $l$ is bounded by $\O(|V_1||V|^2)$.
  Since the distance function does not change after iteration $l$ any
  more, inside any of the iterations $l'\ge l$, there are again at
  most $n-1$ non-saturating pushes.  Hence, the number of
  non-saturating pushes is bounded by $|V_1||V|^2+(k+1-l)(|V|-1) \in
  \O(|V_1||V|^2+k|V|)$. Since $k\le |V_1||V_2|-1$, and $|V|\le
  2|V_2|$, thus, the overall time complexity amounts to
  $\O(|V_1||V_2|^2)= \O(\size{\post(s_1)} \size{\post(s_2)}^2)$ as
  required.
\end{proof}

Now we give the correctness and complexity of the algorithm
$\prog{SimRel}$ for FPSs:

\begin{thm}
\label{thm:fpscorrectness}
If $\prog{SimRel}^\mathrm{FPS}_s(\D)$ terminates, the returned
relation equals $\simrel_\M$. 
\end{thm}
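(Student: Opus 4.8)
The statement asserts partial correctness of the improved algorithm $\prog{SimRel}^\mathrm{FPS}_s$: \emph{if} it terminates, its output equals $\simrel_\M$. The plan is to show that $\prog{SimRel}^\mathrm{FPS}_s$ computes exactly the same sequence of relations $R_1, R_2, \ldots$ as the basic algorithm $\prog{SimRel}_s$ of Algorithm~\ref{fig:simfps}, and then invoke Theorem~\ref{thm:correctness_basic} to conclude. Since partial correctness of the basic algorithm is already established, the entire burden is to prove that the two algorithms agree iteration by iteration on the relation $R_{i+1}$ they build from a given $R_i$.

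First I would observe that the initial relation $R_1 = \{(s_1,s_2) \mid L(s_1)=L(s_2)\}$ is identical in both algorithms (line~\ref{algsrf:initR} versus the first line of Algorithm~\ref{fig:simfps}), and that the first iteration of $\prog{SimRel}^\mathrm{FPS}_s$ (lines~\ref{algsrf:iteration1.start}--\ref{algsrf:iteration1.end}) inserts $(s_1,s_2)$ into $R_2$ exactly when $\prog{Smf}_\init$ returns true, which by the first part of Lemma~\ref{smf_correctness} happens iff $s_1 \simrel_{R_1} s_2$ --- precisely the test in line~\ref{algsim:checksimrel} of the basic algorithm. Hence $R_2$ agrees in both. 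I would then proceed by induction on $i$: assuming both algorithms have produced the same $R_1, \ldots, R_i$, I must show they produce the same $R_{i+1}$. In iteration $i>1$ of $\prog{SimRel}^\mathrm{FPS}_s$, the pair $(s_1,s_2)$ is inserted into $R_{i+1}$ (line~\ref{algsrf:insertnew}) iff the call to $\prog{Smf}$ at line~\ref{algsrf:callsmf} returns true, which by the second part of Lemma~\ref{smf_correctness} holds iff $s_1 \simrel_{R_i} s_2$ --- again the same test as in the basic algorithm. So $R_{i+1}$ agrees, and the induction closes.

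The step I expect to require the most care is verifying that the precondition of Lemma~\ref{smf_correctness} is actually met at each call, namely that the set of edges deleted in iteration $i-1$ equals $D_{i-1}^{(s_1,s_2)} = (R_{i-1} \setminus R_i) \cap (\post(s_1) \times \post(s_2))$. The algorithm does not compute this set directly; instead it distributes the globally deleted pairs $D_{i-1} = R_{i-1} \setminus R_i$ into the per-pair sets $D_{i-1}^{(u_1,u_2)}$ via the $\listener$ construction (lines~\ref{algsrf:setDs1s2.start}--\ref{algsrf:setDs1s2.end}). The argument here is that $(s_1,s_2) \in \listener_{(u_1,u_2)}$ exactly when $s_1 \in \pre(u_1)$ and $s_2 \in \pre(u_2)$ (with matching labels), i.\,e.\@ exactly when the edge $(s_1,\overline{s_2})$ is present in the network $\N(u_1,u_2,R_1)$, so the scatter step places a deleted pair $(s_1,s_2)$ into $D_{i-1}^{(u_1,u_2)}$ iff that pair is one of $(u_1,u_2)$'s relevant successor edges. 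I would therefore prove the edge-set identity $D_{i-1}^{(u_1,u_2)} = (R_{i-1}\setminus R_i) \cap (\post(u_1)\times\post(u_2))$ as a lemma before applying Lemma~\ref{smf_correctness}, checking also that a deleted pair not in $\post(u_1)\times\post(u_2)$ leaves the network $\N(u_1,u_2,\cdot)$ genuinely unchanged (its edge is absent), so that reusing $f_{i-1}, d_{i-1}$ unchanged is sound.

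Finally, I would tie together the two pieces: the invariant ``$R_i$ computed by $\prog{SimRel}^\mathrm{FPS}_s$ equals $R_i$ computed by $\prog{SimRel}_s$'' plus the matching termination condition (both loops stop exactly when $R_{i+1} = R_i$) shows that, upon termination, $\prog{SimRel}^\mathrm{FPS}_s$ returns the same relation as $\prog{SimRel}_s$, which by Theorem~\ref{thm:correctness_basic} equals $\simrel_\M$. Note that this theorem makes no claim about \emph{whether} the algorithm terminates --- that is deferred --- so I need not address the fix-point's reachability here; the hypothesis ``if it terminates'' grants me the terminal iteration $k$ with $R_{k+1}=R_k$ directly.
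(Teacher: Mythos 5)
Your proposal is correct and follows essentially the same route as the paper's own proof: both reduce the claim to Lemma~\ref{smf_correctness} (the per-iteration test returned by $\prog{Smf}_\init$/$\prog{Smf}$ is equivalent to $s_1 \simrel_{R_i} s_2$) and then reuse the fix-point argument of Theorem~\ref{thm:correctness_basic}. Your extra verification that the $\listener$-based scatter step yields exactly $D_{i-1}^{(u_1,u_2)} = (R_{i-1}\setminus R_i) \cap (\post(u_1)\times\post(u_2))$, so that the precondition of Lemma~\ref{smf_correctness} is met, is a point the paper leaves implicit, and it is argued correctly.
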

\begin{proof}
  By Lemma~\ref{smf_correctness}, $\prog{Smf}_\init(i, s_1, s_2, R_1)$
  returns true in iteration $i=1$ iff $s_1 \simrel_{R_1} s_2$;
  $\prog{Smf}(i, \N(s_1,s_2,R_{i-1}), \ldots)$ returns true in iteration
  $i>1$ iff $s_1 \simrel_{R_i} s_2$.  The rest of the correctness proof is
  the same as the proof of Theorem~\ref{thm:correctness_basic}.
\end{proof}

\begin{thm}
  The algorithm $\prog{SimRel}^\mathrm{FPS}_s(\D)$ runs in time
  $\O(m^2n)$ and in space $\O(m^2)$.  If the fanout is bounded by a
  constant, it has complexity $\O(n^2)$, both in time and space.
\end{thm}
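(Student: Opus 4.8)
Since correctness is already supplied by Theorem~\ref{thm:fpscorrectness}, the plan is to bound only the running time and the space. I would split the total time into the cost of all maximum-flow computations carried out by $\prog{Smf}$/$\prog{Smf}_\init$ and the bookkeeping overhead (initialising $R_1$, building and updating the listener sets, maintaining the sets $D_{i-1}^{(s_1,s_2)}$, and updating $R$). The first part is controlled pairwise by Lemma~\ref{smf_complexity}, so the heart of the argument is to sum those pairwise bounds over all pairs and then check that the bookkeeping is asymptotically dominated.

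For the flow cost, fix a pair $(s_1,s_2)\in R_1$ and assume without loss of generality that $\size{\post(s_1)}\le\size{\post(s_2)}$. Lemma~\ref{smf_complexity} bundles all calls of $\prog{Smf}$ for this pair, across every iteration it survives, into $\O(\size{\post(s_1)}\,\size{\post(s_2)}^{2})$. As $\size{\post(s_2)}\le n$, this is at most $\O(n\cdot\size{\post(s_1)}\,\size{\post(s_2)})$. Summing over all pairs and using
\[
\sum_{(s_1,s_2)\in S\times S}\size{\post(s_1)}\,\size{\post(s_2)}
= \Bigl(\sum_{s\in S}\size{\post(s)}\Bigr)^{2} = m^{2},
\]
the total flow cost is $\O(m^{2}n)$.

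Next I would verify that the bookkeeping stays within $\O(m^{2}n)$. Initialising $R_1$ costs $\O(n^{2})$. Each listener set obeys $\size{\listener_{(s_1,s_2)}}\le\size{\pre(s_1)}\,\size{\pre(s_2)}$, and since $\sum_{s\in S}\size{\pre(s)}=m$, the total construction time and storage of all listener sets is $\sum_{(s_1,s_2)}\size{\pre(s_1)}\,\size{\pre(s_2)}=m^{2}$. Distributing deleted pairs into the sets $D_{i-1}^{(u_1,u_2)}$ processes each deleted pair once and iterates over its listeners; as every pair is deleted at most once, this totals $\O\bigl(\sum_{(s_1,s_2)}\size{\listener_{(s_1,s_2)}}\bigr)=\O(m^{2})$. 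The key point is that the listener sets let me (re)invoke $\prog{Smf}$, and reset the corresponding $D$-sets, only for the \emph{affected} pairs, i.e.\ those with a nonempty $D_{i-1}^{(s_1,s_2)}$; this avoids a naive per-iteration scan of $R_i$, which would otherwise contribute $\O(n^{4})$. Hence the bookkeeping is $\O(m^{2})$ and is dominated by the flow cost.

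For space, the parametric approach keeps, for every pair $(s_1,s_2)\in R_1$ at once, the network $\N(s_1,s_2,\cdot)$ together with its flow and distance functions, each of size $\O(\size{\post(s_1)}\,\size{\post(s_2)})$; summed over pairs this is $\O(m^{2})$, and the listener sets add another $\O(m^{2})$, giving $\O(m^{2})$ space. Finally, for bounded fanout $g$ I would not use the loose $\O(m^{2}n)$ bound but rather the tighter per-pair estimate: $\size{\post(s)}\le g$ makes each pair's flow cost $\O(g^{3})=\O(1)$ and each stored network $\O(1)$, so with at most $n^{2}$ pairs we get $\O(n^{2})$ in both time and space (consistently, $m\le gn$ gives $m^{2}=\O(n^{2})$). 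The main obstacle is precisely this accounting that the bookkeeping never dominates, which rests on the listener mechanism confining recomputation to affected pairs so that the summed bound of Lemma~\ref{smf_complexity} governs.
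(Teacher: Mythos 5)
Your proof is correct and follows essentially the same route as the paper's: you sum the amortised bound of Lemma~\ref{smf_complexity} over all pairs to obtain $\O(m^2n)$, bound the bookkeeping (listener sets, $D$-sets) and the space (stored networks with flow and distance functions) by $\O(m^2)$, and specialise to bounded fanout just as the paper does --- the only cosmetic difference being that you bound the cubic factor $\size{\post(s_2)}$ by $n$ where the paper bounds it by the fanout $g\le n$. Your explicit observation that recomputation should be confined to pairs with nonempty $D_{i-1}^{(s_1,s_2)}$ is in fact slightly more careful than the paper's proof, which silently absorbs the per-iteration scan of $R_i$ into the stated bound.
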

\begin{proof}
We first show  the space complexity.  In most cases, it is enough to store
information from the previous iteration until the corresponding structure
for the current iteration is calculated.  The size of the set
$\listener_{(s_1,s_2)}$ is bounded by $\size{\pre(s_1)} \size{\pre(s_2)}$
where $\pre(s)=\{t\in S\mid \P(t,s)>0\}$.  Summing over all $(s_1,s_2)$, we
get
$\sum_{s_1\in S}\sum_{s_2\in S}
    \size{\pre(s_1)} \size{\pre(s_2)}
    = m^2
$.
Assume we run iteration $i$.  For every pair $(s_1,s_2)$, we generate
the set $D_{i-1}^{(s_1,s_2)}$ and the network $\N(s_1,s_2,R_i)$
together with $f_i$ and $d_i$.  Obviously, the size of
$D_{i-1}^{(s_1,s_2)}$ is bounded by
$\size{\post(s_1)}\size{\post(s_2)}$.  Summing over all $(s_1,s_2)$,
we get the bound $\O(m^2)$.  The number of edges of the network
$\N(s_1,s_2,R_1)$ (together with $f_i$ and $d_i$) is in
$\O(\size{\post(s_1)} \size{\post(s_2)})$.  Summing over all
$(s_1,s_2)$ yields a memory consumption in $\O(m^2)$ again.  Hence,
the overall space complexity is $\O(m^2)$.

Now we show the time complexity.  We observe that a pair $(s_1,s_2)$
belongs to $D_i$ in at most one iteration. Therefore, the time needed in
lines~\ref{algsrf:setDs1s2.start}--\ref{algsrf:setDs1s2.end} in all
iterations together is bounded by the size of all sets
$\listener_{(s_1,s_2)}$, which is $\O(m^2)$.  We analyse the time needed
for all calls to the algorithm $\prog{Smf}$.  Recall that the fanout $g$
equals $\max_{s\in S}\size{\post(s)}$, and therefore $\size{\post(s_i)}
\leq g$ for $i=1,2$.  By Lemma~\ref{smf_complexity}, the complexity
attributed to the pair $(s_1,s_2)$ is bounded by $\O(g \size{\post(s_1)}
\size{\post(s_2)})$.  Taking the sum over all possible pairs, we get the
bound $gm^2\in \O(m^2n)$.  If $g$ is bounded by a constant, we have $m\le
gn$, and the time complexity is $gm^2\le g^3n^2\in
\O(n^2)$. In this case the space complexity is also $\O(n^2)$.
\end{proof}

\paragraph{Strong Simulation for Markov Chains.}
We now consider DTMCs and CTMCs.  Since each DTMC is a special case of an
FPS the algorithm $\prog{SimRel}^\mathrm{FPS}_s$ applies directly.

Let $\ctmc$ be a CTMC. Recall that $s\simrel_\M s'$ holds if
$s\simrel_{emb(\M)} s'$ in the embedded DTMC, and $s'$ is faster than
$s$. We can ensure the additional rate condition by incorporating it into
the initial relation $R$. More precisely, initially $R$ contains only those
pair $(s,s')$ such that $L(s)=L(s')$, and that the
state $s'$ is faster than $s$, \ie, we replace line~\ref{algsrf:initR} of
the algorithm by
\[
R_1 \gets \{(s_1,s_2)\in S\times S \mid L(s_1)=L(s_2)
\wedge \R(s_1,S)\le \R(s_2,S)\}
\]
to ensure the additional rate condition of
Definition~\ref{def:strongsimulation-ctmc}. In the refinement steps
afterwards, only the weight function conditions need to be checked with
respect to the current relation in the embedded DTMC. Thus, we arrive at an
algorithm for CTMCs with the same time and space complexity as for FPSs.

\begin{exa}
  Consider the CTMC in the left part of Figure~\ref{fig:ctmcsmf} (it has 10
  states).  Consider the pair $(s_1,s_2)\in R_1$.  The network
  $\N(s_1,s_2,R_1)$ is depicted on the right of the figure. Assume that we
  get the maximum flow $f_1$ which sends $\frac{1}{2}$ amount of flow along
  the path $\source, u_2,\overline{u_4},\sink$ and $\frac{1}{2}$ amount of
  flow along $\source,u_1,\overline{u_3},\sink$. Hence, the check for
  $(s_1,s_2)$ is successful in the first iteration. The checks for the
  pairs $(u_1,u_3)$, $(u_1,u_4)$ and $(u_2,u_3)$ are also successful in the
  first iteration. However, the check for the pair $(u_2,u_4)$ fails, as
  the probability to go from $u_4$ to $q_3$ in the embedded DTMC is
  $\frac{2}{5}$, while the probability to go from $u_2$ to $q_1$ in the
  embedded DTMC is $1$.

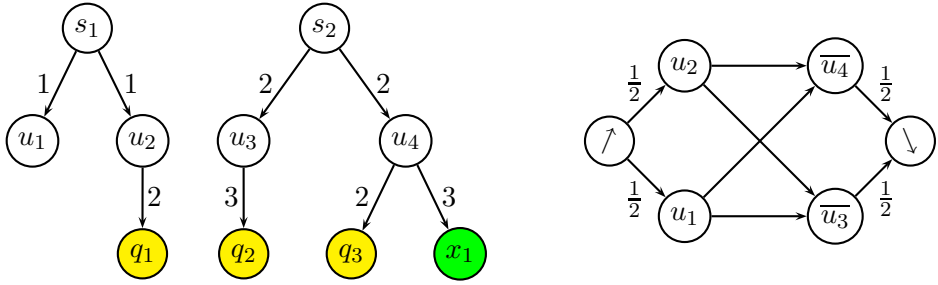
\begin{figure}[tbp]  
\begin{center}
\psset{levelsep=1.5cm,arrows=->,labelsep=1pt}
\pstree{\Tcircle{$s_1$}}{%
  \Tcircle{$u_1$}  \tlput{$1$}
  \pstree{\Tcircle{$u_2$} \trput{$1$} }{%
    \Tcircle[fillstyle=solid,fillcolor=yellow]{$q_1$} \trput{$2$}
  }
}\hspace{.5cm}
\pstree{\Tcircle{$s_2$}}{%
  \pstree{\Tcircle{$u_3$} \tlput{$2$} }{%
    \Tcircle[fillstyle=solid,fillcolor=yellow]{$q_2$}  \tlput{$3$}
  }
  \pstree{\Tcircle{$u_4$} \trput{$2$} }{%
    \Tcircle[fillstyle=solid,fillcolor=yellow]{$q_3$} \tlput{$2$}
    \Tcircle[fillstyle=solid,fillcolor=BlueGreen]{$x_1$} \trput{$3$}
  }
}\hspace{1.5cm}
\begin{pspicture}(0,2.5)(4, 2.5)
  \rput(1,0){\circlenode{u1}{$u_1$}}
  \rput(1,2){\circlenode{u2}{$u_2$}}
  \rput(3,0){\circlenode{u3}{$\overline{u_3}$}}
  \rput(3,2){\circlenode{u4}{$\overline{u_4}$}}
  \rput(0,1){\circlenode{source}{$\source$}}
  \rput(4,1){\circlenode{sink}{$\sink$}}
  \ncline{source}{u1} \nbput{$\frac{1}{2}$}
  \ncline{source}{u2} \naput{$\frac{1}{2}$}
  \ncline{u3}{sink} \nbput{$\frac{1}{2}$}
  \ncline{u4}{sink} \naput{$\frac{1}{2}$}
  \ncline{u1}{u3}
  \ncline{u1}{u4}
  \ncline{u2}{u3}
  \ncline{u2}{u4}
\end{pspicture}
\end{center}
  \caption{A CTMC example and its network $\N(s_1, s_2, R_1)$.}
  \label{fig:ctmcsmf}
\end{figure}

  In the second iteration, the network $\N(s_1,s_2,R_2)$ is obtained from
  $\N(s_1,s_2,R_1)$ by deleting the edge $(u_2,\overline{u_4})$.  In
  $\N(s_1,s_2,R_2)$, the flows on $(u_2,\overline{u_4})$ and on
  $(\overline{u_4},\sink)$ are set to $0$, and the vertex $u_2$ has a
  positive excess $\frac{1}{2}$.  Applying the preflow algorithm, we push
  the excess from $u_2$, along $\overline{u_3}, u_1, \overline{u_4}$ to
  $\sink$. We get a maximum flow $f_2$ for $\N(s_1,s_2,R_2)$ which sends
  $\frac{1}{2}$ amount of flow along the path $\source,
  u_2,\overline{u_3},\sink$ and $\frac{1}{2}$ amount of flow along
  $\source, u_1, \overline{u_4}, \sink$. Hence, the check for $(s_1,s_2)$
  is also successful in the second iteration. Once the fix-point is
  reached, $R$ still contains $(s_1,s_2)$.
\end{exa}

\subsection{Strong Simulation for Probabilistic Automata}
\label{sec:pp}
In this subsection we present algorithms for deciding strong
simulations for PAs and CPAs. It takes the skeleton of the algorithm
for FPSs: it starts with a relation $R$ which is coarser than $\simrel$,
and then refines $R$ until $\simrel$ is achieved. In the refinement
loop, a pair $(s,s')$ is eliminated from the relation $R$ if the
corresponding strong simulation conditions are violated with respect
to the current relation. For PAs, this means that there exists an
$\alpha$-successor distribution $\mu$ of $s$, such that for all
$\alpha$-successor distribution $\mu'$ of $s'$, we cannot find a
weight function for $(\mu,\mu')$ with respect to the current relation
$R$.

Let $\pa$ be a PA. We aim to extend Algorithm~\ref{fig:simfpssmf} to
determine the strong simulation on PAs. For a pair $(s_1,s_2)$, assume that
$L(s_1)=L(s_2)$ and that $\mathit{Act}(s_1)\subseteq\mathit{Act}(s_2)$,
which is guaranteed by the initialisation. We consider
line~\ref{algsrf:testmatch}, which checks the condition
$\Pmu{s_1}\weight_{R_i} \Pmu{s_2}$ using $\prog{Smf}$. By
Definition~\ref{def:simple_ss} of strong simulation for PAs, we should
instead check the condition
\begin{equation}
\label{eq:checkss}
\forall \alpha \in \mathit{Act}.\ 
\forall s_1\transby{\alpha}\mu_1.\ \exists s_2\transby{\alpha}\mu_2
\text{ with } \mu_1\weight_{R_i}\mu_2 
\end{equation}
Recall the condition $\mu_1\weight_{R_i}\mu_2$ is true iff the maximum flow
of the network $\N(\mu_1,\mu_2,R_i)$ has value one.  Sometimes, we write
$\N(s_1,\alpha,\mu_1,s_2,\mu_2,R_i)$ to denote the network
$\N(\mu_1,\mu_2,R_i)$ associated with the pair $(s_1,s_2)$ with respect to
action $\alpha$.

Our first goal is to extend $\prog{Smf}$ to check
Condition~\ref{eq:checkss} for a fixed action $\alpha$ and
$\alpha$-successor distribution $\mu_1$ of $s_1$.  To this
end, we introduce a list $\mathit{Sim}^{(s_1,\alpha,\mu_1,s_2)}$ that
contains all potential candidates of $\alpha$-successor distributions
of $s_2$ which could be used to establish the condition
$\mu_1\weight_R \mu_2$ for the relation $R$ considered.  The set
$\mathit{Sim}^{(s_1,\alpha,\mu_1,s_2)}$ is represented as a list.  This and
some subsequent notations are similar to those used by Baier \et
in~\cite{BEMC00}.  We use the function \textbf{head}$(\cdot)$ to read
the first element of a list; \textbf{tail}$(\cdot)$
to read all but the first element of a list; and
\textbf{empty}$(\cdot)$ to check whether a list is empty.  As long as
the network for a fixed candidate $\mu_2 =
\mathbf{head}(\mathit{Sim}^{(s_1,\alpha,\mu_1,s_2)})$ allows a flow
of value $1$ over the iterations, we stick to it, and we can reuse the flow
and distance function from previous iterations.  If by deleting some edges
from $\N(\mu_1,\mu_2,R)$, its flow value falls below 1, we delete $\mu_2$
from $\mathit{Sim}^{(s_1,\alpha,\mu_1,s_2)}$ and pick the next candidate.

\begin{algorithm}[tbp]
  \caption{Subroutine to calculate whether $s_1 \simrel_{R_i} s_2$,
           as far as $s_1 \transby{\alpha} \mu_1$ is concerned. The
parameter $Sim$ denotes the subsets of $\alpha$-successor distributions
of $s_2$ serving as candidates for possible $\mu_2$.}
  \label{fig:actionsmf}
    \Procname{$\prog{ActSmf}(i, Sim_{i-1},\N(\mu_1,\mu_2,R_{i-1}), f_{i-1}, d_{i-1}, D_{i-1})$}
  \begin{algorithmic}[1]
        \STATE  $Sim_i \gets Sim_{i-1}$
        \STATE  $(\mathit{match}, \N(\mu_1,\mu_2,R_i), f_i, d_i)
                \gets
                \prog{Smf}(i, \N(\mu_1,\mu_2,R_{i-1}), f_{i-1},
                d_{i-1}, D_{i-1})$    \label{algactsmf:callsmf}
        \IF{$\mathit{match}$}
                \STATE  \textbf{return} $(\mathbf{true}, Sim_i, \N(\mu_1,\mu_2,R_i), f_i, d_i)$
        \ENDIF
        \STATE  $Sim_i \gets \mathbf{tail}(Sim_i)$   \label{algactsmf:tailSim_1}
        \WHILE{$\neg \mathbf{empty}(Sim_i)$}
                \label{algactsmf:while.start}
                \STATE  $\mu_2 \gets \mathbf{head}(Sim_i)$   \label{algactsmf:reinit.start}
                \STATE  $(\mathit{match}, \N(\mu_1,\mu_2,R_i), f_i, d_i)
                \gets
                \prog{Smf}_\init(i, \mu_1, \mu_2, R_i)$ 
                \IF{$\mathit{match}$}
                        \STATE  \textbf{return} $(\mathbf{true}, Sim_i, \N(\mu_1,\mu_2,R_i), f_i, d_i)$
                \ENDIF
                \STATE  $Sim_i \gets \mathbf{tail}(Sim_i)$   \label{algactsmf:tailSim}
        \ENDWHILE
        \STATE  \textbf{return} $(\mathbf{false},\emptyset,\mathit{NIL},\mathit{NIL},\mathit{NIL})$
                \label{algactsmf:returnfalse}
  \end{algorithmic}

\bigskip

    \Procname{$\prog{ActSmf}_\init(i, \mu_1, Sim_i, R_i)$}
  \begin{algorithmic}
        \STATE  \mbox{\phantom{4.1:}}~\textbf{goto} line~\ref{algactsmf:while.start}
  \end{algorithmic}

\end{algorithm}

The algorithm $\prog{ActSmf}$, shown as Algorithm~\ref{fig:actionsmf},
implements this.  It has to be called for each pair $(s_1,s_2)$ and each
successor distribution $s_1 \transby{\alpha} \mu_1$ of $s_1$.  It takes as
input the list of remaining candidates
$\mathit{Sim}_{i-1}^{(s_1,\alpha,\mu_1,s_2)}$, the information from the
previous iteration (the network $\N(\mu_1,\mu_2,R_{i-1})$, flow
$f_{i-1}$, and distance function $d_{i-1}$), and the set of edges that have
to be deleted from the old network $D_{i-1}$.
\begin{lem}
\label{lem:correctness_actsmf}
  Let $(s_1,s_2)\in R_1$, $\alpha\in Act(s_1)$, and $\mu_1$ such that
  $s_1\transby{\alpha}\mu_1$.  Let $Sim_1 = \steps_\alpha(s_2)$.  Then
  $\prog{ActSmf}_\init$ returns true iff $\exists \mu_2 \text{ with }s_2
  \transby{\alpha} \mu_2 \wedge \mu_1 \weight_{R_1} \mu_2$.
  For some $i>1$, let $Sim_{i-1}$, $\N(\mu_1,\mu_2,R_{i-1})$, $f_{i-1}$ and
  $d_{i-1}$ be as returned by some earlier call to $\prog{ActSmf}$ or
  $\prog{ActSmf}_\init$.  Let $D_{i-1} = (R_{i-1} \setminus R_{i}) \cap
  (\support(\mu_1) \times \support(\mu_2))$ be the set of edges that will
  be removed from the network during the $(i-1)$th call of $\prog{ActSmf}$.
  Then, the $(i-1)$th call of algorithm $\prog{ActSmf}$ returns true iff:
  $\exists \mu_2 \text{ with } s_2\transby{\alpha}\mu_2 \wedge
  \mu_1\weight_{R_i}\mu_2$.
\end{lem}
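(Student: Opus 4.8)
The plan is to prove both directions at once (``returns true $\Rightarrow$ a matching $\mu_2$ exists'' and ``existence $\Rightarrow$ returns true'') by exhibiting an invariant stating that the candidate list always retains every still-viable $\alpha$-successor distribution of $s_2$, and then invoking the correctness of the flow subroutine. Since the statement concerns a whole \emph{sequence} of calls—each reusing the list, network, flow and distance function produced by its predecessor—the argument is really an induction over this sequence: for a single call I would assume the invariant holds on entry and show it is re-established on exit, together with the claimed input--output behaviour.

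First I would record a monotonicity fact. Throughout $\prog{SimRel}$ the relation only shrinks, so $R_i\subseteq R_{i-1}\subseteq\cdots\subseteq R_1$. Inspecting Definition~\ref{def:weight}, a weight function for $(\mu_1,\mu_2)$ \wrt $R_i$ is also one \wrt any coarser relation, because only its support condition~(1) mentions the relation and $(R_i)_\aux\subseteq (R_{i-1})_\aux$. Hence $\mu_1\weight_{R_i}\mu_2$ implies $\mu_1\weight_{R_{i-1}}\mu_2$; equivalently, once $\mu_1\not\weight_{R_j}\mu_2$ holds it persists for all later relations. (The same fact follows from Lemma~\ref{lem:weight_equivalent}, since deleting edges never increases a maximum flow.) This is exactly what makes discarding a candidate permanent and therefore safe.

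The core step is the invariant: on entry to the call that turns $R_{i-1}$ into $R_i$, the list $Sim_{i-1}$ contains \emph{every} $\mu_2\in\steps_\alpha(s_2)$ with $\mu_1\weight_{R_{i-1}}\mu_2$, and its head carries the saved network, flow and distance function from the previous successful check of that head. I would prove this by induction over the calls: a distribution leaves the list (via \textbf{tail}) only at a moment when its weight check returned $\mathbf{false}$, i.e.\ $\mu_1\not\weight_{R_j}\mu_2$ for the relation $R_j$ current at that time; by the monotonicity fact this inequality never reverts, so no distribution satisfying $\mu_1\weight_{R_{i-1}}\mu_2$ can have been removed. In the base call $\prog{ActSmf}_\init$ the invariant holds trivially, since $Sim_1=\steps_\alpha(s_2)$.

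With the invariant in hand the two directions fall out. For the head candidate the call reuses the previous flow after deleting $D_{i-1}=(R_{i-1}\setminus R_i)\cap(\support(\mu_1)\times\support(\mu_2))$; by Lemma~\ref{smf_correctness} the embedded $\prog{Smf}$ correctly reports whether $\mu_1\weight_{R_i}\mu_2$, and each fresh candidate tested in the \textbf{while} loop is decided correctly by $\prog{Smf}_\init$, again by Lemma~\ref{smf_correctness}. Thus the call returns true exactly when some $\mu_2\in Sim_{i-1}$ satisfies $\mu_1\weight_{R_i}\mu_2$. Since $R_i\subseteq R_{i-1}$, every $\mu_2$ with $\mu_1\weight_{R_i}\mu_2$ also satisfies $\mu_1\weight_{R_{i-1}}\mu_2$ and so lies in $Sim_{i-1}$ by the invariant, while nothing outside $\steps_\alpha(s_2)$ is ever inserted. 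Hence returning true is equivalent to $\exists\mu_2$ with $s_2\transby{\alpha}\mu_2\wedge\mu_1\weight_{R_i}\mu_2$, and the $i=1$ claim for $\prog{ActSmf}_\init$ is the same argument with $R_1$ in place of $R_i$. The one point needing care—and the main obstacle—is precisely the invariant: one must guarantee that skipping a candidate whose flow once dropped below $1$ can never cause a genuinely matching $\mu_2$ to be overlooked in a later iteration, which is what the monotonicity fact secures.
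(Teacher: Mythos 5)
Your proof is correct and follows essentially the same route as the paper's: the paper's (much terser) proof consists precisely of your two ingredients, namely that a discarded candidate can never match again because edge deletions cannot increase the maximum flow, and that each individual check is then correct by the argument of Lemma~\ref{smf_correctness}. Your explicit list invariant and the monotonicity of $\weight_R$ under shrinking $R$ merely spell out what the paper leaves implicit.
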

\begin{proof}
  Once $\prog{Smf}$ returns false because the maximum flow for the
  current candidate $\mu_2$ has value $<1$, it will never become a
  candidate again, as edge deletions cannot lead to increased flow.
  The correctness proof is then the same as the proof of
  Lemma~\ref{smf_correctness}.
\end{proof}

The algorithm $\prog{SimRel}^\mathrm{PA}_s$ for deciding strong simulation
for PAs is presented as Algorithm~\ref{fig:simrelpa}.  During the
initialisation (lines~\ref{algsrp:init.start}--\ref{algsrp:init.end},
intermixed with iteration~1 in
lines~\ref{algsrp:iter1.start}--\ref{algsrp:iter1.end}), for $(s_1,s_2)\in
R_1$ and $s_1\transby{\alpha}\mu_1$, the list
$Sim_1^{(s_1,\alpha,\mu_1,s_2)}$ is initialised to $\steps_\alpha(s_2)$
(line~\ref{algsrp:initSim}), as no $\alpha$-successor distribution can be
excluded as a candidate a priori.  As in $\prog{SimRel}^\mathrm{FPS}_s$,
the set $\listener_{(s_1,s_2)}$ for $(s_1,s_2)$ is introduced which
contains tuples $(u_1,\alpha,\mu_1,u_2,\mu_2)$ such that the network
$\N(u_1,\alpha,\mu_1,u_2,\mu_2,R_1)$ contains the edge
$(s_1,\overline{s_2})$.

\begin{algorithm}[tbp]
  \caption{Algorithm for deciding strong simulation for PAs, where $arc$
denotes the associated parameter $(s_1,\alpha,\mu_1,s_2,\mu_2)$.}
  \label{fig:simrelpa}
    \Procname{$\prog{SimRel}^\mathrm{PA}_s(\M)$}
  \begin{algorithmic}[1]
        \STATE  $R_1 \gets \{(s_1,s_2)\in S\times S\mid L(s_1)=L(s_2)\wedge Act(s_1)\subseteq Act(s_2)\}$
                and $i \gets 1$         \label{algsrp:init.start}
        \STATE  $R_2 \gets \emptyset$
        \FORALL{$(s_1,s_2)\in R_1$}
                \STATE  $\listener_{(s_1,s_2)}$ $\gets \{(u_1,\alpha,\mu_1,u_2,\mu_2)\mid$
                        \parbox[t]{6cm}{$L(u_1)=L(u_2)
                        \wedge u_1\transby{\alpha} \mu_1
                        \wedge u_2\transby{\alpha} \mu_2$ \\
                        $\mbox{} \wedge \mu_1(s_1)>0
                        \wedge \mu_2(s_2)>0\}$}
                \FORALL{$\alpha\in Act(s_1), \quad
                         \mu_1\in \steps_\alpha(s_1)$}
                        \STATE  $\mathit{Sim}_1^{(s_1,\alpha,\mu_1,s_2)} \gets \steps_\alpha(s_2)$
                                \label{algsrp:initSim}
                                \label{algsrp:init.end}
                        \STATE  $(\mathit{match}_{\alpha,\mu_1},
                        \mathit{Sim}_1^{(s_1,\alpha,\mu_1,s_2)},
                        \N(s_1,\alpha,\mu_1,s_2,\mu_2,R_1), 
                        f_1^{arc},
                        d_1^{arc})$ \\
                        $\qquad\qquad\qquad  \gets
                        \prog{ActSmf}_\init(1, \mu_1,
                        \mathit{Sim}_1^{(s_1,\alpha,\mu_1,s_2)},
                        R_1)$                             \label{algsrp:iter1.start}
                \ENDFOR
                \IF{$\bigwedge_{\alpha\in
Act(s_1)}\bigwedge_{\mu_1\in \steps_\alpha(s_1)}\mathit{match}_{\alpha,\mu_1}$}
                        \label{algsrp:checkmatch_1}
                        \STATE  $R_2\gets R_2\cup \{(s_1,s_2)\}$
                \ENDIF  \label{algsrp:iter1.end}
        \ENDFOR
        \WHILE{$R_{i+1} \not= R_i$}
                \STATE  $i \gets i + 1$
                \STATE  $R_{i+1} \gets \emptyset$
                        and $D_{i-1} \gets R_{i-1} \setminus R_i$
                        \label{algsrp:reinit.start}
                \FORALL{$(s_1,s_2)\in R, \quad
                        \alpha\in Act(s_1), \quad
                        \mu_1\in \steps_\alpha(s_1),\quad \mu_2\in \steps_\alpha(s_2)$}
                        \label{algsrp:setDs1s2.start}
                        \STATE  $D_{i-1}^{(s_1, \alpha, \mu_1, s_2,\mu_2)} \gets \emptyset$
                \ENDFOR
                \FORALL{$(s_1,s_2)\in D_{i-1}, \quad
                         (u_1,\alpha,\mu_1,u_2,\mu_2)\in
                         \listener_{(s_1,s_2)}$}
                        \IF{$(u_1, u_2) \in R_{i-1}$}
                                \STATE  $D_{i-1}^{(u_1,\alpha,\mu_1,u_2,\mu_2)}\gets D_{i-1}^{(u_1,\alpha,\mu_1,u_2,\mu_2)} \cup
                  \{(s_1,s_2)\}$   \label{algsrp:reinit.end}
                        \ENDIF
                \ENDFOR \label{algsrp:setDs1s2.end}
                \FORALL{$(s_1,s_2)\in R$}
                        \FORALL{$\alpha\in Act(s_1), \quad
                           \mu_1\in \steps_\alpha(s_1)$}   \label{algsrp:checksim.start}
                                \STATE  $(\mathit{match}_{\alpha,\mu_1},
                                \mathit{Sim}_i^{(s_1,\alpha,\mu_1,s_2)},
                                \N(s_1,\alpha,\mu_1,s_2,\mu_2,R_i),
                                f_i^{arc},
                                d_i^{arc})$ \\
                                $\qquad \qquad \qquad \gets
                                        \prog{ActSmf}($\parbox[t]{5cm}{$i,
                                                \mathit{Sim}_{i-1}^{(s_1,\alpha,\mu_1,s_2)},
                                                \N(s_1,\alpha,\mu_1,s_2,\mu_2,R_{i-1}),$ \\
                                                $f_{i-1}^{arc},
                                                d_{i-1}^{arc},
                                                D_{i-1}^{arc})$} 
                        \ENDFOR
                        \IF{$\bigwedge_{\alpha\in
Act(s_1)}\bigwedge_{\mu_1\in Steps_\alpha(s_1)}\mathit{match}_{\alpha,\mu_1}$}
                                \label{algsrp:checkmatch}
                                \STATE  $R_{i+1}\gets R_{i+1}\cup \{(s_1,s_2)\}$
                                        \label{algsrp:updateRnew}
                        \ENDIF   \label{algsrp:checksim.end}
                \ENDFOR
        \ENDWHILE
        \STATE  \textbf{return} $R_i$   \label{algsrp:returnR}
  \end{algorithmic}
\end{algorithm}

The main iteration starts with generating the sets
$D_{i-1}^{(u_1,\alpha,\mu_1,u_2,\mu_2)}$ in
lines~\ref{algsrp:setDs1s2.start}--\ref{algsrp:setDs1s2.end} in a similar
way as $\prog{SimRel}^\mathrm{FPS}_s$.
Lines~\ref{algsrp:checksim.start}--\ref{algsrp:checksim.end} check
Condition~\ref{eq:checkss} by calling $\prog{ActSmf}$ for each action
$\alpha$ and each $\alpha$-successor distribution $\mu_1$ of $s_1$.  The
condition is true if and only if $match_{\alpha,\mu_1}$ is true for all
$\alpha\in Act(s_1)$ and $\mu_1\in Steps_\alpha(s_1)$.  In this case we
insert the pair $(s_1,s_2)$ into $R_{i+1}$ (line~\ref{algsrp:updateRnew}).
We give the correctness of the algorithm:

\begin{thm}
When $\prog{SimRel}^\mathrm{PA}_s(\M)$ terminates, the returned relation equals
$\simrel_\M$. 
\end{thm}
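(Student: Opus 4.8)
The plan is to mirror the correctness argument of Theorem~\ref{thm:correctness_basic}, replacing the single weight-function check by the action-indexed check that $\prog{SimRel}^\mathrm{PA}_s$ actually performs. The heart of the matter is to show that, for a pair $(s_1,s_2)\in R_i$, the conjunction $\bigwedge_{\alpha\in Act(s_1)}\bigwedge_{\mu_1\in\steps_\alpha(s_1)}\mathit{match}_{\alpha,\mu_1}$ tested in line~\ref{algsrp:checkmatch} holds iff $s_1\simrel_{R_i}s_2$ in the PA sense of Definition~\ref{def:simple_ss}. Once this equivalence is in place, the algorithm is just the generic refinement loop applied with the correct per-pair predicate, and the remainder of the proof is structurally identical to that of Theorem~\ref{thm:correctness_basic}.

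For the equivalence I would appeal to Lemma~\ref{lem:correctness_actsmf}. Fixing $\alpha$ and $\mu_1$, that lemma states that the call to $\prog{ActSmf}$ returns true iff there exists $\mu_2$ with $s_2\transby{\alpha}\mu_2$ and $\mu_1\weight_{R_i}\mu_2$, provided the call is supplied with $D_{i-1}=(R_{i-1}\setminus R_i)\cap(\support(\mu_1)\times\support(\mu_2))$. Conjoining over all enabled $\alpha$ and all $\alpha$-successor distributions $\mu_1$ of $s_1$ yields precisely Condition~\ref{eq:checkss}, which is the definition of $s_1\simrel_{R_i}s_2$ for PAs. Thus $(s_1,s_2)$ survives into $R_{i+1}$ (line~\ref{algsrp:updateRnew}) exactly when $s_1\simrel_{R_i}s_2$.

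With the per-pair predicate pinned down, the two standard inclusions follow. First, as a loop invariant I would show $\mathord{\simrel}\subseteq R_i$ by induction on $i$. For the base case, $s_1\simrel s_2$ forces $L(s_1)=L(s_2)$ and, since every action enabled at $s_1$ must be matched at $s_2$, also $Act(s_1)\subseteq Act(s_2)$; hence $(s_1,s_2)\in R_1$. For the step, assume $\mathord{\simrel}\subseteq R_i$ and $s_1\simrel s_2$. By Definition~\ref{def:simple_ss} each $\mu_1$ has a matching $\mu_2$ via a weight function \wrt $\simrel$; inspecting Definition~\ref{def:weight} shows that the same function is a weight function \wrt any relation coarser than $\simrel$, and in particular \wrt $R_i$. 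Hence $s_1\simrel_{R_i}s_2$ and $(s_1,s_2)\in R_{i+1}$. Second, at termination we have $R_{k+1}=R_k$, so every $(s_1,s_2)\in R_k$ satisfies $s_1\simrel_{R_k}s_2$; by the equivalence above this makes $R_k$ a strong simulation, so $R_k\subseteq\mathord{\simrel}$ because $\simrel_\M$ is the coarsest strong simulation relation. Combining the two inclusions gives $R_k=\mathord{\simrel_\M}$.

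I expect the main obstacle to be the bookkeeping that connects the algorithm's incremental data structures to the clean hypotheses of Lemma~\ref{lem:correctness_actsmf} — specifically, verifying that the set $D_{i-1}^{(s_1,\alpha,\mu_1,s_2,\mu_2)}$ assembled in lines~\ref{algsrp:setDs1s2.start}--\ref{algsrp:setDs1s2.end} via the $\listener_{(s_1,s_2)}$ sets really equals $(R_{i-1}\setminus R_i)\cap(\support(\mu_1)\times\support(\mu_2))$, the exact set of edges of $\N(\mu_1,\mu_2,R_{i-1})$ whose endpoints drop out of the relation. This is the PA analogue of the corresponding argument for $\prog{SimRel}^\mathrm{FPS}_s$, but is complicated by the candidate list $\mathit{Sim}^{(s_1,\alpha,\mu_1,s_2)}$: when a candidate $\mu_2$ is discarded and a fresh one is taken, $\prog{ActSmf}$ reinitialises its network from scratch, and one must check that this does not violate the preconditions under which Lemma~\ref{lem:correctness_actsmf} guarantees a correct verdict. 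Since that lemma already absorbs the candidate-switching logic, the residual work is purely the edge-accounting, after which correctness reduces to the fix-point argument above.
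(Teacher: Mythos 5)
Your proposal matches the paper's proof: the paper likewise reduces correctness to the fix-point argument of Theorem~\ref{thm:correctness_basic} (via Theorem~\ref{thm:fpscorrectness}), with the only new element being the conjunction over actions and successor distributions in lines~\ref{algsrp:checkmatch_1} and~\ref{algsrp:checkmatch}, justified by Lemma~\ref{lem:correctness_actsmf} exactly as you do. Your write-up is in fact somewhat more explicit than the paper's (which leaves the induction and the $D_{i-1}$ edge-accounting implicit), but the approach is the same.
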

\begin{proof}
  The proof follows the same lines as the proof of the correctness of
  $\prog{SimRel}^\mathrm{FPS}_s$ in Theorem~\ref{thm:fpscorrectness}.  The
  only new element is that we now have to quantify over the actions and
  successor distributions as prescribed by Definition~\ref{def:simple_ss}.
  This translates to a conjunction in lines~\ref{algsrp:checkmatch_1} and
  \ref{algsrp:checkmatch} of the algorithm. Exploiting
  Lemma~\ref{lem:correctness_actsmf} we get the correctness.
\end{proof}

Now we
give the complexity of the algorithm:

\begin{thm}
The algorithm $\prog{SimRel}^\mathrm{PA}_s(\M)$ runs in time $\O(m^2n)$
and in space $\O(m^2)$.  If the fanout of $\M$ is bounded by a
constant, it has complexity $\O(n^2)$, both in time and space.
\end{thm}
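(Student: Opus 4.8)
The plan is to mirror the complexity analysis of $\prog{SimRel}^{\mathrm{FPS}}_s$, replacing the per-pair bound by a per-\emph{candidate} bound and then summing appropriately. The overall time splits into (i) the bookkeeping that maintains the deletion sets $D_{i-1}^{(s_1,\alpha,\mu_1,s_2,\mu_2)}$ through the listener lists, and (ii) the cumulative cost of all calls to $\prog{ActSmf}$. For (i) I would argue exactly as in the FPS case: a pair $(s_1,s_2)$ enters $D_{i-1}$ in at most one iteration, so the total work spent scanning $\listener_{(s_1,s_2)}$ over all iterations is bounded by $\sum_{(s_1,s_2)}\size{\listener_{(s_1,s_2)}}$. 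Since each tuple $(u_1,\alpha,\mu_1,u_2,\mu_2)$ is counted in $\listener_{(s_1,s_2)}$ exactly once for every $s_1\in\support(\mu_1)$ and $s_2\in\support(\mu_2)$, this sum equals $\sum_{(u_1,\alpha,\mu_1,u_2,\mu_2)}\size{\mu_1}\size{\mu_2}$, which I will show is $\O(m^2)$ and hence dominated.

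For (ii) the key observation is that, inside a fixed arc $(s_1,\alpha,\mu_1,s_2)$, each candidate $\mu_2\in\steps_\alpha(s_2)$ becomes the head of $Sim$ during one contiguous block of iterations and, once removed, never returns (edge deletions cannot restore a lost flow, as in Lemma~\ref{lem:correctness_actsmf}). Hence the sequence of networks $\N(\mu_1,\mu_2,R_i)$ handled by $\prog{Smf}$ for this candidate only loses edges over time, so Lemma~\ref{smf_complexity}---whose proof uses only the bipartite structure and the distance bounds, and is insensitive to the starting iteration---applies verbatim with $\post(s_1),\post(s_2)$ replaced by $\support(\mu_1),\support(\mu_2)$. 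This bounds the total $\prog{Smf}$ work for the candidate by $\O(\size{\mu_1}\size{\mu_2}^2)$ (up to swapping the two sides of the network). Writing $g$ for the fanout of $\M$ and using $\size{\mu_2}+1\le\fanout(s_2)\le g$, I rewrite this as $\O(g\,\size{\mu_1}\size{\mu_2})$.

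It then remains to sum over all candidate tuples. Because each candidate is tried at most once per arc, the total is $\O\bigl(g\sum\size{\mu_1}\size{\mu_2}\bigr)$, the sum ranging over $(s_1,s_2)$, $\alpha$, $\mu_1\in\steps_\alpha(s_1)$ and $\mu_2\in\steps_\alpha(s_2)$. For a fixed $(s_1,s_2)$ this equals $\sum_\alpha\bigl(\sum_{\mu_1}\size{\mu_1}\bigr)\bigl(\sum_{\mu_2}\size{\mu_2}\bigr)\le\fanout(s_1)\fanout(s_2)$, and summing over the pairs yields $\sum_{s_1}\fanout(s_1)\cdot\sum_{s_2}\fanout(s_2)=m^2$. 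Thus the $\prog{ActSmf}$ work is $\O(gm^2)$, and since $g\le n$ the time complexity is $\O(m^2n)$. The same telescoping bounds the space: the listener lists cost $\O(m^2)$, and storing for each arc the single current network $\N(\mu_1,\mu_2,R_i)$ with its flow and distance function costs $\O(\size{\mu_1}\size{\mu_2})$ which---bounding the head's $\size{\mu_2}$ from above by $\sum_{\mu_2\in\steps_\alpha(s_2)}\size{\mu_2}$---sums to $\O(m^2)$ by the identical computation. Finally, if $g=\O(1)$ then $m\le gn$, so $gm^2\le g^3n^2=\O(n^2)$ in both time and space.

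The main obstacle I anticipate lies in the amortization of step (ii): I must ensure that switching candidates via $\prog{Smf}_\init$ does not corrupt the accounting. The point to pin down is that the cost charged to a candidate is precisely the Lemma~\ref{smf_complexity} bound for \emph{its own} subsequence of iterations, that a fresh $\prog{Smf}_\init$ at iteration $i$ starts from a network that is only smaller than the worst case, and that the subsequences of distinct candidates of one arc are disjoint, so no flow work is double-counted; the $m^2$ telescoping then delivers the stated bounds.
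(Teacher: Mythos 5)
Your proposal is correct and follows essentially the same route as the paper's proof: the same split into $\O(m^2)$ bookkeeping for the listener and deletion sets (each pair entering $D_i$ at most once), per-candidate charging of all $\prog{Smf}$/$\prog{Smf}_\init$ work via Lemma~\ref{smf_complexity} to get $\O(g\size{\mu_1}\size{\mu_2})$ per tuple, and the telescoping bound $\sum\size{\mu_1}\size{\mu_2}\le m^2$ (the paper's Inequality~\ref{eq:bounds}) for both time and space, with the same constant-fanout specialization $m\le gn$. The only cosmetic repair needed is the step ``$g\le n$'', which is false in general since the fanout of a state sums over all its outgoing distributions; bound $\size{\mu_2}\le n$ directly instead, which is also what the paper's own passage from $gm^2$ to $\O(m^2n)$ implicitly requires.
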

\begin{proof}
We first consider  the space complexity.  We save the sets
$D_i^{(s_1,\alpha,\mu_1,s_2,\mu_2)}$, the networks
$\N(s_1,\alpha,\mu_1,s_2,\mu_2,R_i)$ which are updated in every iteration,
$\listener_{(s_1,s_2)}$ and the sets
$\mathit{Sim}_i^{(s_1,\alpha,\mu_1,s_2)}$.  The size of the set
$D_i^{(s_1,\alpha,\mu_1,s_2,\mu_2)}$ is in $\O(\size{\mu_1}
\size{\mu_2})$, which is the maximal number of
edges of $\N(s_1,\alpha,\mu_1,s_2,\mu_2,R_i)$.  Summing over all $(s_1,
\alpha, \mu_1, s_2,\mu_2)$, we get:
\begin{align}\label{eq:bounds}
    \sum_{s_1 \in S} \sum_{\alpha\in Act(s_1)}
    \sum_{\mu_1\in Steps_\alpha(s_1)} \sum_{s_2\in S}
    \sum_{\mu_2\in Steps_\alpha(s_2)}
 \size{\mu_1} \size{\mu_2} 
\leq    m^2
\end{align}
Similarly, the memory needed for saving the networks has the same bound
$\O(m^2)$.  Now we consider the set $\listener_{(s_1,s_2)}$ for the pair
$(s_1,s_2) \in R_1$.  Let $(u_1,\alpha,\mu_1,u_2,\mu_2) \in
\listener_{(s_1,s_2)}$. Then, it holds that $s_1 \in \support(\mu_1)$ and
$s_2 \in \support(\mu_2)$.  Hence, the tuple $(u_1,\alpha,\mu_1,u_2,\mu_2)$
can be an element of $\listener_{(s_1,s_2)}$ of some arbitrary pair
$(s_1,s_2)$ at most $\size{\mu_1} \size{\mu_2}$ times, which corresponds to
the maximal number of edges between the set of nodes $\support(\mu_1)$ and
$\overline{\support(\mu_2)}$ in
$\N(s_1,\alpha,\mu_1,s_2,\mu_2,R_1)$. Summing over all $(s_1,
\alpha, \mu_1, s_2,\mu_2)$,  we get that memory needed for the set
$\listener$ is also bounded by $\O(m^2)$. For each pair $(s_1,s_2)$ and
$s_1 \transby{\alpha} \mu_1$, the set
$\mathit{Sim}_1^{(s_1,\alpha,\mu_1,s_2)}$ has size
$\size{Steps_\alpha(s_2)}$.  Summing up, this is smaller than or equal to $m^2$
according to Inequality~\ref{eq:bounds}.  Hence, the overall space
complexity amounts to $\O(m^2)$.

Now we consider the time complexity.  All initialisations
(lines~\ref{algsrp:init.start}--\ref{algsrp:init.end} of
$\prog{SimRel}^\mathrm{PA}_s$ and the initialisations in
$\prog{ActSmf}_\init$, which calls $\prog{Smf}_\init$) take $\O(m^2)$ time.
We observe that a pair $(s_1,s_2)$ belongs to $D_i$ during at most one
iteration.  Because of the Inequality~\ref{eq:bounds}, the time needed in
lines~\ref{algsrp:setDs1s2.start}--\ref{algsrp:setDs1s2.end} is in
$\O(m^2)$.  The rest of the algorithm is dominated by the time needed for
calling $\prog{Smf}$ in line~\ref{algactsmf:callsmf} of $\prog{ActSmf}$.
By Lemma~\ref{smf_complexity}, the time complexity for successful and
unsuccessful checks concerning the tuple $(s_1,\alpha,
\mu_1,s_2,\mu_2)$ is bounded by
$\O(g\size{\mu_1}\size{\mu_2})$.  Taking the sum over all possible
tuples  $(s_1,\alpha,
\mu_1,s_2,\mu_2)$  we get the bound $g m^2$ according to
    Inequality~\ref{eq:bounds}.  Hence, the complexity is $\O(m^2n)$. If
    the fanout $g$ is bounded by a constant, we have $m \leq gn$. Thus, the
    time complexity is in the order of $\O(n^2)$.  In this case the space
    complexity is also $\O(n^2)$.
\end{proof}

\begin{rem}
\label{remark:baier_complexity}
Let $\pa$ be a PA, and let $\mbaier=\sum_{s\in S} \sum_{\alpha \in Act(s)}
\size{Steps_\alpha(s)}$, called the number of transitions
in~\cite{BEMC00}, denote the number of all distributions in $\M$.  The
algorithm for deciding strong simulation introduced by Baier \et has time
complexity $\O((\mbaier n^6+ \mbaier^2n^3)/\log n)$, and space complexity
$\O(\mbaier^2)$. The number of distributions $\mbaier$ and the size of
transitions $m$ are related by $\mbaier
\le m \le n\mbaier$.  The left
equality is established if $\size{\mu}=1$ for all distributions, and the
right equality is established if $\size{\mu}=n$ for all distributions. 
\end{rem}

The decision algorithm for strong simulation for CPAs can be adapted
from $\prog{SimRel}^\mathrm{PA}_s$ in Algorithm~\ref{fig:simrelpa}
easily: Notations are extended with respect to rate functions instead
of distributions in an obvious way.  To guarantee the additional rate
condition, we rule out successor rate functions of $s_2$ that violate
it by replacing line~\ref{algsrp:initSim} by:
\[\mathit{Sim}_1^{(s_1,\alpha,r_1,s_2)}\gets
  \{ r_2 \in Steps_\alpha(s_2) \mid r_1(S) \le r_2(S)\}.
\]
For each pair $(s_1,s_2)$, and successor rate functions
$r_i\in\steps_\alpha(s_i)$ ($i=1,2$), the subroutine for checking
whether $r_1\weight_{R_i} r_2$ is then performed in the network
$\N(\mu(r_1),\mu(r_2), R_i)$.  Obviously, the so obtained algorithm
for CPAs has the same complexity $\O(m^2n)$.

\subsection{Strong Probabilistic Simulation}\label{sec:sps} 

The problem of deciding strong probabilistic simulation for PAs has
not been tackled yet.  We show that it can be computed by solving LP
problems which are decidable in polynomial time~\cite{Karmarkar84}. In
Subsection~\ref{sec:algo_pa_sps}, we first present an algorithm for
PAs.  We extend the algorithm to deal with CPAs in
Subsection~\ref{sec:algo_cpa_sps}.

\subsubsection{Probabilistic Automata}\label{sec:algo_pa_sps}
Recall that strong probabilistic simulation is a relaxation of strong
simulation in the sense that it allows combined transitions, which are
convex combinations of multiple distributions belonging to equally
labelled transitions. Again, the most important part is to check
whether $s_1 \simrel_R^p s_2$ where $R$ is the current relation. By
Definition~\ref{def:combined_ss}, it suffices to check $L(s_1)=L(s_2)$
and the condition:
\begin{align}
\label{eq:checksps}
\forall \alpha \in \mathit{\actions}.\ 
\forall s_1\transby{\alpha}\mu_1.\ \exists s_2\combinedby{\alpha}\mu_2
\text{ with } \mu_1\weight_R\mu_2
\end{align}
However, since the combined transition involves the quantification of the
constants $c_i \in [0,1]$, there are possibly infinitely many such
$\mu_2$.  Thus, one cannot check $\mu_1 \weight_R
\mu_2$ for each possible candidate $\mu_2$. The following lemma  shows that
this condition can be checked by solving LP problems which are decidable in
polynomial time~\cite{Karmarkar84,Schr86}. 

\begin{lem}
\label{lem:lp}
  Let $\pa$ be a given PA, and let $R\subseteq S\times S$. Let
  $(s_1,s_2)\in R$ with $L(s_1)=L(s_2)$ and $Act(s_1)\subseteq
  Act(s_2)$. Then, $s_1 \simrel_R^p s_2$ iff for each transition
  $s_1\transby{\alpha} \mu$, the following LP has a feasible solution:
\begin{align} 
& \sum_{i=1}^k c_i = 1  \label{eq:lpone}\\ 
& 0 \le c_i \le 1 \qquad \forall \ i=1,\ldots, k \label{eq:lptwo}\\ 
& 0 \le  f_{(s,t)} \le 1 \qquad \forall (s,t) \in R_\aux \label{eq:lpthree}\\ 
& \mu(s) = \sum_{t \in R_\aux(s)} f_{(s,t)} \qquad \forall s \in S_\aux \label{eq:lpfour}\\ 
& \sum_{s \in R_\aux^{-1}(t)} f_{(s,t)} = \sum_{i=1}^k c_i \mu_i(t) \qquad
\forall t \in S_\aux \label{eq:lpfive} 
\end{align} 
where $k=\size{\steps_\alpha(s_2)}>0$ and $\steps_\alpha(s_2) = \{ \mu_1,
\ldots, \mu_k \}$.
\end{lem}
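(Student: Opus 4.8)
The plan is to exhibit an exact correspondence between, on one side, pairs consisting of a combined transition $s_2 \combinedby{\alpha} \mu_2$ together with a weight function $\Delta$ witnessing $\mu \weight_R \mu_2$, and on the other side, feasible solutions of the stated LP. The variables $c_1,\ldots,c_k$ will encode the convex-combination coefficients of the combined transition $\mu_2 = \sum_{i=1}^k c_i \mu_i$, while the variables $f_{(s,t)}$ (indexed by $(s,t)\in R_\aux$) will encode the values $\Delta(s,t)$ of the weight function. The structural point that makes this work is that, once the $c_i$ are admitted as unknowns, the coupling equation~(\ref{eq:lpfive}) stays \emph{linear} in all variables, because $\mu_2(t)=\sum_{i=1}^k c_i\mu_i(t)$ is linear in the $c_i$; this is precisely what turns the (a~priori uncountable) existential quantification over combined transitions in Condition~(\ref{eq:checksps}) into a single feasibility test.

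Before the two directions I would fix the bookkeeping. Since $\alpha\in Act(s_1)\subseteq Act(s_2)$, the set $\steps_\alpha(s_2)=\{\mu_1,\ldots,\mu_k\}$ is nonempty, so $k>0$. As in Lemma~\ref{lem:weight_equivalent}, I extend $\mu$ and each $\mu_i$ to $S_\aux$ by setting $\mu(\aux)=1-\mu(S)$ and $\mu_i(\aux)=1-\mu_i(S)$, so that all of them are stochastic distributions over $S_\aux$; with these conventions $\mu_2(\aux)=\sum_i c_i\mu_i(\aux)=1-\sum_i c_i\mu_i(S)=1-\mu_2(S)$, which keeps the $t=\aux$ instances of~(\ref{eq:lpfour}) and~(\ref{eq:lpfive}) consistent.

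For the forward direction, I assume $s_1\simrel_R^p s_2$ and fix a transition $s_1\transby{\alpha}\mu$. By Definition~\ref{def:combined_ss} there is a combined transition $s_2\combinedby{\alpha}\mu_2=\sum_{i=1}^k c_i\mu_i$ with coefficients $c_i\in[0,1]$, $\sum_i c_i=1$, and a weight function $\Delta$ for $(\mu,\mu_2)$ \wrt $R$. I then set $f_{(s,t)}:=\Delta(s,t)$ for $(s,t)\in R_\aux$ and keep the same $c_i$, and verify the constraints one family at a time: (\ref{eq:lpone})--(\ref{eq:lptwo}) are the defining properties of a convex combination; (\ref{eq:lpthree}) is the codomain $[0,1]$ of $\Delta$; (\ref{eq:lpfour}) is Condition~(\ref{def:weight_relate_mu_Delta}) of Definition~\ref{def:weight}, using that Condition~(1) of the weight function makes $\Delta(s,t)=0$ off $R_\aux$, so $\Delta(s,S_\aux)$ equals the restricted sum $\sum_{t\in R_\aux(s)}f_{(s,t)}$; and~(\ref{eq:lpfive}) is Condition~(\ref{def:weight_relate_muprime_Delta}) combined with $\mu_2(t)=\sum_i c_i\mu_i(t)$.

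For the backward direction, given a feasible solution $(c_i,f_{(s,t)})$ I define $\mu_2:=\sum_{i=1}^k c_i\mu_i$ and $\Delta(s,t):=f_{(s,t)}$ for $(s,t)\in R_\aux$, with $\Delta(s,t):=0$ otherwise. By~(\ref{eq:lpone})--(\ref{eq:lptwo}), $\mu_2$ is a legitimate combined transition $s_2\combinedby{\alpha}\mu_2$, and it is a distribution since $\mu_2(S)=\sum_i c_i\mu_i(S)\le\sum_i c_i=1$. I then check that $\Delta$ is a weight function for $(\mu,\mu_2)$ \wrt $R$: Condition~(1) holds by construction, since $\Delta$ is positive only on $R_\aux$; Condition~(\ref{def:weight_relate_mu_Delta}) is~(\ref{eq:lpfour}) read with the off-$R_\aux$ entries as $0$; and Condition~(\ref{def:weight_relate_muprime_Delta}) is~(\ref{eq:lpfive}), again using $\mu_2(t)=\sum_i c_i\mu_i(t)$ for all $t\in S_\aux$. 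Thus $\mu\weight_R\mu_2$, establishing Condition~(\ref{eq:checksps}) for this transition; ranging over all $\alpha$ and all $s_1\transby{\alpha}\mu$ gives $s_1\simrel_R^p s_2$. The only real care needed is the uniform treatment of $\aux$ and the explicit identification of ``$\Delta$ vanishes off $R_\aux$'' with ``the sums in~(\ref{eq:lpfour})/(\ref{eq:lpfive}) range over $R_\aux$''; beyond this the argument is a direct, linear transcription with no combinatorial obstacle.
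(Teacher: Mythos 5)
Your proof is correct and follows essentially the same route as the paper's: both directions translate directly between feasible LP solutions $(c_i, f_{(s,t)})$ and pairs consisting of a combined transition $\mu_2=\sum_i c_i\mu_i$ together with a weight function $\Delta$, with $f_{(s,t)}=\Delta(s,t)$ on $R_\aux$ and $\Delta$ vanishing elsewhere. Your explicit handling of the auxiliary state $\aux$ (checking $\mu_2(\aux)=1-\mu_2(S)$) is a minor refinement of bookkeeping the paper leaves implicit, not a different argument.
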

\begin{proof}
First assume that $s_1 \simrel_R^p s_2$. Let $s_1\transby{\alpha}\mu$.
By the definition of simulation up to $R$ for strong probabilistic
simulation, there exists a combined transition $s_2
\combinedby{\alpha} \mu_c$ with $\mu \weight_R \mu_c$. Let
$\steps_\alpha(s_2) = \{ \mu_1, \ldots, \mu_k \}$ where
$k=\size{\steps_\alpha(s_2)}$.  Now $Act(s_1)\subseteq Act(s_2)$
implies $k>0$. By definition of combined transition
(Definition~\ref{def:combined}), there exist constants $c_1, \ldots,
c_k \in [0,1]$ with $\sum_{i=1}^k c_i=1$ such that $\mu_c =
\sum_{i=1}^k c_i \mu_i$. Thus Constraints~\ref{eq:lpone} and
\ref{eq:lptwo} hold. Since $\mu \weight_R \mu_c$, there exists a
weight function $\Delta:S_\aux\times S_\aux\rightarrow [0,1]$ for
$(\mu,\mu_c)$ with respect to $R$. For every pair $(s,t)\in R_\aux$,
let $f_{(s,t)}:=\Delta(s,t)$. Thus, Constraint~\ref{eq:lpthree} holds
trivially. By Definition~\ref{def:weight} of weight functions, it
holds that
(i)  $\Delta(s,t)>0$ implies that $(s,t)\in R_\aux$,
(ii) $\mu(s)=\sum_{t\in S_\aux}
\Delta(s,t)$ for $s\in S_\aux$, and
(iii) $\mu_c(t)=\sum_{s \in S_\aux} \Delta(s,t)$
for all $t\in S_\aux$.
Observe that (i) implies that for all $(s,t)\not\in R_\aux$, we
have that $\Delta(s,t)=0$. Thus, (ii) and (iii) imply
Equations~\ref{eq:lpfour} and \ref{eq:lpfive} respectively.

Now we show the other direction. Let $k=\size{\steps_\alpha(s_2)}$ and
$\steps_\alpha(s_2) = \{ \mu_1, \ldots, \mu_k \}$.  By assumption, for
each $s_1\transby{\alpha}\mu$, we have a feasible solution
$c_1,\ldots,c_k$ and $f_{(s,t)}$ for all $(s,t)\in R_\aux$ which
satisfies all of the constraints. We define
$\mu_c=\sum_{i=1}^kc_i\mu_i$. By Definition~\ref{def:combined},
$\mu_c$ is a combined transition, thus $s_2\combinedby{\alpha}\mu_c$.
It remains to show that $\mu\weight_R \mu_c$.  We define a function
$\Delta$ as follows: $\Delta(s,t)$ equals $f_{(s,t)}$ if $(s,t)\in
R_\aux$ and $0$ otherwise. With the help of
Constraints~\ref{eq:lpthree}, \ref{eq:lpfour} and \ref{eq:lpfive} we
have that $\Delta$ is a weight function for $(\mu,\mu_c)$ with respect
to $R$, thus $\mu \weight_R\mu_c$.
\end{proof}

Now we are able to check Condition~\ref{eq:checksps} by solving LP
problems.  For a PA $\pa$, and a relation $R\subseteq S\times S$, let
$(s_1,s_2)\in R$ with $L(s_1)=L(s_2)$ and $Act(s_1)\subseteq
Act(s_2)$. For $s_1\transby{\action}\mu_1$, we introduce a predicate
$LP(s_1,\alpha,\mu,s_2)$ which is true iff the LP problem described as
in Lemma~\ref{lem:lp} has a solution.  Then, $s_1 \simrel_R^p s_2$ iff
the conjunction $ \bigwedge_{\alpha \in\actions(s_1)}
\bigwedge_{\mu_1\in\steps_\alpha(s_1)} LP(s_1,\alpha,\mu_1,s_2) $ is
true.  The algorithm, which is denoted by
$\prog{SimRel}_s^{\mathrm{PA},p}(\M)$, is depicted in
Algorithm~\ref{fig:simrelsps}. It takes the skeleton of
$\prog{SimRel}_s(\M)$. The key difference is that we incorporate the
predicate $LP(s_1,\alpha,\mu_1,s_2)$ in line~\ref{alg:sps_pa:match}.
The correctness of the algorithm $\prog{SimRel}_s^{\mathrm{PA},p}(\M)$
can be obtained from the one of $\prog{SimRel}_s(\M)$ together with
Lemma~\ref{lem:lp}.  We discuss briefly the complexity.  The number of
variables in the LP problem in Lemma~\ref{lem:lp} is $k+\size{R}$, and
the number of constraints is $1+k+\size{R}+2\size{S} \in
\O(\size{R})$. In iteration $i$ of
$\prog{SimRel}_s^{\mathrm{PA},p}(\M)$, for $(s_1,s_2)\in R_i$ and
$s_1\transby{\alpha} \mu_1$, the corresponding LP problem is queried
once. The number of iterations is in $\O(n^2)$. Therefore, in the
worst case, one has to solve $ n^2\sum_{s\in S} \sum_{\alpha\in
  \actions(s)} \sum_{\mu \in \steps_\alpha(s)} 1 \in \O(n^2m) $ many
such LP problems and each of them has at most $\O(n^2)$ constraints.

\begin{algorithm}[tbp]
  \caption{ Algorithm for deciding strong probabilistic simulation for PAs.}
  \label{fig:simrelsps}
    \Procname{$\prog{SimRel}_s^{\mathrm{PA},p}(\M)$}
  \begin{algorithmic}[1]
        \STATE  $R_1 \gets \{(s_1,s_2)\in S\times S\mid L(s_1)=L(s_2)\wedge Act(s_1)\subseteq Act(s_2)\}$
                and $i \gets 0$
        \REPEAT
                \STATE  $i \gets i + 1$
                \STATE  $R_{i+1}\gets\emptyset$
                \FORALL{$(s_1,s_2)\in R_i$}
                        \FORALL{$\alpha\in \actions(s_1), \quad \mu_1\in \steps_\alpha(s_1)$}\label{alg:sps_pa:for}
                                \STATE  $match_{\alpha,\mu_1} \gets LP(s_1,\alpha,\mu_1,s_2)$\label{alg:sps_pa:match}
                        \ENDFOR
                        \IF{$\bigwedge_{\alpha\in \actions(s_1)}\bigwedge_{\mu_1\in\steps_\alpha(s_1)} match_{\alpha,\mu_1}$}
                                \STATE  $R_{i+1}\gets R_{i+1}\cup \{(s_1,s_2)\}$
                        \ENDIF
                \ENDFOR
        \UNTIL{$R_{i+1}=R_i$}
        \STATE  \textbf{return} $R_i$
  \end{algorithmic}
\end{algorithm}

\subsubsection{Continuous-time Probabilistic Automata}\label{sec:algo_cpa_sps}
Now we discuss how to extend the algorithm to handle CPAs.  Let $\cpa$ be a
CPA. Similar to PAs, the most important part is to check the condition $s_1
\simrel_R^p s_2$ for some relation $R\subseteq S\times S$. By
Definition~\ref{def:cpa_sps}, it suffices to 
check $L(s_1)=L(s_2)$ and the condition:
\begin{align}
\label{eq:checksps_cpa}
\forall \alpha \in \mathit{\actions}.\ 
\forall s_1\transby{\alpha} r_1.\ \exists s_2\combinedby{\alpha} r_2
\text{ with } \mu(r_1) \weight_R \mu(r_2) \wedge r_1(S)
\le r_2(S)
\end{align}
Recall that for CPAs only successor rate functions with the same exit
rate can be combined together. For state $s\in S$, we let $E(s):=
\{r(S) \mid s \transby{\alpha} r\}$ denote the set of all possible
exit rates of $\alpha$-successor rate functions of $s$.  For $E\in
E(s)$ and $\alpha\in\actions(s)$, we let
$\steps^E_\alpha(s)=\{r\in\steps_\alpha(s)\mid r(S)=E\}$ denote the
set of $\alpha$-successor rate functions of $s$ with the same exit
rate $E$. As for PAs, to check the condition $s_1 \simrel_R^p s_2$ we
resort to a reduction to LP problems.

\begin{lem}
  Let $\cpa$ be a given CPA, and let $R\subseteq S \times S$. Let
  $(s_1,s_2)\in R$ with $L(s_1)=L(s_2)$ and that $Act(s_1)\subseteq
  Act(s_2)$. Then, $s_1 \simrel_R^p s_2$ iff for each transition
  $s_1\transby{\alpha} r$ either $r(S)=0$, or there exists $E\in E(s_2)$
  with $E \ge r(S)$ such that the following LP has a feasible solution,
  which consists of Constraints~\ref{eq:lpone}, \ref{eq:lptwo},
  \ref{eq:lpthree} of Lemma~\ref{lem:lp}, and additionally:
\begin{align} 
    & r(s) = r(S) \sum_{t \in R_\aux(s)} f_{(s,t)} \qquad
      \forall s \in S_\aux \label{eq:nlpone}\\
    &E \sum_{s \in R_\aux^{-1}(t)} f_{(s,t)} =
    \sum_{i=1}^k c_i r_i(t) \qquad \forall t \in S_\aux  \label{eq:nlptwo}
\end{align}
where $k= \size{\steps^E_\alpha(s)}$ with $\steps^E_\alpha(s)=\{r_1,\ldots,r_k\}$.
\end{lem}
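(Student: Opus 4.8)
The plan is to mirror the proof of Lemma~\ref{lem:lp} for PAs, adapting it to the two features specific to CPAs: combined transitions may only merge $\alpha$-successor rate functions sharing a common exit rate $E$, and the rate condition $r(S)\le r_2(S)$ must be respected. Throughout, I would use that whenever $r(S)>0$ the induced distribution $\mu(r)$ is stochastic, so $\mu(r)(s)=r(s)/r(S)$ for $s\in S$ and $\mu(r)(\aux)=0$; the new constraints~\ref{eq:nlpone} and~\ref{eq:nlptwo} are then nothing but the weight-function identities of Definition~\ref{def:weight} written in ``un-normalised'' form, scaled by the exit rates $r(S)$ and $E$ respectively.

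For the forward direction, assume $s_1\simrel^p_R s_2$ and fix a transition $s_1\transby{\alpha} r$. If $r(S)=0$ the first alternative of the disjunction holds and nothing is to prove. If $r(S)>0$, Definition~\ref{def:cpa_sps} gives a combined transition $s_2\combinedby{\alpha} r_c$ with $\mu(r)\weight_R\mu(r_c)$ and $r(S)\le r_c(S)$. By Definition~\ref{def:combined_cpa} the rate functions combined into $r_c$ all share one exit rate $E$, so $r_c=\sum_{i=1}^k c_i r_i$ with $\{r_1,\ldots,r_k\}=\steps^E_\alpha(s_2)$, and $r_c(S)=\sum_i c_i E=E$; hence $E\in E(s_2)$ and $E=r_c(S)\ge r(S)$. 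The constants $c_i$ satisfy~\ref{eq:lpone} and~\ref{eq:lptwo}. Taking $f_{(s,t)}:=\Delta(s,t)$ for the weight function $\Delta$ witnessing $\mu(r)\weight_R\mu(r_c)$ yields~\ref{eq:lpthree}; substituting $\mu(r)(s)=r(s)/r(S)$ into $\mu(r)(s)=\Delta(s,S_\aux)=\sum_{t\in R_\aux(s)}f_{(s,t)}$ and multiplying by $r(S)$ gives~\ref{eq:nlpone}, and substituting $\mu(r_c)(t)=r_c(t)/E=\frac{1}{E}\sum_i c_i r_i(t)$ into $\mu(r_c)(t)=\Delta(S_\aux,t)=\sum_{s\in R_\aux^{-1}(t)}f_{(s,t)}$ and multiplying by $E$ gives~\ref{eq:nlptwo}.

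For the converse, suppose the stated alternative holds for every $s_1\transby{\alpha} r$; I would verify condition~\ref{eq:checksps_cpa}. When $r(S)=0$ the distribution $\mu(r)$ is absorbing, so it is weight-related to $\mu(r_2)$ for \emph{any} single $r_2\in\steps_\alpha(s_2)$ (which exists since $Act(s_1)\subseteq Act(s_2)$) by routing all mass through $\aux$, and the rate condition $0\le r_2(S)$ is trivial. When $r(S)>0$, take the guaranteed $E$ and feasible solution, set $r_c:=\sum_{i=1}^k c_i r_i$ over $\steps^E_\alpha(s_2)$ so that $s_2\combinedby{\alpha} r_c$ with $r_c(S)=E\ge r(S)$, and define $\Delta(s,t):=f_{(s,t)}$ on $R_\aux$ and $0$ elsewhere. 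Reversing the computations above, \ref{eq:lpthree} gives condition~(1) of Definition~\ref{def:weight}, \ref{eq:nlpone} divided by $r(S)$ gives condition~(2), and \ref{eq:nlptwo} divided by $E$ gives condition~(3); hence $\mu(r)\weight_R\mu(r_c)$ and therefore $r\weight_R r_c$.

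I expect the main obstacle to be the bookkeeping around the exit-rate partition rather than any deep difficulty: one must ensure the set combined into $r_c$ is exactly $\steps^E_\alpha(s_2)$ for the chosen $E$, so that $r_c(S)=E$ and the division in $\mu(r_c)$ is by the \emph{same} $E$ appearing in~\ref{eq:nlptwo}, and that the rate condition is threaded through as $E\ge r(S)$. A secondary point requiring care is the auxiliary state $\aux$: because $\mu(r)$ and $\mu(r_c)$ are stochastic for positive exit rates, the $\aux$-rows $s=\aux$ and $t=\aux$ of all constraints collapse to $0=0$, and these degenerate cases must be checked to conclude that $\Delta$ is a genuine weight function.
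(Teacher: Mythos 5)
Your proposal is correct and takes essentially the same approach as the paper: the paper's own proof is just a one-line remark that it ``follows the same strategy as the proof of Lemma~\ref{lem:lp}, in which the induced distribution of the corresponding rate function should be used,'' and your argument fleshes out exactly that translation, including the $r(S)=0$ case and the $\aux$-row checks. The one detail worth making explicit is that Definition~\ref{def:combined_cpa} allows the combined transition to mix only a \emph{subset} of $\steps^E_\alpha(s_2)$, so in the forward direction you should pad with coefficients $c_i=0$ to reach the LP's full index set $\{r_1,\ldots,r_k\}=\steps^E_\alpha(s_2)$ --- a triviality you already flag as bookkeeping.
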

\begin{proof}
  The proof follows the same strategy as the proof of
  Lemma~\ref{lem:lp}, in which the induced distribution of the
  corresponding rate function should be used.
\end{proof}

\begin{algorithm}[tbp]
  \caption{Algorithm for deciding strong probabilistic simulation for CPAs.}
  \label{fig:simrelsps_cpa}
    \Procname{$\prog{SimRel}_s^{\mathrm{CPA},p}(\M)$}
  \begin{algorithmic}[1]
        \STATE  $R_1 \gets \{(s_1,s_2)\in S\times S\mid L(s_1)=L(s_2) \wedge Act(s_1)\subseteq Act(s_2)\}$
                and $i \gets 0$
        \REPEAT
                \STATE  $i \gets i + 1$
                \STATE  $R_{i+1}\gets\emptyset$
                \FORALL{$(s_1,s_2)\in R_i$}
                        \FORALL {$\alpha\in \actions(s_1), \quad
                          r_1\in \steps_\alpha(s_1), \quad E\in E(s_2)$}
                                \STATE  $match_{\alpha,r_1,E} \gets LP'(s_1,\alpha,r_1,s_2,E)$
                        \ENDFOR
                        \IF{$\bigwedge_{\alpha\in
                            \actions(s_1)}\bigwedge_{r_1\in\steps_\alpha(s_1)}\bigwedge_{E\in E(s_2)} match_{\alpha,r_1,E}$}
                                \STATE  $R_{i+1}\gets R_{i+1}\cup \{(s_1,s_2)\}$
                        \ENDIF
                \ENDFOR
        \UNTIL{$R_{i+1}=R_i$}
        \STATE  \textbf{return} $R_i$
  \end{algorithmic}
\end{algorithm}

Now we are able to check Condition~\ref{eq:checksps_cpa} by solving LP
problems.  For a CPA $\cpa$, and a relation $R\subseteq S\times S$,
let $(s_1,s_2)\in R$ with $L(s_1)=L(s_2)$ and $Act(s_1)\subseteq
Act(s_2)$. For $s_1\transby{\action} r_1$, and $E\in E(s_2)$, we
introduce the predicate $LP'(s_1,\alpha,r_1,s_2, E)$ which is true iff
$E\ge r_1(S)$ and the corresponding LP problem has a solution.  Then,
$s_1 \simrel_R^p s_2$ iff the conjunction $ \bigwedge_{\alpha \in
  \actions(s_1)} \bigwedge_{r_1\in \steps_\alpha(s_2)} \bigwedge_{E\in
  E(s_2)} LP'(s_1,\alpha,r_1,s_2,E) $ is true. The decision algorithm
is given in Algorithm~\ref{fig:simrelsps_cpa}.  As complexity we have
to solve $\O(n^2m)$ LP problems and each of them has at most $\O(n^2)$
constraints.

\section{Algorithms for Deciding Weak Simulations}
\label{sec:weak} 
\noindent We now turn our attention to weak simulations.  Similar to strong
simulations, the core of the algorithm is to check whether $s_1\wsrel_R
s_2$, i.e., $s_2$ weakly simulates $s_1$ up to the current relation $R$.
As for strong simulation up to $R$, $s_1\wsrel_Rs_2$ does not imply
$s_1\wsrel_\D s_2$, since no conditions are imposed on pairs in $R$
different from $(s_1,s_2)$. By the definition of weak simulation, for fixed
characteristic functions $\delta_i$ ($i=1,2$), the weight function
conditions can be checked by applying maximum flow algorithms.
Unfortunately, $\delta_i$-functions are not known a priori.  Inspired by
the parametric maximum flow algorithm, in this chapter, we show that one
can determine whether such characteristic functions $\delta_i$ exist with
the help of \emph{breakpoints}, which can be computed by analysing a
parametric network constructed out of $\Pmu{s_1},\Pmu{s_2}$ and $R$.  We
present dedicated algorithms for DTMCs in Subsection~\ref{sec:weak_dtmc}
and CTMCs in Subsection~\ref{sec:weak_ctmc}.

\subsection{An Algorithm for DTMCs}
\label{sec:weak_dtmc}
Let $\dtmc$ be a DTMC. Let $R\subseteq S\times S$ be a relation and
$s_1 \mathrel{R} s_2$. Whether $s_2$ weakly simulates $s_1$ up to $R$ is
equivalent to whether there exist functions $\delta_i: S\to [0,1]$
such that the conditions in
Definition~\ref{def:dtmc_weak_simulation} are satisfied.  Assume
that we are given the $U_i$-characterising functions $\delta_i$.  In
this case, $s_1 \wsrel_R s_2$ can be checked as follows:
\begin{enumerate}[$\bullet$]
\item Concerning Condition~\ref{ws:stuttersimulates}a we check whether for all $v\in S$ with
  $\delta_1(v)<1$ it holds that $v \mathrel{R} s_2$. Similarly, for Condition~\ref{ws:stuttersimulates}b, we check whether for all $v\in S$ with $\delta_2(v)<1$ it holds
  that $s_1 \mathrel{R} v$. 
\item The reachability condition can be checked by using standard
  graph algorithms. In more detail, for each $u$ with $\delta_1(u)>0$,
  the condition holds if a state in $R(u)$ is reachable from $s_2$ via
  $R(s_1)$ states.
\item Finally consider Condition~\ref{ws:weightfunction}. From the given
$\delta_i$ functions we can compute $K_i$. In case of that $K_1>0$ and
$K_2>0$, we need to check whether there exists a weight function for the
conditional distributions ${\Pmu{s_1}}/{K_1}$ and
${\Pmu{s_2}}/{K_2}$ with respect to the current relation $R$.  From
Lemma~\ref{lem:weight_equivalent}, this is equivalent to check whether the
maximum flow for the network constructed from $({\Pmu{s_1}}/{K_1},
{\Pmu{s_2}}/{K_2})$ and $R$ has value $1$.
\end{enumerate}\smallskip

\noindent To check $s_1\wsrel_R s_2$, we want to check whether such
$\delta_i$ functions exist.  The difficulty is that there exist
uncountably many possible $\delta_i$ functions.  In this section, we
first show that whether such $\delta_i$ exists can be characterised by
analysing a parametric network in
Subsection~\ref{sec:parametric_network}. Then, in
Subsection~\ref{sec:breakpoints}, we recall the notion of breakpoints,
and show that the breakpoints play a central role in the parametric
networks considered: only these points need to be considered. Based on
this, we present the algorithm for DTMCs in
Subsection~\ref{sec:weak_alg_dtmc}. An improvement of the algorithm
for certain cases is reported in Subsection~\ref{sec:improvement}.

\subsubsection{The Parametric Network $\N(\gamma)$}
\label{sec:parametric_network}
Let $\gamma\in\Real_{\geq 0}$. Recall that $\N(\Pmu{s_1}, \gamma\Pmu{s_2}, R)$
is obtained from the network $\N(\Pmu{s_1}, \Pmu{s_2}, R)$ be setting
the capacities to the sink $\sink$ by: $\capacity(\overline
t,\sink)=\gamma\P(s_2,t)$. If $s_1,s_2,R$ are clear from the context, we
use $\N(\gamma)$ to denote the network
$\N(\Pmu{s_1},\gamma\Pmu{s_2},R)$ for arbitrary $\gamma\in\Real_{\geq 0}$.

We introduce some notations.  We focus on a particular pair
$(s_1,s_2)\in R$, where $R$ is the current relation. We partition the
set $\post(s_i)$ into $MU_i$ (for: must be in $U_i$) and $PV_i$ (for:
potentially in $V_i$).  The set $PV_1$ consists of those successors of
$s_1$ which can be either put into $U_1$ or $V_1$ or both:
$PV_1=\post(s_1)\cap R^{-1}(s_2)$.  The set $MU_1$ equals
$\post(s_1)\backslash PV_1$, which consists of the successor states
which can only be placed in $U_1$.  The sets $PV_2$ and $MU_2$ are
defined similarly by: $PV_2 = \post(s_2)\cap R(s_1)$ and
$MU_2=\post(s_2)\backslash PV_2$. Obviously,
$\delta_i(u)=1$ for $u\in MU_i$ for $i=1,2$.

\begin{figure}[tbp]
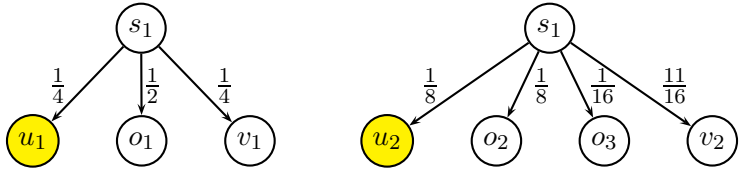

\begin{center}
\psset{levelsep=1.5cm,arrows=->,labelsep=0pt}
\pstree{\Tcircle{$s_1$}}{%
  \Tcircle[fillstyle=solid,fillcolor=yellow]{$u_1$}  \tlput{$\frac{1}{4}$}
  \Tcircle{$o_1$} \trput{$\frac{1}{2}$}
  \Tcircle{$v_1$} \trput{$\frac{1}{4}$}
}\hspace{1cm}
\pstree{\Tcircle{$s_1$}}{%
  \Tcircle[fillstyle=solid,fillcolor=yellow]{$u_2$}  \tlput{$\frac{1}{8}$}
  \Tcircle{$o_2$} \trput{$\frac{1}{8}$}
  \Tcircle{$o_3$} \trput{$\frac{1}{16}$}
  \Tcircle{$v_2$} \trput{$\frac{11}{16}$}
}
\end{center}
  \caption{A simple DTMC.}
  \label{fig:dtmc_partition}
\end{figure}
\begin{exa}\label{exa:dtmc_split}
  Consider the DTMC in Figure~\ref{fig:dtmc_partition} and the relation
  $R=\{(s_1,s_2), (s_1,v_2)$, $(v_1,s_2),
  (u_1,u_2), (o_1,o_2), (o_1,v_2), (v_1,o_3),(v_1,v_2), (o_2,o_1)\}$.
  We have $PV_1=\{v_1\}$ and $PV_2=\{v_2\}$. Thus, $MU_1 =
  \{u_1,o_1\}$, $MU_2=\{u_2,o_2,o_3\}$.
\end{exa}

We say a flow function $f$ of $\N(\gamma)$ is valid for $\N(\gamma)$
iff $f$ saturates all edges $(\source, u_1)$ with $u_1\in MU_1$ and
all edges $(\overline{u_2},\sink)$ with $u_2\in MU_2$.  If there
exists a valid flow $f$ for $\N(\gamma)$, we say that $\gamma$ is
valid for $\N(\gamma)$.  The following lemma considers the case in
which both $s_1$ and $s_2$ have visible steps:

\begin{lem}\label{valid_gamma}
  Let $s_1 \mathrel{R} s_2$. Assume that there exists a state $s_1'\in
  \post(s_1)$ such that $s_1'\not\in R^{-1}(s_2)$, and $s_2'\in
  \post(s_2)$ such that $s_2'\not\in R(s_1)$.  Then, $s_1\wsrel_R s_2$
  iff there exists a valid $\gamma$ for $\N(\gamma)$.
\end{lem}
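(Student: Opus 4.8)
The plan is to read off the functions $\delta_1,\delta_2$ and the weight function $\Delta$ of Definition~\ref{def:dtmc_weak_simulation} directly from a valid flow, and conversely to turn such functions into a valid flow, with the parameter $\gamma$ playing the role of the ratio $K_1/K_2$. First I would record the simplifications the hypotheses buy us. Since every $u\in MU_i$ is forced to carry $\delta_i(u)=1$, the sets $V_i$ are contained in $PV_i$; as $PV_1\subseteq R^{-1}(s_2)$ and $PV_2\subseteq R(s_1)$, Condition~\ref{ws:stuttersimulates} holds automatically for \emph{any} admissible $\delta_i$. Because $MU_1\neq\emptyset$ and $MU_2\neq\emptyset$ consist of genuine (positive-probability) successors, every candidate solution has $K_1>0$ and $K_2>0$, so only the weight-function equalities (Condition~\ref{ws:wfsums}) are ever at stake. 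Finally, whenever $K_2>0$ the reachability Condition~\ref{ws:reachability} comes for free: for $u_1\in U_1$ one has $\Delta(u_1,U_2)>0$, hence some $u_2\in U_2\subseteq\post(s_2)$ with $u_1\mathrel{R}u_2$, and the length-zero fragment $s_2,u_2$ witnesses the condition. The label requirement $L(s_1)=L(s_2)$ is orthogonal to the flow and assumed throughout, and no auxiliary state $\aux$ arises, since $MU_i\neq\emptyset$ forces $s_1,s_2$ to be stochastic.

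For the direction $s_1\wsrel_R s_2\Rightarrow$ \emph{a valid $\gamma$ exists}, I would fix witnessing $\delta_i,\Delta$, set $\gamma:=K_1/K_2$ (well defined and positive), and define a flow on $\N(\gamma)$ by $f(\source,u_1):=\P(s_1,u_1)\delta_1(u_1)$, by $f(u_1,\overline{u_2}):=K_1\,\Delta(u_1,u_2)$ on internal edges, and by $f(\overline{u_2},\sink):=\gamma\,\P(s_2,u_2)\delta_2(u_2)$. The scaling factor $K_1$ is exactly what makes conservation hold: at $u_1$ the outflow $K_1\Delta(u_1,U_2)$ equals $\P(s_1,u_1)\delta_1(u_1)$ by Condition~\ref{ws:wfsums}, matching the source inflow; at $\overline{u_2}$ the inflow $K_1\Delta(U_1,u_2)=K_1\P(s_2,u_2)\delta_2(u_2)/K_2=\gamma\P(s_2,u_2)\delta_2(u_2)$ matches the sink outflow. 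Capacity bounds follow from $\delta_i\le 1$; on $MU_i$ the forced value $\delta_i=1$ saturates exactly the edges demanded by validity; and Condition~\ref{ws:wfgt0} guarantees $f$ is supported on edges that actually occur in $\N(\gamma)$.

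For the converse, given a valid flow $f$ for some $\gamma$, I would first argue $\gamma>0$: validity saturates the nonempty set of $MU_1$-source edges, producing positive flow value, which cannot reach the sink if all sink capacities vanish. I then put $\delta_1(u_1):=f(\source,u_1)/\P(s_1,u_1)$ and $\delta_2(u_2):=f(\overline{u_2},\sink)/(\gamma\P(s_2,u_2))$ on $\post(s_1),\post(s_2)$ (and $1$ elsewhere), and $\Delta(u_1,u_2):=f(u_1,\overline{u_2})/K_1$. A convenient point is that all forward flows are automatically nonnegative, because the reverse edges have capacity $0$; this gives $\Delta\ge 0$, while $\Delta\le 1$ follows from $f(u_1,\overline{u_2})\le f(\source,u_1)\le\size{f}=K_1$. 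Computing $K_1=\sum_{u_1\in U_1}f(\source,u_1)=\size{f}$ and $K_2=\size{f}/\gamma$ recovers $\gamma=K_1/K_2$. Saturation of the $MU_i$ edges forces $\delta_i=1$ there, so the partition-based Condition~\ref{ws:stuttersimulates} holds; conservation, together with the fact that positive internal flow only reaches $U_2$ (and only emanates from $U_1$), turns the conservation equalities into precisely Condition~\ref{ws:wfsums}; and reachability is the free $n=0$ argument above.

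The main obstacle, and the step I would write out most carefully, is Condition~\ref{ws:wfsums} in the $\Leftarrow$ direction: one must check that $\sum_{u_2\in U_2}f(u_1,\overline{u_2})$ equals the \emph{total} outflow $f(\source,u_1)$, which needs the observation that any edge carrying positive flow ends in $U_2$ (and symmetrically starts in $U_1$), so restricting to $U_i$ in $\Delta(\cdot,U_2)$ and $\Delta(U_1,\cdot)$ loses nothing. Getting the normalisation constant $K_1$ and the identity $\gamma=K_1/K_2$ to line up on both marginals is the only genuinely delicate bookkeeping; everything else reduces to the already-established equivalence between weight functions and unit flows (Lemma~\ref{lem:weight_equivalent}), now read parametrically in $\gamma$ with the $MU_i$-saturation encoding the forced membership $\delta_i=1$.
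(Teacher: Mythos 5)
Your proposal is correct and takes essentially the same route as the paper's proof: in the forward direction the identical flow $f(\source,u)=\P(s_1,u)\delta_1(u)$, $f(u,\overline{v})=K_1\Delta(u,v)$, $f(\overline{v},\sink)=\gamma\P(s_2,v)\delta_2(v)$ with $\gamma=K_1/K_2$, and in the converse the same readoff of $\delta_1,\delta_2$ from the source/sink flows and $\Delta=f/K_1$, using saturation of the $MU_i$ edges for Condition~\ref{ws:stuttersimulates} and conservation plus the support observation (positive internal flow runs only from $U_1$ to $U_2$) for Condition~\ref{ws:wfsums}. You additionally make explicit a few points the paper leaves implicit (ruling out $\gamma=0$, verifying $\Delta\le 1$, and the length-zero path fragment witnessing the reachability condition), which is fine and introduces no gap.
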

\begin{proof}
  By assumption, we have that $s_i'\in MU_i$ for $i=1,2$, thus
  $MU_i\not=\emptyset$, and it holds that $\delta_i(s_i')=1$ for $i=1,2$.

We first show the \emph{only if} direction.  Assume $s_1\wsrel_R s_2$, and let 
$\delta_i,U_i,V_i,K_i,\Delta$ (for $i=1,2$) as described in
Definition~\ref{def:dtmc_weak_simulation}. Since $MU_i\not=\emptyset$
for $i=1,2$, both $K_1$ and $K_2$ are greater than $0$.
We let $\gamma={K_1}/{K_2}$. For $s,s'\in S$, we define
the function $f$ for 
$\N(\gamma)$:
\begin{align*}
f(\source,s) = \P(s_1,s)\delta_1(s), \tab 
f(s,\overline t) = K_1\Delta(s,t), \tab
f(\overline s,\sink) = \gamma\P(s_2,s)\delta_2(s)
\end{align*}
Since $\delta_i(s)\le 1$ ($i=1,2$) for $s\in S$,
$f(\source,s)\le\P(s_1,s)$ and
$f(\overline{s},\sink)\le\gamma\P(s_2,s)$. Therefore, $f$ satisfies the
capacity constraints. $f$ also satisfies the conservation rule:
\begin{align*}
  &f(s,\overline{S})=K_1
    \Delta(s,S) = \P(s_1,s)\delta_1(s)=f(\source,s)\\
  &f(S,\overline s)=K_1 \Delta(S,s) = \gamma 
    K_2 \Delta(S,s) = \gamma\P(s_2,s)\delta_2(s)=f(\overline s,\sink)
\end{align*}
Hence, $f$ is a flow function for $\N(\gamma)$.
For $u_1\in MU_1$, we have $\delta_1(u_1)=1$, therefore,
$f(\source,u_1)=\P(s_1,u_1)$. Analogously,
$f(\overline{u_2},\sink)=\gamma\P(s_2,u_2)$ for $u_2\in MU_2$.
Hence, $f$ is valid for $\N(\gamma)$, implying that $\gamma$ is valid for
$\N(\gamma)$. 

Now we show the \emph{if} direction.
Assume that there exists $\gamma>0$ and a valid flow $f$ for
$\N(\gamma)$.  The function $\delta_1$ is defined by: $ \delta_1(s)$
equals ${f(\source,s)}/{\P(s_1,s)}$ if $s\in \post(s_1)$ and $0$
otherwise. The function $\delta_2$ is defined similarly: $\delta_2(s)$
equals ${f(\overline s,\sink)}/{\gamma\P(s_2,s)}$ if $s\in
\post(s_2)$ and $0$ otherwise.  Let the sets $U_i$ and $V_i$ be
defined as required by Definition~\ref{def:dtmc_weak_simulation}.
It follows that
\begin{align*}
K_1 
&=\sum_{s\in
  U_1}\delta_1(s)\P(s_1,s)=\sum_{s\in U_1}f(\source,s)=f(\source,U_1)\\
K_2
&=\sum_{s\in
  U_2}\delta_2(s)\P(s_2,s)=\sum_{s\in
  U_2}\frac{f(\overline{s},\sink)}{\gamma}
=\frac{f(\overline{U_2},\sink)}{\gamma}
\end{align*}
Since the amount of flow out of $\source$ is the same as the amount of flow
into $\sink$, we have ${K_1}/{K_2} = \gamma$.  Since $\emptyset \not=
MU_i\subseteq U_i$ for $i=1,2$, both of $K_1$ and $K_2$ are greater than
$0$.  We show that the Conditions~\ref{ws:stuttersimulates}a and
\ref{ws:stuttersimulates}b of Definition~\ref{def:dtmc_weak_simulation} are
satisfied. For $v_1\in V_1$, we have that $\delta_1(v_1)<1$ which implies
that $f(\source,v_1)<\P(s_1,v_1)$. Since $f$ is valid for $\N(\gamma)$, and
since the edge $(\source,v_1)$ is not saturated by $f$, it must hold that
$v_1\in PV_1$. Therefore, $v_1\mathrel{R}s_2$. Similarly, we can prove that
$s_1\mathrel{R}v_2$ for $v_2\in V_2$.

We define $\Delta(w,w')={f(w,\overline{w'})}/{K_1}$ for $w,w'\in
S$. Assume that $\Delta(w,w')>0$. Then, $f(w,\overline{w'})>0$, which
implies that $(w,\overline{w'})$ is an edge of $\N(\gamma)$, therefore,
$(w,w')\in R$. By the flow conservation rule, $f(\source,w)\ge
f(w,\overline{w'})>0$, implying that $\delta_1(w)>0$. By the definition of
$U_1$, we obtain that $w\in U_1$. Similarly, we can show that $w'\in
U_2$. Hence, the Condition~\ref{ws:wfgt0} is satisfied. To prove Condition
\ref{ws:wfsums}:
\[
\Delta(w,U_2) 
= \sum_{u_2\in U_2}\frac{f(w,\overline{u_2})}{K_1}
= \frac{f(w,\overline{U_2})}{K_1}
\overeq{(*)} \frac{f(\source, w)}{K_1} 
=\frac{\delta_1(w)\P(s_1,w)}{K_1}
\] \nopagebreak
where equality $(*)$ follows from the flow conservation rule.
Therefore, for $w\in S$ we have that $K_1\Delta(w,U_2)=
\P(s_1,w)\delta_1(w)$. Similarly, we can show
$K_2\Delta(U_1,w)=\P(s_2,w)\delta_2(w).$  Condition~\ref{ws:wfsums} is also satisfied. As $K_1>0$ and $K_2>0$, the
reachability condition holds trivially, hence, $s_1\wsrel_R s_2$.
\end{proof}

\begin{figure}[tbp]
  \begin{center}
    \psset{levelsep=1.5cm,arrows=->,nodesep=0pt,labelsep=0pt,unit=.8}
    \begin{pspicture}(0,-1)(7, 3)
      \rput(2,0){\circlenode{o1}{$o_1$}}
      \rput(5,0){\circlenode{o2}{$\overline{o_2}$}}
      \rput(0,.5){\circlenode{source}{$\source$}}
      \rput(5,1){\circlenode{o3}{$\overline{o_3}$}}
      \rput(7,.5){\circlenode{sink}{$\sink$}}
      \rput(2,2){\circlenode{v1}{$v_1$}}
      \rput(5,2){\circlenode{v2}{$\overline{v_2}$}}
      \rput(2,-1){\circlenode[fillstyle=solid,fillcolor=yellow]{u1}{$u_1$}}
      \rput(5,-1){\circlenode[fillstyle=solid,fillcolor=yellow]{u2}{$\overline{u_2}$}}
      \ncline{source}{o1} \naput{\small $\frac{1}{2}$}
      \ncline{source}{v1} \naput{\small $\frac{1}{4}$}
      \ncline{source}{u1} \nbput{$\frac{1}{4}$}
      \ncline{v2}{sink} \naput{$\frac{11}{8}$}
      \ncline{o2}{sink} \ncput{$\frac{1}{4}$}
      \ncline{o3}{sink} \naput[npos=.24]{$\frac{1}{8}$}
      \ncline{u2}{sink} \nbput{$\frac{1}{4}$}
      \ncline{v1}{v2}
      \ncline{v1}{o3}
      \ncline{o1}{v2}
      \ncline{o1}{o2}
      \ncline{u1}{u2}
    \end{pspicture}\hspace{2cm}
    \begin{pspicture}(0,-1)(7, 3)
      \rput(2,0){\circlenode{o1}{$o_1$}}
      \rput(5,0){\circlenode{o2}{$\overline{o_2}$}}
      \rput(0,1.5){\circlenode{source'}{$\source'$}}
      \rput(5,1){\circlenode{o3}{$\overline{o_3}$}}
      \rput(7,1.5){\circlenode{sink'}{$\sink'$}}
      \rput(2,2){\circlenode{v1}{$v_1$}}
      \rput(5,2){\circlenode{v2}{$\overline{v_2}$}}
      \rput(5,3){\circlenode{source}{$\source$}}
      \rput(2,3){\circlenode{sink}{$\sink$}}
      \rput(2,-1){\circlenode[fillstyle=solid,fillcolor=yellow]{u1}{$u_1$}}
      \rput(5,-1){\circlenode[fillstyle=solid,fillcolor=yellow]{u2}{$\overline{u_2}$}}
      \ncline{source}{v1} \ncput[npos=.75]{$\frac{1}{4}$}
      \ncline{source}{sink'} \naput{$\frac{3}{4}$}
      \ncline{source'}{u1} \nbput{$\frac{1}{4}$}
      \ncline{u2}{sink'} \nbput{$\frac{1}{4}$}
      \ncline{u1}{u2}
      \ncline{v1}{v2}
      \ncline{v1}{o3}
      \ncline{o1}{v2}
      \ncline{o1}{o2}
      \ncline{v2}{sink} \ncput[npos=.25]{$\frac{11}{8}$}
      \ncline{o2}{sink'} \ncput{$\frac{1}{4}$}
      \ncline{o3}{sink'} \naput{$\frac{1}{8}$}
      \ncline{sink}{source}
      \ncline{source'}{sink} \naput{$\frac{5}{8}$}
      \ncline{source'}{o1}   \naput{$\frac{1}{2}$}
    \end{pspicture}
  \end{center}
  \caption{Left: The network $\N(2)$ of the DTMC in Figure~\ref{fig:dtmc_partition};
                Right: The transformed network $\N_t(2)$ for $\N(2)$.}
  \label{fig:dtmc_network}
\end{figure}
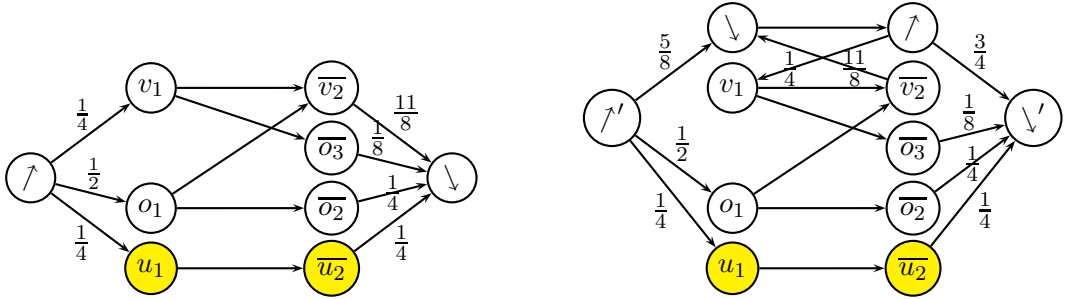

\begin{exa}
  Consider again Example~\ref{exa:dtmc_split} with the relation $R=
  \{(s_1,s_2),(s_1,v_2), (v_1,s_2)$,
  $(u_1,u_2),(o_1,o_2),(o_1,v_2),(v_1,o_3),(v_1,v_2),(o_2,o_1)\}$.
  The network $\N(2)$ is depicted on the left part of
  Figure~\ref{fig:dtmc_network}. Edges without numbers have capacity
  $\infty$. It is easy to see that $2$ is valid for $\N(2)$: the
  corresponding flow sends $\frac{1}{4}$ amount of flow along the path
  $\source,u_1,\overline{u_2},\sink$, $\frac{1}{4}$ amount of flow
  along the path $\source, o_1,\overline{o_2},\sink$, $\frac{1}{4}$
  amount of flow along the path $\source, o_1,\overline{v_2},\sink$,
  and $\frac{1}{8}$ amount of flow along the path
  $\source,v_1,\overline{o_3},\sink$.
\end{exa}

For a fixed $\gamma\in \Real_{>0}$, we now address the problem of checking
whether there exists a valid flow $f$ for $\N(\gamma)$.  This is a feasible
flow problem ($f$ has to saturate edges to $MU_1$ and from $MU_2$). As we
have discussed in Section~\ref{sec:flow}, it can be solved by applying a
simple transformation to the graph (in time $\O(\size{MU_1} +
\size{MU_2})$), solving the maximum flow problem for the transformed graph,
and checking whether the flow saturates all edges from the new source.

\begin{exa}
\label{exa:feasible}
Consider the network $\N(2)$ on the left part of
Figure~\ref{fig:dtmc_network}.  Applying the transformation for the
feasible flow problem described in Section~\ref{sec:flow}, we get the
transformed network $\N_t(2)$ depicted on the right part of
Figure~\ref{fig:dtmc_network}.  It is easy to see that the maximum
flow $h$ for $\N_t(2)$ has value $\frac{11}{8}$. Namely: It sends
$\frac{1}{4}$ amount of flow along the path
$\source',u_1,\overline{u_2},\sink'$, $\frac{1}{4}$ amount of flow
along the path $\source',o_1,\overline{o_2},\sink'$, $\frac{1}{4}$
amount of flow along
$\source',o_1,\overline{v_2},\sink,\source,\sink'$, $\frac{1}{8}$
amount of flow along
$\source',\sink,\source,v_1,\overline{o_3},\sink'$, and $\frac{1}{2}$
amount of flow along $\source',\sink,\source,\sink'$.  Thus, it uses
all capacities of edges from $\source'$. This implies that $2$ is
valid for the network $\N(2)$.
\end{exa}

\subsubsection{Breakpoints}
\label{sec:breakpoints}
Consider the pair $(s_1,s_2)\in R$. Assume the conditions of
Lemma~\ref{valid_gamma} are satisfied, thus, to check whether $s_1
\wsrel_R s_2$ it is equivalent to check whether a valid $\gamma$ for
$\N(\gamma)$ exits.  We show that only a finite possible $\gamma$, called
breakpoints, need to be considered. The breakpoints can be computed using a
variant of the parametric maximum flow algorithm.  Then, $s_1\wsrel_R s_2$
if and only if for some breakpoint it holds that the maximum flow for the
corresponding transformed network $\N_t(\gamma)$ has a large enough value.

Let $\size{V}$ denote the number
of vertices of $\mathcal{N}(\gamma)$. Let $\kappa(\gamma)$ denote the
\emph{minimum cut capacity function} of the parameter $\gamma$, which
is the capacity of a minimum cut of $\N(\gamma)$ as a function of $\gamma$.
The capacity of a minimum cut equals the value of a maximum flow.  If the
edge capacities in the network are linear functions of $\gamma$,
$\kappa(\gamma)$ is a piecewise-linear concave function with at most
$\size{V}-2$ breakpoints~\cite{GalloGT89}, i.\,e., points where the slope
$\frac{d\kappa}{d\gamma}$ changes.  The $\size{V}-1$ or fewer line segments
forming the graph of $\kappa(\gamma)$ correspond to $\size{V}-1$ or fewer
distinct minimal cuts.  The minimum cut can be chosen as the same on a
single linear piece of $\kappa(\gamma)$, and at breakpoints certain edges
become saturated or unsaturated.  The capacity of a minimum cut for some
$\gamma^*$ gives an equation that contributes a line segment to the
function $\kappa(\gamma)$ at $\gamma=\gamma^*$. Moreover, this line segment
connects the two points $(\gamma_1,\kappa(\gamma_1))$ and
$(\gamma_2,\kappa(\gamma_2))$, where $\gamma_1,\gamma_2$ are the nearest
breakpoints to the left and right, respectively.

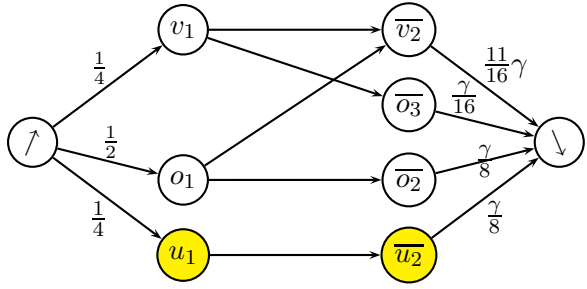
\begin{figure}[t]
  \begin{center}
    \psset{levelsep=1.5cm,arrows=->,nodesep=0pt,labelsep=0pt}
    \begin{pspicture}(0,-1)(7, 2)
      \rput(2,0){\circlenode{o1}{$o_1$}}
      \rput(5,0){\circlenode{o2}{$\overline{o_2}$}}
      \rput(0,.5){\circlenode{source}{$\source$}}
      \rput(5,1){\circlenode{o3}{$\overline{o_3}$}}
      \rput(7,.5){\circlenode{sink}{$\sink$}}
      \rput(2,2){\circlenode{v1}{$v_1$}}
      \rput(5,2){\circlenode{v2}{$\overline{v_2}$}}
      \rput(2,-1){\circlenode[fillstyle=solid,fillcolor=yellow]{u1}{$u_1$}}
      \rput(5,-1){\circlenode[fillstyle=solid,fillcolor=yellow]{u2}{$\overline{u_2}$}}
      \ncline{source}{o1} \naput{\small $\frac{1}{2}$}
      \ncline{source}{v1} \naput{\small $\frac{1}{4}$}
      \ncline{source}{u1} \nbput{$\frac{1}{4}$}
      \ncline{v2}{sink} \naput{$\frac{11}{16}\gamma$}
      \ncline{o2}{sink} \ncput{$\frac{\gamma}{8}$}
      \ncline{o3}{sink} \naput[npos=.24]{$\frac{\gamma}{16}$}
      \ncline{u2}{sink} \nbput{$\frac{\gamma}{8}$}
      \ncline{v1}{v2}
      \ncline{v1}{o3}
      \ncline{o1}{v2}
      \ncline{o1}{o2}
      \ncline{u1}{u2}
    \end{pspicture}
  \end{center}
  \caption{The network $\N(\gamma)$ of the DTMC in
Figure~\ref{fig:dtmc_partition} in Example~\ref{exa:dtmc_split}.}
  \label{fig:dtmc_network_gamma}
\end{figure}
\begin{exa}
\label{exa:one_valid_breakpoint}
Consider the DTMC in Figure~\ref{fig:dtmc_partition}, together with the relation $R=
\{(s_1,s_2)$, $(s_1,v_2), (v_1,s_2), (u_1,u_2),(o_1,o_2),(o_1,v_2),(v_1,o_3),
(v_1,v_2),(o_2,o_1)\}$.  The network $\N(\gamma)$ for the pair $(s_1,s_2)$
is depicted in Figure~\ref{fig:dtmc_network_gamma}.  

There are two breakpoints, namely $\frac{6}{7}$ and $2$. For $\gamma\le
\frac{6}{7}$, all edges leading to the sink can be saturated. This can be
established by the following flow function $f$: sending $\frac{\gamma}{8}$
amount of flow along the path $\source,u_1,\overline{u_2},\sink$,
$\frac{\gamma}{8}$ amount of flow along the path
$\source,o_1,\overline{o_2},\sink$, $\frac{11\gamma}{24}$ amount of flow
along the path $\source,o_1,\overline{v_2},\sink$, $\frac{\gamma}{16}$
amount of flow over $\source,v_1,\overline{o_3},\sink$,
$\frac{11\gamma}{48}$ amount of flow along the path
$\source,v_1,\overline{v_2},\sink$.  The amount of flow out of node $o_1$,
denoted by $f(o_1,\overline{S})$, is $\frac{7\gamma}{12}$. Given that
$\gamma\le \frac{6}{7}$, we have that $f(o_1,\overline{S})\le
\frac{1}{2}$. Similarly, consider the amount of flow out of node $v_1$,
which is denoted by $f(v_1,\overline{S})$, is $\frac{7\gamma}{24}$ which
implies that $f(v_1,\overline{S})\le
\frac{1}{4}$. The maximum flow  thus has value $\size{f} =  \frac{\gamma}{8} +
\frac{\gamma}{8} + \frac{11\gamma}{24} +  \frac{\gamma}{16} +
\frac{11\gamma}{48}=\gamma$. Thus the value of the maximum flow, or
equivalently the value of the minimum cut, for $\gamma\le \frac{6}{7}$ is
$\kappa(\gamma)=\gamma$.

Observe that for $\gamma=\frac{6}{7}$, the edges to $v_1$ and $o_1$ are
saturated, i.e., we have used full capacities of the edge $(\source,v_1)$
and $(\source,o_1)$. Thus, by a greater value of $\gamma$, although the
capacities
$\capacity(\{\overline{v_2},\overline{o_2},\overline{o_3}\},\sink)$
increase (become greater than $\frac{3}{4}$), no additional flow can be
sent through $\{v_1,o_1\}$. For the other breakpoint $2$, we observe that
for a value of $\gamma\le 2$, we can still send $\frac{\gamma}{8}$ through
the path $\source,u_1,u_2,\sink$, but for $\gamma> 2$, the edge to $u_1$
keeps saturated, thus the amount of flow sent through this path does not
increase any more. Thus, for $\gamma\in[\frac{6}{7},2]$, the maximum value,
or the value of the minimum cut, is $\frac{3}{4}+\frac{\gamma}{8}$. The
first term $\frac{3}{4}$ corresponds to the amount of flow through $v_1$
and $o_1$. The breakpoint $\frac{6}{7}$ is not valid since the edge to
$u_1$ can not be saturated. As discussed in Example~\ref{exa:feasible}, the
breakpoint $2$ is valid. The curve for $\kappa(\gamma)$ is depicted in
Figure~\ref{fig:curvebreakpoints}.
\end{exa}
\begin{figure}[tbp]
\begin{center}
  \includegraphics[angle=-90,width=60mm]{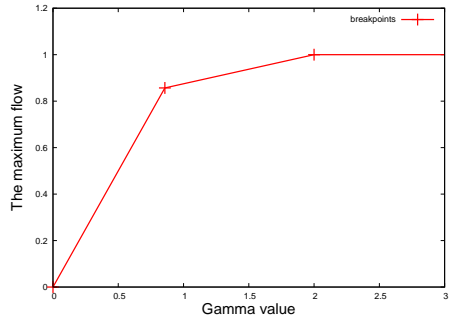}
\end{center}
\caption{The value of the maximum flow, or equivalently the value of the
minimum cut, as a function of $\gamma$ for the network in
Figure~\ref{fig:dtmc_network_gamma}.}
\label{fig:curvebreakpoints}
\end{figure}

In the following lemma we show that if there is any valid $\gamma$, then at
least one breakpoint is valid.

\begin{lem}{\label{parameter_flow}} 
Assume  $\gamma^*\in(\gamma_1,\gamma_2)$ where $\gamma_1, \gamma_2$
are two subsequent breakpoints of $\kappa(\gamma)$, or $\gamma_1=0$ 
and $\gamma_2$ is the first breakpoint, or $\gamma_1$ 
is the last breakpoint and $\gamma_2=\infty$. Assume $\gamma^*$ is
valid for $\N(\gamma^*)$, then, $\gamma$ is valid for $\N(\gamma)$ for
all $\gamma\in[\gamma_1,\gamma_2]$.
\end{lem}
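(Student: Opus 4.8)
The plan is to reduce the statement to a one–dimensional, piecewise–linear analysis of a single parametric max–flow value. First I would recall that under the hypotheses of Lemma~\ref{valid_gamma} we have $MU_1,MU_2\neq\emptyset$, and that deciding whether $\gamma$ is valid for $\N(\gamma)$ is exactly a \emph{feasible-flow} problem: we must route a flow that saturates every edge $(\source,u_1)$ with $u_1\in MU_1$ and every edge $(\overline{u_2},\sink)$ with $u_2\in MU_2$. By the reduction of Section~\ref{sec:flow} this is equivalent to the transformed network $\N_t(\gamma)$ admitting a flow that saturates all edges leaving its new source, i.e.\ to the maximum flow of $\N_t(\gamma)$ attaining the target value $A+\gamma B$, where $A=\P(s_1,MU_1)$ and $B=\P(s_2,MU_2)$ are the total supplies created on the two sides. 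I would let $\kappa(\gamma)$ denote the minimum-cut capacity of this parametric feasibility network; since all of its capacities are affine in $\gamma$, $\kappa$ is piecewise-linear and concave, and I take the $\gamma_j$ to be its breakpoints.

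The core of the argument is then a convexity observation. Define the gap $g(\gamma)=(A+\gamma B)-\kappa(\gamma)$. Because a flow can never exceed the total supply we have $g(\gamma)\ge 0$ for all $\gamma$, and by the characterisation above $\gamma$ is valid iff $g(\gamma)=0$. On $[\gamma_1,\gamma_2]$ there is no breakpoint in the interior, so $\kappa$ restricted to this interval is affine, hence so is $g$. Now $g$ is a non-negative affine function on $[\gamma_1,\gamma_2]$ vanishing at the interior point $\gamma^*$: if its slope were non-zero it would be strictly negative just to one side of $\gamma^*$, contradicting $g\ge 0$. Therefore $g\equiv 0$ on the whole closed interval $[\gamma_1,\gamma_2]$, which is precisely the assertion that every $\gamma\in[\gamma_1,\gamma_2]$ is valid.

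As an independent check on the ``filling-in'' direction I would note a purely flow-theoretic fact: the set of valid parameters is convex. Given valid flows $f_a,f_b$ for $\gamma_a,\gamma_b$ and $\lambda\in[0,1]$, the combination $f=\lambda f_a+(1-\lambda)f_b$ is a valid flow for $\gamma=\lambda\gamma_a+(1-\lambda)\gamma_b$: conservation and antisymmetry survive by linearity, the constant source capacities are respected, the sink capacities are respected because they scale linearly in $\gamma$, and each $MU$-edge stays \emph{exactly} saturated since $\lambda\gamma_a\P(s_2,u_2)+(1-\lambda)\gamma_b\P(s_2,u_2)=\gamma\P(s_2,u_2)$. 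Thus once the two endpoints are known valid the interior follows again, and a short compactness argument (flows live in a bounded polytope and the saturation conditions are closed) shows the valid set is moreover closed.

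I expect the main obstacle to be the first step rather than the last: pinning down the equivalence ``$\gamma$ valid $\iff\kappa(\gamma)=A+\gamma B$'' through the feasible-flow transformation, and—crucially—ensuring that the parametric function whose breakpoints are the $\gamma_j$ is the one attached to the \emph{feasibility} network, so that $\kappa$ really is affine between consecutive $\gamma_j$. This matters because validity is \emph{not} a function of the raw maximum flow of $\N(\gamma)$ alone: a local bottleneck feeding an $MU_2$-node (its incoming capacity is source-side and hence independent of $\gamma$) can make saturation fail for large $\gamma$ without changing the global min cut of $\N(\gamma)$. It is therefore the min-cut structure of the feasibility network, invariant between its breakpoints, that must be used; once that is fixed, the affine / non-negative / interior-zero step is immediate.
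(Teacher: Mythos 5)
Your argument is correct, and it takes a genuinely different route from the paper's. The paper works directly on $\N(\gamma)$: starting from a valid maximum flow at $\gamma^*$, it extends upward by cycle-free augmentation (which preserves saturation of the $\source$-to-$MU_1$ edges) and downward by a preflow continuation, and handles the $MU_2$ side by selecting a minimum cut $(X,X')$ with $\overline{MU_2}\subseteq X$ that remains minimal along the whole linear piece of $\kappa$. You instead pass to the feasibility network $\N_t(\gamma)$, characterise validity as $\kappa_t(\gamma)=A+\gamma B$, and dispose of the interval in one stroke: the gap $g(\gamma)=A+\gamma B-\kappa_t(\gamma)$ is affine between consecutive breakpoints of $\kappa_t$, non-negative, and zero at the interior point $\gamma^*$, hence identically zero on the closed interval. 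Your interpolation argument (a convex combination of valid flows is a valid flow for the interpolated $\gamma$) is also sound and yields, independently, that the valid set is a closed interval.

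Be aware, though, that your rebasing of the breakpoints is not a cosmetic choice: the two statements differ, and the closing caveat in your write-up identifies a genuine defect in the paper's version. As stated, the lemma takes $\gamma_1,\gamma_2$ to be breakpoints of $\kappa$, the min-cut function of $\N(\gamma)$ itself, and in that form it fails on the paper's own network in Figure~\ref{fig:dtmc_valid}: there $\kappa(\gamma)=\min(\frac{1}{2},\gamma)$ has a single breakpoint at $\frac{1}{2}$, while the valid set is exactly $[\frac{1}{2},1]$ (saturating $(\source,u_1)$ forces $\gamma/2$ through $\overline{u_2}$ and $\frac{1}{2}-\gamma/2\le\gamma/2$ through $\overline{v_2}$), so $\gamma^*=\frac{3}{4}$ is valid although $\gamma=2$, which lies in the segment from the last breakpoint to $\infty$, is not. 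The step of the paper's proof that breaks is precisely the cut selection: saturation of the edges $(\overline{u_2},\sink)$ by $f_{\gamma^*}$ does not imply that some minimum cut has $\overline{MU_2}\subseteq X$ — in this example the unique minimum cut is $(\{\source,v_1\},\cdot)$. Your $\kappa_t$-based formulation repairs this, because the saturation constraints are encoded in the capacities, and hence in the cut structure, of $\N_t$; indeed, in the example both endpoints $\frac{1}{2}$ and $1$ of the valid set are breakpoints of $\kappa_t$, as your argument predicts. If the repair is carried back into the algorithm, one consequence should be made explicit: the breakpoints must then be computed for $\N_t(\gamma)$, which carries $\gamma$-dependent capacities on edges not adjacent to its source and sink (e.g., $(\overline{v_2},\sink)$ and $(\overline{u_2},\sink')$), so the breakpoint algorithm of Gallo, Grigoriadis and Tarjan, as invoked in the paper on the reversed $\N(\gamma)$, does not apply verbatim and a different method for computing the breakpoints of the concave piecewise-linear function $\kappa_t$ is needed.
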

\begin{proof}
  Consider the network $\N(\gamma^*)$. Assume that the maximum flow
  $f_{\gamma^*}$ is a valid maximum flow for $\N(\gamma^*)$.
  
  Assume first $\gamma' \in (\gamma^*, \gamma_2]$.  We use the augmenting
  path algorithm~\cite{AMO93} to obtain a maximum flow $f^*$ in the
  residual network $\N_{f_{\gamma^*}}(\gamma')$, requiring that the
  augmenting path contains no cycles, which is a harmless
  restriction. Then, $f_{\gamma'}:=f_{\gamma^*}+f^*$ is a maximum flow in
  $\N(\gamma')$.  Since $f_{\gamma^*}$ saturates edges from $\source$ to
  $MU_1$, $f_{\gamma'}$ saturates edges from $\source$ to $MU_1$ as well ,
  as flow along an augmenting path without cycles does not un-saturate
  edges to $MU_1$.  We choose the minimum cut $(X,X')$ for $\N(\gamma^*)$
  with respect to $f_{\gamma^*}$ such that $\overline{MU_2}\cap
  X'=\emptyset$, or equivalently $\overline{MU_2}\subseteq X$. This is
  possible since $f_{\gamma^*}$ saturates all edges $(\overline u_2,\sink)$
  with $u_2\in MU_2$.  The minimum cut for $f_{\gamma'}$, then, can also be
  chosen as $(X,X')$, as $(\gamma', \kappa(\gamma'))$ lies on the same line
  segment as $(\gamma^*,\kappa(\gamma^*))$. Hence, $f_{\gamma'}$ saturates
  the edges from $\overline{MU_2}$ to $\sink$, which indicates that
  $f_{\gamma'}$ is valid for $\N(\gamma')$. Therefore, $\gamma'$ is valid
  for $\N(\gamma')$ for $\gamma'\in (\gamma^*,\gamma_2]$.

  Now let $\gamma'\in [\gamma_1,\gamma^*)$. For the valid maximum flow
  $f_{\gamma^*}$ we select the minimal cut $(X,X')$ for $\N(\gamma^*)$ such
  that $MU_1\cap X=\emptyset$. Let $d$ denote a valid distance function
  corresponding to $f_{\gamma^*}$. We replace $f_{\gamma^*}(\overline
  v,\sink)$ by $\min\{f_{\gamma^*}(\overline v,
  \sink),\capacity_{\gamma'}(\overline v, \sink)\}$ where
  $\capacity_{\gamma'}$ is the capacity function of $\N(\gamma')$. The
  modified flow is a preflow for the network $\N(\gamma')$. Moreover, $d$
  stays a valid distance function as no new residual edges are
  introduced. Then, we apply the preflow algorithm to get a maximum flow
  $f_{\gamma'}$ for the network $\N(\gamma')$. Since no flow is pushed
  back from the sink, edges from $\overline{MU_2}$ to $\sink$ are kept
  saturated.  Since $(\gamma^*,\kappa(\gamma^*))$ and
  $(\gamma',\kappa(\gamma'))$ are on the same line segment, the minimal cut
  for $f_{\gamma'}$ can also be chosen as $(X,X')$, which indicates that
  $f_{\gamma'}$ saturates all edges to $MU_1$. This implies that $\gamma'$
  is valid for $\N(\gamma')$ for $\gamma'\in [\gamma_1,\gamma^*)$.
\end{proof}

In Example~\ref{exa:one_valid_breakpoint}, only one breakpoint is valid. In
the following example we show that it is in general possible that more than
one breakpoint is valid.
\begin{figure}[tbp]
  \begin{center}
    \psset{arrows=->,nodesep=0pt,labelsep=1pt}
    \begin{pspicture}(0,-1)(7, 3)
      \rput(2,0){\circlenode{u3}{$u_3$}}
      \rput(5,0){\circlenode{u4}{$\overline{u_4}$}}
      \rput(0,1){\circlenode{source}{$\source$}}
      \rput(2,1){\circlenode{u5}{$u_5$}}
      \rput(5,1){\circlenode{u6}{$\overline{u_6}$}}
      \rput(7,1){\circlenode{sink}{$\sink$}}
      \rput(2,2){\circlenode{o1}{$o_1$}}
      \rput(5,2){\circlenode{o2}{$\overline{o_2}$}}
      \rput(2,3){\circlenode{v1}{$v_1$}}
      \rput(5,3){\circlenode{v2}{$\overline{v_2}$}}
      \rput(2,-1){\circlenode{u1}{$u_1$}}
      \rput(5,-1){\circlenode{u2}{$\overline{u_2}$}}
      \ncline{source}{v1} \naput{$\frac{1}{4}$}
      \ncline{source}{o1} \ncput{$\frac{1}{4}$}
      \ncline{source}{u1} \nbput{$\frac{1}{8}$}
      \ncline{source}{u3}   \ncput{$\frac{1}{8}$}
      \ncline{source}{u5}   \ncput{$\frac{1}{4}$}
      \ncline{v2}{sink} \naput{$\frac{\gamma}{4}$}
      \ncline{o2}{sink} \ncput{$\frac{\gamma}{4}$}
      \ncline{u4}{sink} \ncput{$\frac{\gamma}{8}$}
      \ncline{u6}{sink} \ncput{$\frac{\gamma}{8}$}
      \ncline{u2}{sink} \nbput{$\frac{\gamma}{4}$}
      \ncline{u1}{u2}
      \ncline{o1}{o2}
      \ncline{o1}{v2}
      \ncline{u3}{u4}
      \ncline{u5}{u6}
      \ncline{v1}{v2}
    \end{pspicture}
  \end{center}
  \caption{A network in which more than one breakpoint is valid.}
  \label{fig:flow_network_breakpoints}
\end{figure}
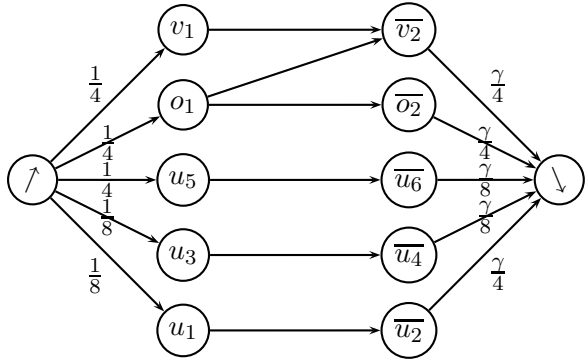
\begin{exa}
Consider the network depicted in Figure~\ref{fig:flow_network_breakpoints}.
By a similar analysis as Example~\ref{exa:one_valid_breakpoint}, we can
compute that there are three breakpoints $\frac{1}{2}$, $1$ and
$2$. Assuming that $MU_1=\{o_1\}$ and $MU_2=\{o_2\}$, we show that all
$\gamma\in[\frac{1}{2},1]$ are valid. We send $\frac{\gamma}{4}$ amount of flow
along the path $\source,o_1,\overline{o_2},\sink$, and
$\frac{1}{4}-\frac{\gamma}{4}$ amount of flow along the path
$\source,o_1,\overline{v_2},\sink$. If $\gamma\in[\frac{1}{2},1]$, then $0
\leq \frac{1}{4}-\frac{\gamma}{4} \leq
\frac{\gamma}{4}$ implying that the flow on edge $(\overline{v_2},\sink)$
satisfies the capacity constraints. Obviously this flow is feasible, and all
$\gamma\in[\frac{1}{2},1]$ are valid for $\N(\gamma)$.
\end{exa}

As we would expect now, it is sufficient to consider only the
breakpoints of $\N(\gamma)$:
\begin{lem}\label{valid_breakpoint}
  There exists a valid $\gamma$ for $\N(\gamma)$ iff one of the
  breakpoints of $\N(\gamma)$ is valid.
\end{lem}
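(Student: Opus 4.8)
The plan is to prove both directions of the biconditional, with the nontrivial direction being the forward implication: if some $\gamma$ is valid for $\N(\gamma)$, then some breakpoint is valid. The reverse direction is immediate, since a breakpoint is itself a value of $\gamma$, so if a breakpoint is valid then there trivially exists a valid $\gamma$.

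For the forward direction, suppose $\gamma^*$ is valid for $\N(\gamma^*)$. I would locate $\gamma^*$ relative to the breakpoints of $\kappa(\gamma)$: either $\gamma^*$ coincides with a breakpoint (in which case we are done), or $\gamma^*$ lies strictly inside an interval $(\gamma_1,\gamma_2)$ where $\gamma_1,\gamma_2$ are two consecutive breakpoints, or $\gamma^*$ lies in the leftmost interval $(0,\gamma_2)$ before the first breakpoint, or in the rightmost interval $(\gamma_1,\infty)$ after the last breakpoint. The key tool is Lemma~\ref{parameter_flow}, which states precisely that if $\gamma^*$ is valid and lies in such an interval $(\gamma_1,\gamma_2)$, then every $\gamma\in[\gamma_1,\gamma_2]$ is valid for $\N(\gamma)$. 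Applying this lemma, I conclude that the endpoints of the interval are valid. In particular, at least one of $\gamma_1,\gamma_2$ is a genuine breakpoint (in the strictly-interior case both are; in the boundary cases at least the finite endpoint $\gamma_2$ respectively $\gamma_1$ is a breakpoint), and that breakpoint is valid.

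The one point requiring a little care is the boundary cases where $\gamma_1=0$ or $\gamma_2=\infty$. When $\gamma^*$ lies before the first breakpoint, Lemma~\ref{parameter_flow} gives validity on $[0,\gamma_2]$, so the first breakpoint $\gamma_2$ is valid. When $\gamma^*$ lies after the last breakpoint, the lemma gives validity on $[\gamma_1,\infty)$, so the last breakpoint $\gamma_1$ is valid. Thus in every case we extract a valid breakpoint, and I expect this bookkeeping over the interval positions to be the only mildly delicate part; the substantive work has already been absorbed into Lemma~\ref{parameter_flow}.

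Hence the main obstacle is not in this lemma at all but in its predecessor: once Lemma~\ref{parameter_flow} is available, the present statement is essentially a corollary obtained by pushing validity from an interior point out to an endpoint breakpoint. I would therefore keep the proof short, invoking Lemma~\ref{parameter_flow} directly and dispatching the three interval configurations, noting that each forces some breakpoint to inherit validity.
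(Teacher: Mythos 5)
Your proof is correct and follows exactly the paper's own route: the reverse direction is dismissed as trivial, and the forward direction invokes Lemma~\ref{parameter_flow} to push validity from an arbitrary valid $\gamma^*$ to a breakpoint endpoint of its interval. The paper states this in two sentences without spelling out the interval bookkeeping; your explicit treatment of the boundary cases $\gamma_1=0$ and $\gamma_2=\infty$ merely makes the same argument more careful.
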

\begin{proof}
If there exists a valid $\gamma$ for $\N(\gamma)$,
Lemma~\ref{parameter_flow} guarantees that one of the breakpoints of
$\kappa(\gamma)$ is valid.  The other direction is trivial.
\end{proof}

For a given breakpoint, we need to solve one feasible flow problem to
check whether it is valid.  In the network $\N(\gamma)$ the capacities
of the edges leading to the sink are increasing functions of a
real-valued parameter $\gamma$.  If we reverse $\N(\gamma)$, we get a
parametric network that satisfies the conditions in~\cite{GalloGT89}:
The capacities emanating from $\source$ are non-decreasing functions
of $\gamma$.  So we can apply the \emph{breakpoint
  algorithm}~\cite{GalloGT89} to obtain all of the breakpoints of
$\N(\gamma)$.

\subsubsection{The Algorithm for DTMCs}
\label{sec:weak_alg_dtmc}
Let $\M$ be a DTMC and let $\prog{SimRel}_w(\D)$ denote the weak
simulation algorithm, which is obtained by replacing
line~\ref{algsim:checksimrel} of $\prog{SimRel}_s(\D)$ in
Algorithm~\ref{fig:simfps} by: \textbf{if} $s_1\wsrel_R s_2$.  The
condition $s_1 \wsrel_R s_2$ is checked in $\prog{Ws}(\D,s_1,s_2,R)$,
shown as Algorithm~\ref{fig:ws}.

\begin{algorithm}[tbp]
  \caption{Algorithm to check whether $s_1\wsrel_R s_2$.}
  \label{fig:ws}
    \Procname{$\prog{Ws}(\D,s_1,s_2,R)$}
  \begin{algorithmic}[1]
        \IF{$\post(s_1)\subseteq R^{-1}(s_2)$}   \label{algws:onlystutter}
                \STATE  \textbf{return} $\mathbf{true}$\label{algws:onlystutter.end}
        \ENDIF
        \IF{$\post(s_2)\subseteq R(s_1)$}   \label{algws:testposts2simss1}
                \STATE  $U_1\gets \{s_1'\in \post(s_1)\mid s_1'\not\in R^{-1}(s_2)\}$
                \STATE \textbf{return} 
                ($\forall u_1\in U_1.\ \exists
                s\in \post(reach(s_2)\cap R(s_1)).\ s\in R(u_1)$)   \label{algws:testreachcond.end}
        \ENDIF
        \STATE  Compute all of the breakpoints $b_1<b_2< \ldots <b_j$ of
     $\N(\gamma)$\label{algwsf:computebreakpoints}
        \STATE \textbf{return} ($\exists i \in
    \{1,\ldots, j\}.\ b_i \mbox{ is valid for }\N(b_i)$) \label{algwsf:returnbiisvalid}
  \end{algorithmic}
\end{algorithm}

The first part of the algorithm is the preprocessing part.
Line~\ref{algws:onlystutter} tests for the case that $s_1$ could perform
only \emph{stutter} steps with respect to the current relation $R$. If
line~\ref{algws:testreachcond.end} is reached, $s_1$ has at least one
\emph{visible} step, and all successors of $s_2$ can simulate $s_1$ up to
the current relation $R$.  In this case we need to check the reachability
Condition~\ref{ws:reachability} of
Definition~\ref{def:dtmc_weak_simulation}, which is performed in
line~\ref{algws:testreachcond.end}. Recall that $reach(s)$ denotes the set
of states that are reachable from $s$ with positive probability.  If the
algorithm does not terminate in the preprocessing part, the breakpoints of
the network $\N(\gamma)$ are computed. Then, corresponding to
Lemma~\ref{valid_breakpoint}, we check whether one of the breakpoints is
valid.  We show the correctness of the algorithm $\prog{Ws}$:
\begin{lem}
\label{lem:pre}
The algorithm $\prog{Ws}(\D,s_1,s_2,R)$ returns true iff $s_1\wsrel_R
s_2$.
\end{lem}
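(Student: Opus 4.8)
The plan is to match the three \textbf{return} statements of $\prog{Ws}$ against the clauses of Definition~\ref{def:dtmc_weak_simulation}, establishing the equivalence separately in each regime the algorithm distinguishes. Throughout I would reuse the sets $MU_i,PV_i$ from Subsection~\ref{sec:parametric_network}: $PV_1=\post(s_1)\cap R^{-1}(s_2)$ are the successors of $s_1$ that \emph{may} go into $V_1$, and $MU_1=\post(s_1)\setminus PV_1$ those that \emph{must} lie in $U_1$, symmetrically for $s_2$. The first fact I would record, and reuse in the backward directions, is that in \emph{any} witnessing pair $\delta_i$ one necessarily has $MU_i\subseteq U_i$: a state $u\in MU_1$ satisfies $(u,s_2)\notin R$, so placing it in $V_1$ would violate Condition~\ref{ws:stuttersimulates}a, forcing $\delta_1(u)=1$ and hence $u\in U_1$.

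For the first case (line~\ref{algws:onlystutter}), where $\post(s_1)\subseteq R^{-1}(s_2)$ and the algorithm returns \textbf{true}, I would simply exhibit a witness: take $\delta_1\equiv 0$ and $\delta_2\equiv 1$. Then $U_1=\emptyset$ (so $K_1=0$), $V_1=\post(s_1)\subseteq R^{-1}(s_2)$ validates Condition~\ref{ws:stuttersimulates}a, $V_2=\emptyset$ makes \ref{ws:stuttersimulates}b vacuous, Condition~\ref{ws:weightfunction} is vacuous because $K_1=0$, and Condition~\ref{ws:reachability} is vacuous because $U_1=\emptyset$. Hence $s_1\wsrel_R s_2$, and since the algorithm also answers \textbf{true}, the equivalence holds in this case.

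The second case (lines~\ref{algws:testposts2simss1}--\ref{algws:testreachcond.end}) is the delicate one and I expect it to be the main obstacle. Here $\post(s_1)\not\subseteq R^{-1}(s_2)$ (so $MU_1\neq\emptyset$) while $\post(s_2)\subseteq R(s_1)$, and the algorithm returns the outcome of the reachability test over $U_1=MU_1$. For the \emph{if} direction I would set $\delta_1=1$ exactly on $MU_1$ and $0$ on $PV_1$, and $\delta_2\equiv 0$; this yields $U_1=MU_1$, $K_1>0$, $U_2=\emptyset$, $K_2=0$, so Conditions~\ref{ws:stuttersimulates} and \ref{ws:weightfunction} hold (the latter vacuously, as $K_2=0$), leaving only Condition~\ref{ws:reachability} for $u_1\in MU_1$, which is exactly what line~\ref{algws:testreachcond.end} verified. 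For the \emph{only if} direction, given any witness I invoke $MU_1\subseteq U_1$: Condition~\ref{ws:reachability} holds for all of $U_1$ unconditionally, in particular for every $u_1\in MU_1$, which is precisely the test the algorithm performs. The subtle point I must justify carefully is that line~\ref{algws:testreachcond.end} faithfully encodes Condition~\ref{ws:reachability}: the expression $\post(reach(s_2)\cap R(s_1))$ must be read as the successors of states reached from $s_2$ while staying inside $R(s_1)$ (legitimate since $s_2\in R(s_1)$ and the path fragment $s_2,w_1,\dots,w_n,u_2$ demands $s_1\mathrel{R}w_j$), and I would argue that searching all such $R(s_1)$-reachable states for a successor in $R(u_1)$ is equivalent to the existence of the required path fragment ending in a state simulating $u_1$.

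The third case (lines~\ref{algwsf:computebreakpoints}--\ref{algwsf:returnbiisvalid}) is almost immediate. Falling through both guards means $\post(s_1)\not\subseteq R^{-1}(s_2)$ and $\post(s_2)\not\subseteq R(s_1)$, i.e.\ there exist $s_1'\in\post(s_1)\setminus R^{-1}(s_2)$ and $s_2'\in\post(s_2)\setminus R(s_1)$, which are exactly the hypotheses of Lemma~\ref{valid_gamma}. That lemma reduces $s_1\wsrel_R s_2$ to the existence of a valid $\gamma$ for $\N(\gamma)$, and Lemma~\ref{valid_breakpoint} further reduces this to the validity of some breakpoint of $\N(\gamma)$, which is exactly the condition checked at line~\ref{algwsf:returnbiisvalid}. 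Chaining the two lemmas closes this case and completes the equivalence.
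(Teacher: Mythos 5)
Your proposal is correct and takes essentially the same route as the paper's proof: the same case split along the three return points of $\prog{Ws}$, the same witnesses (in effect $\delta_1\equiv 0$, $\delta_2\equiv 1$ in the first case, and $\delta_1=1$ exactly on $MU_1$ with $\delta_2\equiv 0$ in the second), and the same chaining of Lemma~\ref{valid_gamma} with Lemma~\ref{valid_breakpoint} in the third case. Your explicit preliminary fact that $MU_i\subseteq U_i$ for any witness, and your careful reading of $\post(reach(s_2)\cap R(s_1))$ as reachability through $R(s_1)$-states, merely make explicit what the paper's proof uses implicitly.
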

\proof
  We first show  the \emph{only if} direction.  Assume that
  $\prog{Ws}(\D,s_1,s_2,R)$ returns true.  We consider three possible
  cases:
  \begin{enumerate}[$\bullet$]
  \item The algorithm returns true at
    line~\ref{algws:onlystutter.end}.  For this case we have that
    $\post(s_1)\subseteq R^{-1}(s_2)$. We choose $U_1=\emptyset,
    V_1=\post(s_1)$, $U_2=\post(s_2)$ and $V_2=\emptyset$ to fulfill
    the conditions in Definition~\ref{def:dtmc_weak_simulation}.
    Hence, $s_1\wsrel_R s_2$.
  \item The algorithm returns true at
    line~\ref{algws:testreachcond.end}. If the algorithm reaches
    line~\ref{algws:testreachcond.end}, the following conditions hold:
    there exists a state $s_1'\in \post(s_1)$ such that $s_1'\not\in
    R^{-1}(s_2)$ (line~\ref{algws:onlystutter}), and
    $\post(s_2)\subseteq R(s_1)$ (line~\ref{algws:testposts2simss1}).
    Let $U_1= \{s_1'\in \post(s_1)\mid s_1'\not\in R^{-1}(s_2)\}$, and
define $\delta_i$ by: $\delta_1(s)=1$ if $s\in U_1$, and $0$ otherwise,
$\delta_2(s)=0$ for all $s\in S$. By construction, to show $s_1\wsrel_R
s_2$ we only need  to show the reachability condition.
    Since the algorithm returns true at
    line~\ref{algws:testreachcond.end}, it holds that for each $u_1\in
    U_1$, there exists $s\in \post(reach(s_2)\cap R(s_1))$ such that
    $s\in R(u_1)$. This is exactly the reachability
    condition required by weak simulation up to $R$, thus $s_1\wsrel_R
    s_2$.
  \item The algorithm returns true at
    line~\ref{algwsf:returnbiisvalid}. Thus, there exists breakpoint
    $b_i$ which is valid for $\N(b_i)$. Then, there exists a state
    $s_1'\in \post(s_1)$ such that $s_1'\not\in R^{-1}(s_2)$, and
    $s_2'\in \post(s_2)$ such that $s_2'\not\in R(s_1)$.  By Lemma~\ref{valid_gamma}, we
    have that $s_1\wsrel_R s_2$.
  \end{enumerate}
  Now we show the \emph{if} direction. Assume that $\prog{Ws}$ returns
  false. It is sufficient to show that $s_1\not\wsrel_R s_2$. We consider two cases:
  \begin{enumerate}[$\bullet$]
  \item The algorithm returns false at
    line~\ref{algws:testreachcond.end}. This implies that there exists
    a state $s_1'\in \post(s_1)$ such that $s_1'\not\in R^{-1}(s_2)$
    (line~\ref{algws:onlystutter}), and $\post(s_2)\subseteq R(s_1)$
    (line~\ref{algws:testposts2simss1}). All states  $s_1'\in \post(s_1)$ with
$s_1'\not\in R^{-1}(s_2)$ must be put into $U_1$. However, since the algorithm
    returns false at line~\ref{algws:testreachcond.end}, it holds that
    there exists a state $u_1\in U_1$, such that there does not exist $s\in
    \post(reach(s_2)\cap R(s_1))$ with $s\in R(u_1)$. Thus the
    reachability condition of
    Definition~\ref{def:dtmc_weak_simulation} is violated which
    implies that $s_1\not\wsrel_R s_2$.
  \item The algorithm returns false at
    line~\ref{algwsf:returnbiisvalid}. Then, there exists a state
    $s_1'\in \post(s_1)$ such that $s_1'\not\in R^{-1}(s_2)$, and
    $s_2'\in \post(s_2)$ such that $s_2'\not\in R(s_1)$. Moreover, for
    all breakpoints $b$ of $\N(\gamma)$, $b$ is not valid for $\N(b)$.
    By Lemma~\ref{valid_breakpoint}, there does not exist a valid
    $\gamma$ for $\N(\gamma)$. By Lemma~\ref{valid_gamma}, we have
    that  $s_1\not\wsrel_R s_2$.\qed
  \end{enumerate}
Now we state the correctness of the algorithm $\prog{SimRel}_w$ for DTMCs:

\begin{thm}
\label{thm:fpsws_correct}
  If $\prog{SimRel}_w(\D)$ terminates, the returned relation $R$ equals
  $\wsrel_\D$.
\end{thm}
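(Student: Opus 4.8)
The plan is to mirror the two-part argument used in the proof of Theorem~\ref{thm:correctness_basic}, replacing the strong check by the weak one. Let $k$ be the index of the final iteration, so that $R_{k+1}=R_k$. First I would show that the returned relation is a weak simulation, hence $R_k\subseteq\mathord{\wsrel_\D}$; then I would show the loop invariant $\mathord{\wsrel_\D}\subseteq R_i$. Instantiating the invariant at $i=k$ and combining the two inclusions forces $R_k=\mathord{\wsrel_\D}$.

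For the first part, note that $R_{k+1}=R_k$ means that for every $(s_1,s_2)\in R_k$ the (replaced) test at line~\ref{algsim:checksimrel} succeeded, i.e.\ $\prog{Ws}(\D,s_1,s_2,R_k)$ returned true. By Lemma~\ref{lem:pre} this is equivalent to $s_1\wsrel_{R_k}s_2$. Since every pair of $R_k\subseteq R_1$ satisfies $L(s_1)=L(s_2)$, the characterisation from Subsection~\ref{subs:simulation_upto_R} (that $R$ is a weak simulation iff $s_1\wsrel_R s_2$ for all $s_1\mathrel{R}s_2$) shows that $R_k$ is a weak simulation on $\D$. As $\wsrel_\D$ is the coarsest weak simulation, $R_k\subseteq\mathord{\wsrel_\D}$.

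For the second part I would prove by induction on $i$ that $\mathord{\wsrel_\D}\subseteq R_i$. The base case $i=1$ is immediate, since $s_1\wsrel_\D s_2$ forces $L(s_1)=L(s_2)$ and hence $(s_1,s_2)\in R_1$. For the inductive step, assume $\mathord{\wsrel_\D}\subseteq R_i$ and pick $(s_1,s_2)\in\mathord{\wsrel_\D}$. The crucial ingredient is a monotonicity property: whenever $\mathord{\wsrel_\D}\subseteq R$ and $s_1\wsrel_\D s_2$, then $s_1\wsrel_R s_2$. Granting this for $R=R_i$, Lemma~\ref{lem:pre} gives that $\prog{Ws}(\D,s_1,s_2,R_i)$ returns true, so $(s_1,s_2)$ is inserted into $R_{i+1}$; hence $\mathord{\wsrel_\D}\subseteq R_{i+1}$.

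The main obstacle is establishing this monotonicity, which in the strong case was a one-line inspection of Definition~\ref{def:weight} but here requires checking each clause of the more intricate Definition~\ref{def:dtmc_weak_simulation}. Since $\wsrel_\D$ is itself a weak simulation, $s_1\wsrel_\D s_2$ supplies witnesses $\delta_1,\delta_2$ (with the induced $U_i,V_i,K_i,\Delta$ and reachability paths) certifying $s_1\wsrel_{\wsrel_\D}s_2$; I would argue that the very same witnesses certify $s_1\wsrel_R s_2$ for any coarser $R\supseteq\mathord{\wsrel_\D}$. The sets $U_i,V_i$ are fixed by $\delta_i$ through Equation~\ref{eq:weaksimUV} independently of the relation, and the equalities of Condition~\ref{ws:wfsums} involve only $\P$, the $\delta_i$ and $\Delta$, so they are unaffected. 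The only clauses that mention the relation are Condition~\ref{ws:stuttersimulates} ($v_1\mathrel{R}s_2$ and $s_1\mathrel{R}v_2$), the support condition~\ref{ws:wfgt0} ($u_1\mathrel{R}u_2$), and the reachability Condition~\ref{ws:reachability} ($s_1\mathrel{R}w_j$ and $u_1\mathrel{R}u_2$ along the chosen path); each is a \emph{positive} demand that some pair lie in the relation, and is therefore preserved when passing from $\wsrel_\D$ to the superset $R$. With monotonicity in hand, the two inclusions meet at $i=k$ to yield $R_k=\mathord{\wsrel_\D}$, as required.
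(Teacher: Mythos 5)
Your proof is correct and follows essentially the same route as the paper's: the paper's own argument is exactly this two-part scheme (at termination $R_k$ is a weak simulation by Lemma~\ref{lem:pre}, and by induction each $R_i$ is coarser than $\wsrel_\D$), with the monotonicity of $\wsrel_R$ in $R$ left implicit by reference to the proof of Theorem~\ref{thm:correctness_basic}. Your explicit clause-by-clause verification that Conditions~\ref{ws:stuttersimulates}, \ref{ws:wfgt0} and \ref{ws:reachability} are positive in $R$ and hence preserved under coarsening is a sound filling-in of that implicit step, not a different approach.
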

\begin{proof}
  The proof follows exactly the lines of the proof of
  Theorem~\ref{thm:correctness_basic}.  Let iteration $k$ be the last
  iteration of $\prog{Ws}$.  Then, by Lemma~\ref{lem:pre}, for each pair
  $(s_1,s_2) \in R_k$, it holds that $s_2$ weakly simulates $s_1$ up to
  $R_k$, so $R_k$ is a weak simulation.  On the other hand, one can prove
  by induction that each $R_i$ is coarser than $\wsrel$.
\end{proof}

\paragraph{Complexity.}
For $(s_1,s_2)\in R$, we have shown that to check whether $s_1\wsrel_R s_2$
we could first compute the breakpoints of $\N(\gamma)$, then solve
$\O(\size{V})$ many maximum flow problems. To achieve a better bound, we
first prove  that applying a binary search method over the breakpoints, we
only need to consider $\O(\log
\size{V})$ breakpoints, and thus solve $\O(\log
\size{V})$ maximum flow problems. 

Assume that the sets $MU_i$ and $PV_i$ for $i=1,2$ are constructed as before for
$\N(\gamma)$. Recall that a flow function $f$ of $\N(\gamma)$ is valid for
$\N(\gamma)$ iff $f$ saturates all edges $(\source, u_1)$ with $u_1\in
MU_1$ and all edges $(\overline{u_2},\sink)$ with $u_2\in MU_2$. If $f$ is
also a maximum flow, we say that $f$ is a valid maximum flow of
$\N(\gamma)$. We first reformulate Lemma~\ref{valid_gamma} using maximum
flow.

\begin{lem}{\label{reduced_to_exists_flow}} 
  There exists a valid flow $f$ for $\N(\gamma)$ iff
  there exists a valid maximum  flow $f_m$ for $\N(\gamma)$.
\end{lem}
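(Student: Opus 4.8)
The plan is to prove the two directions separately. The \emph{if} direction is immediate: a valid maximum flow is by definition also a valid flow, so its mere existence already establishes the existence of a valid flow. Hence the real work lies in the \emph{only if} direction, where I start from an arbitrary valid flow $f$ for $\N(\gamma)$ and must produce a flow that is valid \emph{and} of maximum value.

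For this I would take the given valid flow $f$ and augment it to a maximum flow using the augmenting path algorithm~\cite{AMO93} in the residual network $\N_f(\gamma)$, imposing the (harmless) restriction that every augmenting path be cycle-free, exactly as in the proof of Lemma~\ref{parameter_flow}. Let $f_m$ denote the resulting maximum flow. It then remains only to verify that $f_m$ is still valid, \ie\ that it keeps all edges $(\source,u_1)$ with $u_1\in MU_1$ and all edges $(\overline{u_2},\sink)$ with $u_2\in MU_2$ saturated. Since $f$ already saturates $(\source,u_1)$, this edge carries no forward residual capacity, so an augmenting path could decrease $f(\source,u_1)$ only by traversing the backward residual edge $(u_1,\source)$; but that would force the path to re-enter $\source$, contradicting both cycle-freeness and the fact that $\source$ is the start vertex of the path. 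The symmetric argument at the sink shows that decreasing $f(\overline{u_2},\sink)$ would require traversing $(\sink,\overline{u_2})$ and hence revisiting $\sink$, which is again impossible for a cycle-free $\source$--$\sink$ path. Therefore no augmentation can un-saturate the relevant edges, so $f_m$ is a valid maximum flow.

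The one delicate point — and the step I expect to be the main obstacle — is precisely this preservation of saturation under augmentation; it is what forces the restriction to cycle-free augmenting paths. An alternative, equivalent route would be to turn $f$ into a preflow and run the preflow algorithm instead, observing as in Lemma~\ref{parameter_flow} that no flow is ever pushed back from the sink and that pushing flow along admissible edges cannot un-saturate the edges into $MU_1$; either formulation yields the claim, and combining both directions gives the stated equivalence.
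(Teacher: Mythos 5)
Your proof is correct and follows essentially the same route as the paper's: the \emph{if} direction is immediate, and for the \emph{only if} direction the paper likewise augments the given valid flow to a maximum flow via cycle-free augmenting paths in the residual network $\N_f(\gamma)$ and observes that the result saturates every edge saturated by $f$. In fact you go slightly beyond the paper by spelling out \emph{why} cycle-freeness preserves saturation at the edges from $\source$ into $MU_1$ and from $\overline{MU_2}$ into $\sink$ (a step the paper asserts without argument), and your reasoning there is sound.
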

\begin{proof}
  Assume there exists a valid flow $f$ for $\N(\gamma)$.  We let
  $\N_f(\gamma)$ denote the residual network.  We use the augmenting
  path algorithm to get a maximum flow $f'$ in the residual network
  $\N_f(\gamma)$. Assume that the augmenting path contains no cycles,
  which is a harmless restriction. Let $f_m=f+f'$.  Obviously, $f_m$
  is a maximum flow for $\N(\gamma)$, and it saturates all of the
  edges saturated by $f$. Hence, $f_m$ is valid for $\N(\gamma)$.  The
  other direction is simple, since a valid maximum flow is also a
  valid flow for $\N(\gamma)$.
\end{proof}

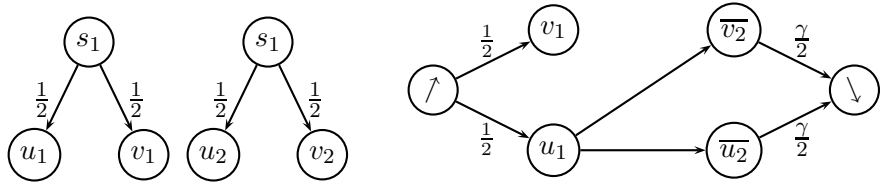
\begin{figure}[tbp]
\begin{center}
\psset{levelsep=1.5cm,arrows=->,labelsep=0pt}
\pstree{\Tcircle{$s_1$}}{%
  \Tcircle{$u_1$}  \tlput{$\frac{1}{2}$}
  \Tcircle{$v_1$} \trput{$\frac{1}{2}$}
}\hspace{.1cm}
\pstree{\Tcircle{$s_1$}}{%
  \Tcircle{$u_2$}  \tlput{$\frac{1}{2}$}
  \Tcircle{$v_2$} \trput{$\frac{1}{2}$}
}\hspace{1cm}
\psset{levelsep=1.5cm,arrows=->,labelsep=0pt,unit=.8}
    \begin{pspicture}(0,1.8)(7, 2)
      \rput(2,0){\circlenode{u1}{$u_1$}}
      \rput(5,0){\circlenode{u2}{$\overline{u_2}$}}
      \rput(0,1){\circlenode{source}{$\source$}}
      \rput(7,1){\circlenode{sink}{$\sink$}}
      \rput(2,2){\circlenode{v1}{$v_1$}}
      \rput(5,2){\circlenode{v2}{$\overline{v_2}$}}
      \ncline{source}{u1} \nbput{\small $\frac{1}{2}$}
      \ncline{source}{v1} \naput{\small $\frac{1}{2}$}
      \ncline{v2}{sink} \naput{$\frac{\gamma}{2}$}
      \ncline{u2}{sink} \nbput{$\frac{\gamma}{2}$}
      \ncline{u1}{v2}
      \ncline{u1}{u2}
    \end{pspicture}
  \end{center}
  \caption{A simple DTMC for illustrating that not all maximum flows
    are valid.}
  \label{fig:dtmc_valid}
\end{figure}

We first discuss  how to get a valid maximum flow provided that $\gamma$ is
valid. Observe that even if $\gamma$ is valid for $\N(\gamma)$, not all
maximum flows for $\N(\gamma)$ are necessarily valid. Consider the DTMC on
the left part of Figure~\ref{fig:dtmc_valid}.  Assume that the relation $R$
is given by $\{(s_1,s_2),$ $(s_1,v_2),$ $(v_1,s_2),$ $(u_1,u_2),$
$(u_1,v_2)\}$ and consider the pair $(s_1,s_2)$. Thus, we have that
$PV_1=\{v_1\}, MU_1=\{u_1\}, PV_2=\{v_2\}, MU_2=\{u_2\}$.  The network
$\N(\gamma)$ is depicted on the right part of
Figure~\ref{fig:dtmc_valid}. The maximum flow $f$ for $\N(1)$ has
value $0.5$. If $f$ contains positive sub-flow along the path
$\source,u_1,\overline{v_2},\sink$, it does not saturate the edge
$(\overline{u_2},\sink)$. On the contrary, the flow along the single path
$\source, u_1, \overline{u_2},\sink$ with value $0.5$ would be a valid
maximum flow.
This example gives us the intuition to use the augmenting path through
edges between $MU_1$ and $\overline{MU_2}$ as much as possible. For this
purpose we define a cost function $\cost$ from edges in $\N(\gamma)$ to real
numbers as follows: $\cost(u_1,\overline{u_2})=2$ for $u_1\in MU_1$ and
$u_2\in MU_2$, $\cost(u_1,\overline{v_2})=1$ for $u_1\in MU_1$ and $v_2\in
PV_2$, $\cost(v_1,\overline{u_2})=1$ for $v_1\in PV_1$ and $u_2\in MU_2$,
$\cost(s,s')=0$ otherwise. The costs of edges starting from source, or
ending at sink, or in $PV_1\times \overline{PV_2}$ are $0$. The cost of a
flow $f$ is defined by $\cost(f)=\sum_{e\in E}f(e)\cost(e)$. By definition
of the cost function, we have the property $\cost(f)=f(\source, MU_1)+
f(\overline{MU_2},\sink)$, i.e., the cost equals the sum of the amount of
flow from $\source$ into $MU_1$ and from $\overline{MU_2}$ into $\sink$.

\begin{lem}{\label{reduced_to_max_cost_flow}}
Assume that $\gamma>0$ is valid for $\N(\gamma)$. Let $f_\gamma$
denote a maximum flow over $\N(\gamma)$  
with maximum cost. Then, $f_\gamma$ is valid for $\N(\gamma)$.
\end{lem}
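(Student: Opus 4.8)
The plan is to exploit the identity stated immediately before the lemma, namely that for \emph{every} flow $f$ of $\N(\gamma)$ we have $\cost(f)=f(\source,MU_1)+f(\overline{MU_2},\sink)$, together with the observation that validity of a flow is precisely the requirement that both of these summands attain their respective capacities. First I would record a universal upper bound on the cost: by the capacity constraints, $f(\source,MU_1)\le\capacity(\source,MU_1)$ and $f(\overline{MU_2},\sink)\le\capacity(\overline{MU_2},\sink)$ for any flow $f$, so every flow satisfies $\cost(f)\le\capacity(\source,MU_1)+\capacity(\overline{MU_2},\sink)=:C$. Note this bound holds for arbitrary flows, not only maximum ones.

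Next I would produce a flow that attains this bound, which is exactly where the validity hypothesis on $\gamma$ enters. Since $\gamma>0$ is valid for $\N(\gamma)$, Lemma~\ref{reduced_to_exists_flow} yields a valid \emph{maximum} flow $f_m$. Because $f_m$ saturates all edges $(\source,u_1)$ with $u_1\in MU_1$ and all edges $(\overline{u_2},\sink)$ with $u_2\in MU_2$, we get $f_m(\source,MU_1)=\capacity(\source,MU_1)$ and $f_m(\overline{MU_2},\sink)=\capacity(\overline{MU_2},\sink)$, hence $\cost(f_m)=C$.

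Now I would compare costs. Since $f_\gamma$ is chosen to have maximum cost among all maximum flows, and $f_m$ is itself a maximum flow, we have $\cost(f_\gamma)\ge\cost(f_m)=C$; combined with the universal upper bound $\cost(f_\gamma)\le C$, this forces $\cost(f_\gamma)=C$, that is $f_\gamma(\source,MU_1)+f_\gamma(\overline{MU_2},\sink)=\capacity(\source,MU_1)+\capacity(\overline{MU_2},\sink)$. I would then close with a squeeze argument on each summand: each of $f_\gamma(\source,MU_1)$ and $f_\gamma(\overline{MU_2},\sink)$ is individually bounded above by its capacity, yet the two sums coincide, so each summand must equal its capacity. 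Expanding $f_\gamma(\source,MU_1)=\sum_{u_1\in MU_1}f_\gamma(\source,u_1)$ with every term at most $\capacity(\source,u_1)$, equality of the totals forces $f_\gamma(\source,u_1)=\capacity(\source,u_1)$ for each individual $u_1\in MU_1$, and symmetrically $f_\gamma(\overline{u_2},\sink)=\capacity(\overline{u_2},\sink)$ for each $u_2\in MU_2$. These are exactly the saturation conditions defining validity, so $f_\gamma$ is valid for $\N(\gamma)$.

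I do not expect a genuine obstacle; the proof is a short extremality argument. The only points to state carefully are that the global bound $C$ is \emph{attainable} (which is precisely the content of the validity hypothesis, invoked through Lemma~\ref{reduced_to_exists_flow}), and that the comparison $\cost(f_\gamma)\ge\cost(f_m)$ is legitimate only because $f_m$ is a maximum flow, so that it lies in the same class over which $f_\gamma$ maximises cost. The passage from equality of the two aggregate sums to edge-by-edge saturation is the mild additional step that upgrades the cost identity to the set of validity conditions.
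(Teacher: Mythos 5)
Your proof is correct and follows essentially the same route as the paper's: both invoke Lemma~\ref{reduced_to_exists_flow} to obtain a valid maximum flow whose cost equals $\P(s_1,MU_1)+\gamma\P(s_2,MU_2)$, and both use the identity $\cost(f)=f(\source,MU_1)+f(\overline{MU_2},\sink)$ to conclude that a maximum-cost maximum flow must saturate all the required edges. The only difference is presentational: the paper argues by contradiction (an unsaturated edge forces strictly smaller cost), while you run the equivalent direct squeeze argument, making explicit the edge-by-edge saturation step the paper leaves implicit.
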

\begin{proof}
  By Lemma~\ref{reduced_to_exists_flow}, provided $\gamma$ is valid
  for $\N(\gamma)$, there exists a valid maximum flow function $g$ for
  $\N(\gamma)$. Since $g$ saturates edges to $MU_1$ and from $MU_2$,
  obviously, $\cost(g) = \P(s_1,MU_1) +
  \gamma\P(s_2,MU_2)$.  Assume that $f_\gamma$ is not valid,
  which indicates that $f_\gamma$ does not saturate an edge
  $(\source,u_1)$ with $u_1\in MU_1$ or an edge
  $(\overline{u_2},\sink)$ with $u_2\in \overline{MU_2}$. Then,
  $\cost(f_\gamma) = f(\source,MU_1)+f(\overline{MU_2},\sink)<
  \P(s_1,MU_1) + \gamma\P(s_2,MU_2) = \cost(g)$. This
  contradicts the assumption that $f_\gamma$ has maximum cost.
\end{proof}

Let $\N_U(\gamma)$ denote the subnetwork of $\N(\gamma)$ where the set
of vertices is restricted to $MU_1, \overline{MU_2}$ and $\{\source,\sink\}$. 
The following lemma discusses how to construct a maximum flow with
maximum cost.

\begin{lem}{\label{find_max_cost_flow}}
Assume that $f^*$ is an arbitrary maximum flow of $\N_U(\gamma)$ and
$\tilde{f}$ is an arbitrary maximum flow in the residual network
$\N_{f^*}(\gamma)$ with the residual edges from $MU_1$ back to $\source$
removed, as well as the residual edges from $\sink$ back to
$\overline{MU_2}$.  Then $f_\gamma=f^*+\tilde{f}$ is a maximum flow over
$\N(\gamma)$ with maximum cost.
\end{lem}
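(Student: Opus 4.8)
The plan is to prove the statement in two independent parts: first that $f_\gamma$ is a maximum flow of $\N(\gamma)$, and then that among all maximum flows it attains the maximum cost. For the first part I would note that $f^*$, being a flow of the subnetwork $\N_U(\gamma)$, extends to a flow of $\N(\gamma)$ by setting it to $0$ on every edge outside $\N_U(\gamma)$, and that $\tilde f$ is a flow of the residual network $\N_{f^*}(\gamma)$, so $f_\gamma=f^*+\tilde f$ is a flow of $\N(\gamma)$ with value $|f^*|+|\tilde f|$. The key observation is that the residual edges deleted before computing $\tilde f$ — those from $MU_1$ back to $\source$ and those from $\sink$ back to $\overline{MU_2}$ — are all edges that \emph{enter} $\source$ or \emph{leave} $\sink$. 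No simple residual augmenting $\source$–$\sink$ path ever uses such an edge, so deleting them leaves every $\source$–$\sink$ cut capacity unchanged and hence does not lower the maximum-flow value of the residual network. Therefore $|\tilde f|$ equals the maximum-flow value of $\N_{f^*}(\gamma)$, i.e. the maximum-flow value of $\N(\gamma)$ minus $|f^*|$, and $f_\gamma$ is a maximum flow. As a useful by-product, since $\tilde f$ cannot push flow back along the deleted edges, it satisfies $f_\gamma(\source,u_1)\ge f^*(\source,u_1)$ for all $u_1\in MU_1$ and $f_\gamma(\overline{u_2},\sink)\ge f^*(\overline{u_2},\sink)$ for all $u_2\in MU_2$.

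For the second part I would invoke the standard optimality criterion for flows of fixed value (with costs negated): a maximum flow has maximum cost iff its residual network contains no cost-increasing cycle~\cite{AMO93}. So I would assume, for contradiction, a simple cycle $C$ in $\N_{f_\gamma}(\gamma)$ with $\cost(C)>0$. To expose its structure I would use the node potentials $\pi(x)=-\mathbb{1}[x\in MU_1]$ for a left node $x$, $\pi(\overline y)=\mathbb{1}[y\in MU_2]$ for a right node $\overline y$, and $\pi(\source)=\pi(\sink)=0$. A direct check, using $\cost(x,\overline y)=\mathbb{1}[x\in MU_1]+\mathbb{1}[y\in MU_2]$, shows that every residual edge between the left and right parts has reduced cost $0$; since potentials telescope around a cycle, $\cost(C)$ equals the sum of the reduced costs of the $\source$- and $\sink$-incident edges of $C$ only. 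This immediately handles most cases: a cycle through \emph{both} $\source$ and $\sink$ would contain a directed residual $\source$–$\sink$ subpath, which is impossible for the maximum flow $f_\gamma$; a cycle avoiding both has cost $0$. Hence $C$ must pass through exactly one of $\source,\sink$, and its cost is either $\mathbb{1}[x_0\in MU_1]-\mathbb{1}[x_1\in MU_1]$ (leaving $\source$ towards $x_0$, returning from $x_1$) or the symmetric expression at $\sink$.

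Consequently, the only candidate cost-increasing cycles are those that reroute one unit of source flow from a flow-carrying $PV_1$ node $x_1$ onto an \emph{unsaturated} $MU_1$ node $x_0$ along a residual path inside $L\cup R$ (and the dual at the sink). Ruling these out is the step I expect to be the main obstacle, and it is exactly where the construction must be used rather than generic flow facts: one has to show that $f^*$ being a maximum flow of the cost-$2$ ``core'' $\N_U(\gamma)$, together with the monotonicity $f_\gamma(\source,\cdot)\ge f^*(\source,\cdot)$ on $MU_1$ and the analogous bound on $\overline{MU_2}$, forbids any residual path joining an unsaturated $MU_1$ source node to a flow-bearing $PV_1$ node. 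The delicate point is that $\tilde f$ must load the cost-$1$ edges incident to $MU_1$ and $\overline{MU_2}$ as much as possible — a naive maximum flow in the residual network could otherwise waste flow on cost-$0$ edges and leave precisely such a rerouting cycle — so the maximality of $\tilde f$ in the reduced residual network, combined with the edge deletions, is what has to be leveraged to close the argument. I would therefore concentrate the technical effort on this source-only/sink-only case, proving that any such residual rerouting path would contradict the maximality of $f^*$ on $\N_U(\gamma)$.
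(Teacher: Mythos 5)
Your first part is sound, and in fact more explicit than the paper, which never separately argues that $f_\gamma$ is a maximum flow: the observation that the deleted residual edges all \emph{enter} $\source$ or \emph{leave} $\sink$, hence lie on no simple augmenting path and contribute to no $\source$--$\sink$ cut, correctly yields $\size{f_\gamma} = \size{f^*} + \size{\tilde f} = $ the maximum-flow value of $\N(\gamma)$. Your reduced-cost bookkeeping in the second part is also correct as far as it goes: with your potentials every middle residual edge has reduced cost $0$, a cycle through both terminals would contain a residual augmenting path (impossible for a maximum flow), and a cycle avoiding both has cost $0$, so any cost-improving residual cycle must pass through exactly one terminal and reroute flow from a flow-carrying $PV_1$ node onto an unsaturated edge $(\source,u_1)$ with $u_1\in MU_1$ (or dually at $\sink$). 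But at that point the proposal stops: you \emph{announce} that the technical effort would go into ruling out these rerouting cycles, and you do not carry it out. Since this case is precisely the content of the lemma, this is a genuine gap, not a finished proof.

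Moreover, the lever you name for closing it --- a contradiction with the maximality of $f^*$ on $\N_U(\gamma)$ --- cannot work, because the suspect cycle need not touch the core at all. Concretely, suppose $MU_1=\{a\}$, $MU_2=\{b\}$ with $(a,b)\notin R$, so $\N_U(\gamma)$ has no middle edge, $f^*=0$ is its unique maximum flow, and the edge deletions remove nothing; then $\tilde f$ is just an arbitrary maximum flow of $\N(\gamma)$. With $PV_1=\{p_1,p_2\}$, $PV_2=\{v\}$ and middle edges $a\to\overline{v}$, $p_1\to\overline{v}$, $p_1\to\overline{b}$, $p_2\to\overline{b}$, a maximum flow may route everything through $p_1,p_2$ and leave $(\source,a)$ empty, although a maximum flow of strictly larger cost exists; the resulting positive-cost cycle $\source\to a\to\overline{v}\to p_1\to\source$ changes neither $\size{\tilde f}$ nor anything $f^*$ sees, so neither maximality hypothesis is contradicted. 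What is actually needed is a saturation property of $\tilde f$ beyond plain value-maximality --- and this is exactly the point where the paper's own proof does its work, by an exchange argument: it supposes a maximum flow $f$ with $f(\source,MU_1)>f_\gamma(\source,MU_1)$ and asserts that, the back edges to $\source$ being deleted, ``$\tilde f$ sends as much flow as possible from $MU_1$ to $\overline{PV_2}$ (and from $PV_1$ to $\overline{MU_2}$)'', whence $f_\gamma(\source,MU_1)$ is already maximal. Your own aside diagnoses this crux accurately (``a naive maximum flow \ldots could waste flow on cost-0 edges''), but the proposal neither proves that property nor builds it into the choice of $\tilde f$, so the decisive step is missing and, as planned, would fail.
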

\begin{proof}
  Recall that the cost of $f_\gamma$ is equal to
  $\cost(f_\gamma)=f_\gamma(\source, MU_1)+
  f_\gamma(\overline{MU_2},\sink)$. Assume that $\cost(f_\gamma)$ is not
  maximal for the sake of contradiction. Let $f$ be a maximum flow such
  that $\cost(f_\gamma)<\cost(f)$. Without loss of generality, we assume that
  $f_\gamma(\source, MU_1)<f(\source, MU_1)$. It holds that
  $f_\gamma=f^*+\tilde{f}$ where $f^*$ is a maximum flow of $\N_U(\gamma)$,
  and $\tilde{f}$ is a maximum flow in the residual network
  $\N_{f^*}(\gamma)$ with the residual edges from $MU_1$ back to $\source$
  removed, as well as the residual edges from $\sink$ back to
  $\overline{MU_2}$. On the one hand, $f^*$ sends as much  flow as possible
  along $MU_1$ in $\N_U(\gamma)$. Since in the residual network
  $\N_{f^*}(\gamma)$ edges from $MU_1$ back to $\source$ are removed, this
  guarantees that no flow can be sent back to $\source$ from $MU_1$. On
  the other hand, $\tilde{f}$ sends as much flow  as possible from $MU_1$
  to $\overline{PV_2}$ (and also from $PV_1$ to $\overline{MU_2}$) in
  $\N_{f^*}(\gamma)$. Thus, $f_\gamma(\source, MU_1)$ must be maximal which
  contradicts the assumption $f_\gamma(\source, MU_1)<f(\source, MU_1)$.
\end{proof}

Assume that $\gamma^*$ is not valid. The following lemma determines,
provided a valid $\gamma$ exists, whether it is greater or smaller
than $\gamma^*$:

\begin{lem}{\label{binary_search}}
  Let $\gamma^*\in [0,\infty)$, and let $f$ be a maximum flow
  function with maximum cost for $\N(\gamma^*)$, as described in Lemma~\ref{find_max_cost_flow}. Then,

  \begin{enumerate}[\em(1)]
  \item If $f$ saturates all edges $(\source,u_1)$ with $u_1\in MU_1$ and
    $(\overline{u_2},\sink)$ with  $u_2\in
    MU_2$, $\gamma^*$ is valid for $\N(\gamma^*)$.
  \item Assume that  $\exists u_1\in MU_1$ such that $(\source,u_1)$ is not
    saturated by $f$, and all edges $(\overline{u_2},\sink)$ with $u_2\in
    MU_2$ are saturated by $f$. Then, $\gamma^*$ is not valid. If there
    exists a valid $\gamma$, $\gamma >\gamma^*$. 
  \item Assume that all edges $(\source,u_1)$ with $u_1\in MU_1$  are
    saturated by $f$, and $\exists u_2\in MU_2$ such that
    $(\overline{u_2},\sink)$ is not saturated by $f$. Then, $\gamma^*$ is not
    valid. If there exists a valid $\gamma$, $\gamma < \gamma^*$.
  \item Assume that  $\exists u_1\in MU_1$ and
    $\exists u_2\in MU_2$ such that $(\source,u_1)$ and
    $(\overline{u_2},\sink)$ are not saturated by $f$. Then, there does not
    exist a valid $\gamma$.
  \end{enumerate}
\end{lem}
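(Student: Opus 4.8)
The plan is to phrase everything in terms of two one-sided \emph{saturability} predicates and to show that each is monotone in $\gamma$, while validity forces both to hold simultaneously. Say $\gamma$ is \emph{$MU_1$-saturable} if some flow of $\N(\gamma)$ saturates every edge $(\source,u_1)$ with $u_1\in MU_1$, and \emph{$MU_2$-saturable} if some flow saturates every edge $(\overline{u_2},\sink)$ with $u_2\in MU_2$. By definition a valid flow is exactly a flow witnessing both, so a valid $\gamma$ is necessarily $MU_1$-saturable and $MU_2$-saturable. Part~(1) is then immediate: if the max-cost maximum flow $f$ saturates all $MU_1$- and $MU_2$-edges it is itself a valid flow, so $\gamma^*$ is valid.

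For the directional claims I would first record two monotonicity facts. First, $MU_1$-saturability is upward closed: the only $\gamma$-dependent capacities are those of the sink edges $\capacity(\overline t,\sink)=\gamma\P(s_2,t)$, which are non-decreasing in $\gamma$; hence a flow that is feasible and saturates the $MU_1$-edges at some $\gamma$ remains feasible, and still saturates them, at every $\gamma'\ge\gamma$. Second, $MU_2$-saturability is downward closed for $\gamma>0$: if $f$ saturates all $MU_2$-edges at $\gamma$, then $\tfrac{\gamma'}{\gamma}f$ is a flow of $\N(\gamma')$ for $0<\gamma'\le\gamma$ (conservation is preserved by scaling, source capacities are only relaxed, and each sink capacity scales to exactly $\gamma'\P(s_2,t)$), and it saturates the $MU_2$-edges at $\gamma'$.

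The decisive step --- and the one I expect to be the main obstacle --- is to pass from ``$f$ fails to saturate the $MU_1$-edges'' to ``$\gamma^*$ is not $MU_1$-saturable at all'' (and symmetrically for $MU_2$). Here I would invoke Lemma~\ref{find_max_cost_flow}, whose proof shows that the max-cost maximum flow $f$ simultaneously maximises $f(\source,MU_1)$ and $f(\overline{MU_2},\sink)$ among all maximum flows. Suppose some flow $h$ saturated every $MU_1$-edge; completing $h$ to a maximum flow by augmenting paths cannot decrease any $f(\source,s)$, since a simple $\source$-to-$\sink$ augmenting path leaves the source on a forward edge and never returns, so the resulting maximum flow still saturates all $MU_1$-edges and attains $f(\source,MU_1)=\P(s_1,MU_1)$, the absolute maximum. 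By Lemma~\ref{find_max_cost_flow} the max-cost flow $f$ then also attains this value, i.e.\ saturates all $MU_1$-edges, contradicting the hypothesis; the $MU_2$-case is symmetric (augmenting paths enter the sink on a forward edge and never reduce any $f(\overline t,\sink)$).

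With this in hand the three remaining cases follow. In case~(2), $f$ leaves an $MU_1$-edge unsaturated, so $\gamma^*$ is not $MU_1$-saturable, hence not valid; by upward closure no $\gamma'\le\gamma^*$ is $MU_1$-saturable, so any valid $\gamma$ must satisfy $\gamma>\gamma^*$. Case~(3) is the mirror image using downward closure of $MU_2$-saturability: $\gamma^*$ is not $MU_2$-saturable, no $\gamma'\ge\gamma^*$ is either, so any valid $\gamma$ satisfies $\gamma<\gamma^*$. In case~(4), $\gamma^*$ is neither $MU_1$- nor $MU_2$-saturable; upward closure confines the $MU_1$-saturable parameters to $\gamma>\gamma^*$ and downward closure confines the $MU_2$-saturable parameters to $\gamma<\gamma^*$, so no single $\gamma$ can be both, and since validity requires both, no valid $\gamma$ exists.
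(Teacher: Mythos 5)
Your proof is correct, and it is best described as a rigorous reconstruction of what the paper only sketches: the paper's own proof of this lemma is four lines (``(1) follows from the definition; (2) to saturate $(\source,u_1)$ without un-saturating the other $MU_1$-edges we must increase the sink capacities, hence increase $\gamma^*$; (3) similar; (4) combine''), leaving the directional claims at the level of capacity-monotonicity intuition. You supply the missing structure in three places. First, you isolate the two one-sided saturability predicates and prove their monotonicity separately; in particular your $(\gamma'/\gamma)$-scaling argument for the downward closure of $MU_2$-saturability is genuinely new relative to the paper, whose ``similar to the previous case'' for part (3) hides an asymmetry (decreasing $\gamma$ shrinks exactly the capacities one is trying to saturate, and scaling is what makes that work; your restriction to $\gamma>0$ is harmless since at $\gamma^*=0$ all sink edges have capacity $0$ and are trivially saturated, so case (3) cannot arise there). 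Second, you correctly identify the real load-bearing step, which the paper never states explicitly: passing from ``the max-cost maximum flow $f$ fails to saturate the $MU_1$-edges'' to ``no flow at $\gamma^*$ saturates them.'' The paper's Lemma~\ref{reduced_to_max_cost_flow} only yields the two-sided version (invalidity of $\gamma^*$), which does not by itself give the directional confinement; your one-sided version needs the component-wise maximality of $f(\source,MU_1)$ and $f(\overline{MU_2},\sink)$ among maximum flows, and you are right that this is exactly what the proof of Lemma~\ref{find_max_cost_flow} asserts (it argues $f_\gamma(\source,MU_1)$ is maximal, and symmetrically for the sink side), so citing that proof rather than the bare statement is legitimate --- though it is worth saying explicitly, since max cost being a \emph{sum} would not alone force each summand to be maximal without the simultaneity established there. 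Third, your augmenting-path completion of a saturating flow to a saturating \emph{maximum} flow (simple paths never re-enter $\source$ nor leave $\sink$, so source- and sink-edge flows never decrease) mirrors the paper's Lemma~\ref{reduced_to_exists_flow} applied one-sidedly, and is sound. Net effect: the paper's proof buys brevity by leaning on the intuition built in Lemmas~\ref{reduced_to_exists_flow}--\ref{find_max_cost_flow}; yours buys an airtight derivation of parts (2)--(4) and makes visible the precise dependency on component-wise maximality that the paper's argument silently uses.
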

\begin{proof} $1:$ Follows directly from the definition. $2:$ In
this case, $f(\source, u_1) < \P(s_1,u_1)$ for some $u_1\in
MU_1$. To saturate $(\source,u_1)$, without un-saturating other edges from
$\source$ to $MU_1$, we have to increase the capacities of edges leading to
$\sink$, thus increase $\gamma^*$. $3:$ Similar 
to the previous case. $4:$ Combining $2$ and $3$. 
\end{proof}

According to the above lemma, we can use the binary search method over the
breakpoints to check whether there exists a valid breakpoint for
$\N(\gamma)$. Since there are at most $\O(\size{V})$ breakpoints, we invoke
the maximum flow algorithm at most $\O(\log \size{V})$ times where
$\size{V}$ is the number of vertices of $\N(\gamma)$.

\begin{thm}
\label{thm:fpsws_complexity}
The algorithm $\prog{SimRel}_w(\D)$ runs in time $\O(m^2n^3)$ and in space
$\O(n^2)$. If the fanout $g$ is bounded by a constant, the time complexity
is $\O(n^5)$.
\end{thm}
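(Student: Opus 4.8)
The plan is to reuse the two–level accounting behind Theorem~\ref{thm:correctness_basic}: an outer refinement loop of $\O(n^2)$ iterations wrapped around the inner cost that $\prog{Ws}$ spends on each surviving pair, but with that inner cost expressed through the \emph{local} out--degrees $\size{\post(s_1)}$ and $\size{\post(s_2)}$ rather than through $n$. Correctness is already in hand (Lemma~\ref{lem:pre} and Theorem~\ref{thm:fpsws_correct}); only the resource bounds remain. The number of iterations is $\O(n^2)$ because each iteration deletes at least one pair from a relation of initial size at most $n^2$.

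For the space bound I first observe that, unlike $\prog{SimRel}^{\mathrm{FPS}}_s$, the weak algorithm carries no flow data between iterations: each call $\prog{Ws}(\D,s_1,s_2,R)$ rebuilds the parametric network $\N(\gamma)$ from $\Pmu{s_1}$, $\Pmu{s_2}$ and the current $R$. Hence at any instant I store only the relations $R_i,R_{i+1}$ (each $\O(n^2)$) and, for the single pair under consideration, its network $\N(\gamma)$ together with the feasible--flow transformation, a flow, a distance function, and the list of breakpoints. Since $\N(\gamma)$ has $\O(n)$ vertices, $\O(n^2)$ edges and (by Section~\ref{sec:breakpoints}) at most $\O(n)$ breakpoints, this working storage is $\O(n^2)$ and is freed before the next pair, giving $\O(n^2)$ overall.

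For the time bound I fix the cost of one $\prog{Ws}$ call. The preprocessing --- the two inclusion tests and, when line~\ref{algws:testreachcond.end} is reached, the reachability check --- is a constant number of graph searches over $\D$; summed over all pairs of one iteration it contributes $\O(n^2m)$, which (using $n\le m$) will be dominated, while in the constant--fanout case it is the \emph{dominant} $\O(n^2m)\cdot\O(n^2)=\O(n^5)$ term. The substantive work is the flow part: by Lemma~\ref{valid_breakpoint} it suffices to test breakpoints of $\N(\gamma)$, all of which are delivered by a single run of the parametric (breakpoint) algorithm, and by Lemmas~\ref{reduced_to_exists_flow}, \ref{reduced_to_max_cost_flow}, \ref{find_max_cost_flow} and \ref{binary_search} a binary search decides validity using only $\O(\log\size{V})$ maximum--(cost--)flow computations on $\N(\gamma)$. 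Reusing the bipartite preflow analysis of Lemma~\ref{smf_complexity}, one such flow over the network of $(s_1,s_2)$ runs in time $\O\bigl(\size{\post(s_1)}\,\size{\post(s_2)}\,(\size{\post(s_1)}+\size{\post(s_2)})\bigr)$, i.e.\ $\O(\size{\post(s_1)}\,\size{\post(s_2)}\,n)$ after bounding the depth factor by $\size{V}\le 2n$.

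The final step is the summation. Using
\[
\sum_{s_1\in S}\sum_{s_2\in S}\size{\post(s_1)}\,\size{\post(s_2)}\bigl(\size{\post(s_1)}+\size{\post(s_2)}\bigr)
=2m\sum_{s\in S}\size{\post(s)}^2\le 2gm^2\le 2nm^2 ,
\]
where $g$ is the fanout and $\size{\post(s)}^2\le g\,\size{\post(s)}$, the flow work of one refinement iteration is $\O(nm^2)$, which dominates the $\O(n^2m)$ preprocessing since $n\le m$. Multiplying by the $\O(n^2)$ iterations yields the claimed $\O(m^2n^3)$; substituting $m\le gn$ for constant fanout gives $\O(n^5)$ (here reachability supplies the dominant term). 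I expect the main obstacle to be exactly this bookkeeping: getting a per--call flow bound in the local out--degrees, combined with the $\O(\log\size{V})$--flow reduction of the breakpoint search, so that the displayed identity converts the per--pair sum into $\O(gm^2)$. The logarithmic binary--search factor must be charged against the $\O(\size{V}^3/\log\size{V})$ cost of a single preflow computation, exactly as the $1/\log$ factor is used in Theorem~\ref{thm:correctness_basic}, so that up to constants one $\prog{Ws}$ call costs one maximum flow on the local network; making this charging precise is the delicate part of the argument.
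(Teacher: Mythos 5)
Your proposal is correct and follows essentially the same route as the paper's proof: a per-pair local bound $\O(\size{\post(s_1)}\size{\post(s_2)}^2)$ for the breakpoint/binary-search flow work (with the $\O(\log\size{V})$ flow invocations absorbed by the $1/\log$ factor of the fast max-flow algorithm), the summation $\O(gm^2)\subseteq\O(nm^2)$ per iteration, $\O(n^2)$ iterations, and the reachability check supplying the dominant $\O(n^5)$ term in the constant-fanout case, together with $\O(n^2)$ space for the relation. The only two steps you leave implicit --- bounding the Gallo--Grigoriadis--Tarjan breakpoint computation (the paper cites $\O(\size{V_1}^2\size{V_2})$) and making your ``delicate'' charging precise by adapting the $\O(\size{V}^3/\log\size{V})$ algorithm to bipartite networks via the $4\size{V_1}$ distance cap, yielding $\O(\size{V_1}\size{V}^2/\log\size{V})$ per flow --- are exactly the citations the paper supplies, so your argument closes as you anticipate.
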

\begin{proof}
  First, we consider a pair $(s_1,s_2)$ out of the current relation
  $R_i$. Look at a single call of $\prog{Ws}(\M,s_1,s_2,R_i)$.  By saving
  the current relation sets $R$ and $R^{-1}$ in a two dimensional array,
  the conditions $s\in R(s')$ or $s\in R^{-1}(s')$ can be checked in
  constant time. Hence, line~\ref{algws:onlystutter} takes time
  $\size{\post(s_1)}$.  To construct the set $reach(s)$ for a state $s$,
  BFS can be used, which has complexity $\O(m)$. The size of the set
  $reach(s)$ is bounded by $n$.  Therefore, we need
  $\O(\size{\post(s_1)}n)$ time at
  lines~\ref{algws:testposts2simss1}--\ref{algws:testreachcond.end}.

  The algorithm computes all breakpoints of the network $\N(\gamma)$
  (with respect to $s_1,s_2$ and $R$) using the breakpoint
  algorithm~\cite[p.~37--42]{GalloGT89}. Assume the set of vertices of
  $\N(\gamma)$ is partitioned into subsets $V_1$ and $V_2$ similar to the
  network described in Section~\ref{sec:strong_basic}.  The number of edges
  of the network is at most $\size{E}:=\size{V_1} \size{V_2} - 1$. Let
  $\size{V}:=\size{V_1}+\size{V_2}$, and, without loss of generality, we
  assume that $\size{V_1} \le \size{V_2}$. For our bipartite networks, the
  time complexity~\cite[p.~42]{GalloGT89} for computing the breakpoints is
  $\O(\size{V_1}\size{E} \log (\frac{\size{V_1}^2}{\size{E}} + 2))$ which
  can be simplified to $\O(\size{V_1}^2 \size{V_2})$.  Then, the binary
  search can be applied over all of the breakpoints to check whether at
  least one breakpoint is valid, for which we need to solve at most
  $\O(\log \size{V})$ many maximum flow problems.  For our network
  $\N(\gamma)$, the best known complexity\footnote{For a network $G=(V,E)$
  with small $\size{E}$, there are more efficient algorithms in~\cite{CM99}
  with complexity $\O(\size{V}^2\sqrt{\size{E}})$, and in~\cite{KRT94} with
  complexity $\O(\size{E}\size{V}+\size{V}^{2+\epsilon})$ where $\epsilon$
  is an arbitrary constant.  In our bipartite networks, however, $\size{E}$
  is in the order of $\size{V}^2$.  Hence, these complexities become
  $\O(\size{V}^3)$.  } of the maximum flow problem is $\O(\size{V}^3/\log
  \size{V})$~\cite{CheriyanHM90}.  As indicated in the proof of
  Lemma~\ref{smf_complexity}, the distance function is bounded by
  $4\size{V_1}$ for our bipartite network.  Applying this fact in the
  complexity analysis in~\cite{CheriyanHM90}, we get the corresponding
  complexity for computing maximum flow for bipartite networks
  $\O(\size{V_1}\size{V}^2/\log \size{V} )$. Hence, the complexity for the
  $\O(\log |V|)$-invocations of the maximum flow algorithm is bounded by
  $\O(\size{V_1}\size{V}^2)$. As $\size{V} \le 2\size{V_2}$, the complexity
  is equal to $\O(\size{V_1}\size{V_2}^2)$.  
Summing over all $(s_1,s_2)$ over
  all $R_i$, we get the overall complexity of $\prog{SimRel}_w(\D)$:
\begin{align}\label{eq:complexity_fps}
  & \sum_{i=1}^k \sum_{(s_1,s_2)\in R_i} (\size{\post(s_1)}+m +
  \size{\post(s_1)}n + \size{V_1}\size{V_2}^2)
  \le 4knm^2
\end{align}

Recall that in the algorithm $\prog{SimRel}_w(\D)$, the number of
iterations $k$ is at most $n^2$.  Hence, the time complexity amounts
to $\O(m^2n^3)$.  The space complexity is $\O(n^2)$ because of the
representation of $R$.  If the fanout is bounded by a constant $g$, we
have $m\le gn$, and get the complexity $\O(n^5)$.
\end{proof}

\subsubsection{An Improvement}
\label{sec:improvement}
The algorithm $\prog{Ws}(\M,s_1,s_2,R)$ is dominated by the part in which
all breakpoints ($\O(n)$ many) must be computed, and a binary search is
applied to the breakpoints, with $\O(\log n)$ many feasible flow problems.
In this section we discuss how to achieve an improved algorithm if the
network $\N(\gamma)$ can be partitioned into sub-networks.

Let $H$ denote the sub-relation $R\cap [(\post(s_1)\cup\{s_1\})\times
(\post(s_2)\cup\{s_2\})]$, which is the local fragment of the relation
$R$.  Now let $A_1,A_2,\ldots A_\psize$ enumerate the classes of the
equivalence relation $(H\cup H^{-1})^*$ generated by $H$, where
$\psize$ denotes the number of classes. W.\,l.\,o.\,g.\@, we assume in
the following that $A_\psize$ is the equivalence class containing
$s_1$ and $s_2$, \ie, $s_1,s_2\in A_\psize$. The following lemma gives
some properties of the sets $A_i$ provided that $s_1\wsrel_R s_2$:

\begin{lem}\label{dtmc_prepare}
  For $(s_1,s_2)\in R$, assume that there exists a state $s_1'\in
  \post(s_1)$ such that $s_1'\not\in R^{-1}(s_2)$, and $s_2'\in
  \post(s_2)$ such that $s_2'\not\in R(s_1)$.  Let $A_1,\ldots,
  A_\psize$ be the sets constructed for $(s_1,s_2)$ as above. If
  $s_1\wsrel_R s_2$, the following hold:
  \begin{enumerate}[\em(1)]
  \item $\P(s_1,A_i)>0$ and $\P(s_2,A_i)>0$ for all $i<\psize$,
  \item $\gamma_i={K_1}/{K_2}$ for all $i<\psize$ where
    $\gamma_i = {\P(s_1,A_i)}/{\P(s_2,A_i)}$
  \end{enumerate}
\end{lem}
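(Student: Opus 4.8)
The plan is to extract from $s_1 \wsrel_R s_2$ the data of Definition~\ref{def:dtmc_weak_simulation} --- functions $\delta_1,\delta_2$, sets $U_1,U_2,V_1,V_2$, constants $K_1,K_2$, and a weight function $\Delta$ --- and then to observe that $\Delta$ can only move mass \emph{inside} a single class $A_i$. First I would note that the hypothesis yields $s_1'\in\post(s_1)$ with $s_1'\notin R^{-1}(s_2)$ and $s_2'\in\post(s_2)$ with $s_2'\notin R(s_1)$; by the stutter-simulation requirement (Condition~\ref{ws:stuttersimulates}) any $w\in\post(s_1)\setminus R^{-1}(s_2)$ cannot lie in $V_1$, so $\delta_1(w)=1$, whence $s_1'\in U_1$ and $K_1\ge\P(s_1,s_1')>0$; symmetrically $K_2>0$. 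This activates the two equations of Condition~\ref{ws:wfsums}:
\[
K_1\,\Delta(w,U_2)=\P(s_1,w)\delta_1(w),\qquad K_2\,\Delta(U_1,w)=\P(s_2,w)\delta_2(w)\quad\text{for all }w\in S. \quad(\dagger)
\]

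The decisive structural observation, which I expect to be the crux of the argument, is that $\Delta$ respects the partition into classes: if $\Delta(u_1,u_2)>0$ then $u_1\mathrel{R}u_2$ with $u_1\in U_1\subseteq\post(s_1)$ and $u_2\in U_2\subseteq\post(s_2)$, so $(u_1,u_2)\in H$ and $u_1,u_2$ lie in the same class. Dually, Condition~\ref{ws:stuttersimulates} forces every stutter successor into the class $A_\psize$ of $s_1,s_2$: a $v_1\in V_1$ satisfies $v_1\mathrel{R}s_2$, hence $(v_1,s_2)\in H$ and $v_1\in A_\psize$, and symmetrically $V_2\subseteq A_\psize$. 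Consequently, for $i<\psize$ any $w\in A_i\cap\post(s_1)$ must avoid both $PV_1$ (else $(w,s_2)\in H$ puts $w$ into $A_\psize$) and $V_1$, so $w\in MU_1$ and $\delta_1(w)=1$; likewise $\delta_2(w)=1$ for $w\in A_i\cap\post(s_2)$.

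Next I would sum each equation of $(\dagger)$ over $w\in A_i$ for a fixed $i<\psize$. Writing $F_i:=\sum_{a,b\in A_i}\Delta(a,b)$ for the total $\Delta$-mass carried inside $A_i$, the class-respecting property gives $\sum_{w\in A_i}\Delta(w,U_2)=F_i=\sum_{w\in A_i}\Delta(U_1,w)$, while on the right-hand sides the factors $\delta_1,\delta_2$ equal $1$ wherever $\P(s_1,\cdot)$ resp.\ $\P(s_2,\cdot)$ is positive. This yields the two identities
\[
K_1 F_i=\P(s_1,A_i),\qquad K_2 F_i=\P(s_2,A_i). \quad(\ddagger)
\]

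Finally I would establish $F_i>0$. Since $A_i$ is a nonempty class with $s_1,s_2\notin A_i$, pick any $x\in A_i$; then $x\in\post(s_1)$ or $x\in\post(s_2)$. If $x\in\post(s_1)$, then $\delta_1(x)=1$ and $(\dagger)$ gives $\Delta(x,U_2)=\P(s_1,x)/K_1>0$, and class-respecting bounds $F_i\ge\Delta(x,U_2)>0$; the case $x\in\post(s_2)$ is symmetric via the second equation of $(\dagger)$. With $F_i>0$ and $K_1,K_2>0$, the identities $(\ddagger)$ immediately give part~(1), namely $\P(s_1,A_i)>0$ and $\P(s_2,A_i)>0$, and dividing the two identities in $(\ddagger)$ gives part~(2), $\gamma_i=\P(s_1,A_i)/\P(s_2,A_i)=K_1/K_2$. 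The only delicate bookkeeping is pinning down that $\Delta$ never leaks flow between classes and that all stutter steps collapse into $A_\psize$; once that is secured, both $(\ddagger)$ and the positivity of $F_i$ are routine.
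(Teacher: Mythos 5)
Your proof is correct and follows essentially the same route as the paper's: both rest on the observations that $\Delta$ only carries mass within a single class $A_i$ (since $\Delta(u_1,u_2)>0$ forces $(u_1,u_2)\in H$), that $V_1,V_2$ collapse into $A_\psize$ so $\delta_1,\delta_2$ equal $1$ on the classes $A_i$ with $i<\psize$, and on summing the Condition~\ref{ws:wfsums} equations over each class to get $\P(s_1,A_i)=K_1 F_i$ and $\P(s_2,A_i)=K_2 F_i$. The only difference is cosmetic ordering: the paper proves part~(1) first by an element chase (pick $t\in A_i\cap\post(s_1)$, push its mass through $\Delta$ to find $u_2\in A_i$ with $\P(s_2,u_2)>0$) and then derives part~(2) by summation, whereas you establish the summed identities first and obtain part~(1) as a corollary of $F_i>0$ --- a mild streamlining of the same argument.
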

\begin{proof}
  Since $s_1\wsrel_R s_2$, we let $\delta_i, U_i,V_i,\Delta$ (for
  $I=1,2$) as described in Definition~\ref{def:dtmc_weak_simulation}.
  Because of states $s_1'$ and $s_2'$, we have $K_1>0,K_2>0$.  Let
  $\post_i(s_j) = A_i \cap \post(s_j)$ for $i=1,\ldots,\psize$ and
  $j=1,2$.  We prove the first part.  For $i<\psize$, the set $A_i$
  does not contain $s_1$ nor $s_2$, but only (some of) their
  successors, so it is impossible that both $\P(s_1,A_i)=0$ and
  $\P(s_2,A_i)=0$.  W.\,l.\,o.\,g.\@, assume that
  $\P(s_1,A_i)>0$. There exists $t\in \post_i(s_1)$ such that
  $\P(s_1,t)>0$.  Obviously $\delta_1(t)=1$, thus: $0<\P(s_1,t)=(K_1
  \Delta (t,U_2))/\delta_1(t) = K_1 \Delta (t,U_2) $ which implies
  that $\exists u_2\in A_i$ with $\Delta(t,u_2)>0$.  Hence, $
  \P(s_2,u_2) = K_2\Delta(U_1,u_2) \ge K_2\Delta(t,u_2) >0. $ Now we
  prove the second part. It holds that:
\begin{multline*}
  \P(s_1,A_i) = \sum_{a_i\in A_i}\P(s_1,a_i) =  \sum_{a_i\in
    \post_i(s_1)}\P(s_1,a_i) 
  \overeq{(*)}   \sum_{a_i\in \post_i(s_1)} \frac{K_1\Delta(a_i,U_2)}{\delta_1(a_i)}\\              
  \overeq{(!)}  K_1\cdot \sum_{a_i\in \post_i(s_1)}\Delta(a_i,U_2)
  \overeq{(\dag)}  K_1\cdot \sum_{a_i\in \post_i(s_1)}\Delta(a_i,A_i) = K_1\cdot \sum_{a_i\in A_i}\Delta(a_i,A_i)
\end{multline*}
where $(*)$ follows from Condition~\ref{ws:wfsums} of
Definition~\ref{def:dtmc_weak_simulation},
$(!)$ follows from the equation $\delta_1(a_i)=1$ for all $a_i\in
\post_i(a_1)$ with $i<n$,
and $(\dag)$ follows from the fact that if $a\in \post_i(s_1)$, then $\Delta(a,b)=0$ 
for $b\in U_2\backslash \post_i(s_2)$.
In the same way, we get $\P(s_2,A_i)= K_2\cdot \sum_{a_i\in
  A_i} \Delta(A_i,a_i)$.
Therefore, $\gamma_i = {K_1}/{K_2}$ for $1\le i<\psize$. 
\end{proof}

\begin{algorithm}[tbp]
  \caption{Algorithm to check whether $s_1\wsrel_R s_2$ tailored to
    DTMCs.}
  \label{fig:main}
    \Procname{$\prog{WsImproved}(\D,s_1,s_2,R)$}
  \begin{algorithmic}[1]
        \STATE  Construct the partition $A_1,\ldots, A_\psize$
\hfill{(* Assume that $h>1$ *)}\label{algwsd:constructA}
        \FORALL{$i\gets 1,2,\ldots \psize-1$} \label{algwsd:loop.start}
                \IF{$\P(s_1,A_i)=\P(s_2,A_i)=0$}
                        \STATE  \textbf{raise error}
                \ELSIF{$\P(s_1,A_i)=0$ or $\P(s_2,A_i)=0$}
                        \STATE  \textbf{return} $\mathbf{false}$
                \ELSE
                        \STATE  $\gamma_i \gets
    \frac{\P(s_1,A_i)}{\P(s_2,A_i)}$
                \ENDIF
        \ENDFOR
        \IF{$\gamma_i\not=\gamma_j$ for some $i,j<\psize$}
                \STATE  \textbf{return} $\mathbf{false}$ \label{algwsd:gammadiffers}
        \ENDIF
        \STATE \textbf{return} ($\gamma_1$ is valid for $\N(\gamma_1)$) \label{algwsd:gammavalid}
  \end{algorithmic}
\end{algorithm}

For the case $\psize>1$, the above lemma allows to check whether $s_1
\wsrel_R s_2$ efficiently. For this case we replace
lines~\ref{algwsf:computebreakpoints}--\ref{algwsf:returnbiisvalid} of
$\prog{Ws}$ by the sub-algorithm $\prog{WsImproved}$ in
Algorithm~\ref{fig:main}.  The partition $A_1,\ldots,A_\psize$ is
constructed in
line~\ref{algwsd:constructA}. Lines~\ref{algwsd:loop.start}--\ref{algwsd:gammadiffers}
follow directly from Lemma~\ref{dtmc_prepare}: if $\gamma_i\neq\gamma_j$
for some $i,j<h$, we conclude from Lemma~\ref{dtmc_prepare} that
$s_1\not\wsrel_R s_2$. Line~\ref{algwsd:gammavalid} follows from the
following lemma, which is the counterpart of Lemma~\ref{valid_gamma}:

\begin{lem}\label{valid_gamma_dtmc}
  For $(s_1,s_2)\in R$, assume that there exists a state $s_1'\in
  \post(s_1)$ such that $s_1'\not\in R^{-1}(s_2)$, and $s_2'\in \post(s_2)$
  such that $s_2'\not\in R(s_1)$.  Assume that $\psize>1$, and assume
  $\prog{WsImproved}(\D,s_1,s_2,R)$ reaches line~\ref{algwsd:gammavalid}.
  Then, $s_1\wsrel_R s_2$ iff $\gamma_1$ is valid for $\N(\gamma_1)$.
\end{lem}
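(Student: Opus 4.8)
The plan is to derive this lemma as a short corollary of the two preceding results, Lemma~\ref{valid_gamma} and Lemma~\ref{dtmc_prepare}, which together pin down the \emph{unique} candidate for a valid parameter. Recall that the hypothesis ``$\prog{WsImproved}(\D,s_1,s_2,R)$ reaches line~\ref{algwsd:gammavalid}'' means that for every $i<\psize$ the ratio $\gamma_i=\P(s_1,A_i)/\P(s_2,A_i)$ is well-defined (neither $\P(s_1,A_i)$ nor $\P(s_2,A_i)$ vanishes) and that all these ratios coincide with $\gamma_1$. I would handle the two directions of the biconditional separately.

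For the \emph{if} direction I would invoke Lemma~\ref{valid_gamma} directly. The standing hypotheses of the present lemma — the existence of $s_1'\in\post(s_1)$ with $s_1'\not\in R^{-1}(s_2)$ and of $s_2'\in\post(s_2)$ with $s_2'\not\in R(s_1)$ — are exactly the hypotheses of Lemma~\ref{valid_gamma}, so that lemma applies to $(s_1,s_2)$. Hence, if $\gamma_1$ is valid for $\N(\gamma_1)$, a valid parameter exists and Lemma~\ref{valid_gamma} immediately yields $s_1\wsrel_R s_2$. For the \emph{only if} direction, suppose $s_1\wsrel_R s_2$ and let $\delta_i,U_i,V_i,K_i,\Delta$ (for $i=1,2$) be the witnessing data as in Definition~\ref{def:dtmc_weak_simulation}. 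Because $s_1'$ and $s_2'$ force $MU_i\neq\emptyset$ and $MU_i\subseteq U_i$, both $K_1$ and $K_2$ are strictly positive. Lemma~\ref{dtmc_prepare}(2) then gives $\gamma_i=K_1/K_2$ for every $i<\psize$; since $\psize>1$ there is at least one such index, and in particular $\gamma_1=K_1/K_2$. The crucial observation is that $K_1/K_2$ is precisely the parameter for which the \emph{only if} part of the proof of Lemma~\ref{valid_gamma} builds an explicit valid flow out of the functions $\delta_i$ and $\Delta$; that construction therefore exhibits a valid flow for $\N(K_1/K_2)=\N(\gamma_1)$, so $\gamma_1$ is valid.

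The only delicate point is the identification of $\gamma_1$ — the quantity literally computed by the algorithm from the transition probabilities along the equivalence classes — with the quantity $K_1/K_2$ that arises from the weak-simulation witnesses. This identification is exactly the content of Lemma~\ref{dtmc_prepare}(2), and the assumption $\psize>1$ is what guarantees that at least one index $i<\psize$ exists, so that $\gamma_1$ is defined and the cited equality is non-vacuous. Once this identification is in place, no fresh flow-theoretic argument is needed beyond what Lemma~\ref{valid_gamma} already supplies, and the lemma follows.
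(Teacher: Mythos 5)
Your proposal is correct and takes essentially the same route as the paper's own proof: the \emph{if} direction is a direct application of Lemma~\ref{valid_gamma}, and the \emph{only if} direction reuses the explicit flow construction from the proof of Lemma~\ref{valid_gamma} to show that $K_1/K_2$ is a valid parameter, then identifies $\gamma_1=K_1/K_2$ via Lemma~\ref{dtmc_prepare}(2). Your explicit remark that $\psize>1$ guarantees the existence of an index $i<\psize$ making this identification non-vacuous is exactly the role that assumption plays in the paper as well.
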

\begin{proof}
   First, assume that $s_1\wsrel_R s_2$.  According to
  Lemma~\ref{valid_gamma}, there exists a valid $\gamma^*$ for
  $\N(\gamma^*)$. As in the proof of Lemma~\ref{valid_gamma},
  $\gamma^*={K_1}/{K_2}$ is valid for $\N(\gamma^*)$. If $\prog{Ws}$
  reaches line~\ref{algwsd:gammavalid}, by Lemma~\ref{dtmc_prepare}, we
  have $\gamma_1={K_1}/{K_2}$, hence, $\gamma_1$ is valid for
  $\N(\gamma_1)$. The other direction follows directly from
  Lemma~\ref{valid_gamma}.
\end{proof}

\begin{exa}\label{dtmc_example}
  Consider the DTMC in Figure~\ref{fig:dtmc_partition} together with the
  relation $R= \{(s_1,s_2)$, $(s_1,v_2), (v_1,s_2), (u_1,u_2), (o_1,o_2),
  (o_1,v_2), (v_1,o_3), (v_1,v_2), (o_2,o_1)\}$. We obtain the relation $H
  = R\backslash \{(o_2,o_1)\}$. We get two partitions $A_1=\{u_1,u_2\}$ and
  $A_2=\{s_1, s_2, v_1, v_2, o_1, o_2, o_3\}$. In this case we have
  $\psize=2$. Recall that $MU_1=\{u_1, o_1\}$, $MU_2=\{u_2, o_2,o_3\}$,
  $PV_1=\{v_1\}$, $PV_2=\{v_2\}$.  We have $\P(s_1,A_1)=\frac{1}{4}$ and
  $\P(s_2,A_1)=\frac{1}{8}$.  Hence, $\gamma_1
  ={\P(s_1,A_1)}/{\P(s_2,A_1)} = 2$. As we have shown in
  Example~\ref{exa:feasible}, $2$ is valid for the network $\N(2)$. Hence,
  $s_1\wsrel_R s_2$.
\end{exa}

Assume that $(s_1,s_2)\in R_1$ such that $h>1$ in the first iteration of
$\prog{SimRel}_w$. We consider the set $A_1$ and let
$\gamma_1={\P(s_1,A_1)}/{\P(s_2,A_1)}$. If $A_1$ is not split in the
next iteration, $\gamma_1$ would not change, and hence, we can reuse the
network constructed in the last iteration.  Assume that in the next
iteration $A_1$ is split into two sets $A_1^a$ and $A_1^b$.  There are two
possibilities:
\begin{enumerate}[$\bullet$]
\item either ${\P(s_1,A_1^a)}/{\P(s_2,A_1^b)}
  = {\P(s_1,A_1^a)}/{\P(s_2,A_1^b)}$. This
  implies that both of them are equal to $\gamma_1$.  If all $A_i$ are
  split like $A_1$, we just check whether $\gamma_1$ is valid for
  $\N(\gamma_1)$.
\item or
  ${\P(s_1,A_1^a)}/{\P(s_2,A_1^b)} \not=
  {\P(s_1,A_1^a)}/{\P(s_2,A_1^b)}$. This case
  is simple, we conclude $s_1\not\wsrel_R s_2$ because of Lemma~\ref{dtmc_prepare}.
\end{enumerate}
This indicates that once in the first iteration $\gamma_1$ is determined
for $(s_1,s_2)$, either it does not change throughout the iterations, or we
conclude that $s_1\not\wsrel_R s_2$ directly.  The above analysis
can be generalised to the case in which $A_1$ is split into more than two
sets. As the network $\N(\gamma_1)$ is fixed, we can apply an algorithm
similar to $\prog{Smf}$, which solves the maximum flow problems
during all subsequent iterations using only one parametric maximum flow, as for
strong simulation.  

The above analysis implies that if $h>1$ for all $(s_1,s_2)$ in the initial
$R_1$, we could even establish the time bound $\O(m^2n)$, the same as
for strong simulation.  Since in the worst case it could be the case that $h=1$
for all $(s_1,s_2)\in R$, the algorithm $\prog{WsImproved}$ does not
improve the worst case complexity.  

Since the case that the network cannot be partitioned ($\psize=1$) is the
one that requires most of our attention, we suggest a heuristic approach
that can reduce the number of occurrences of this case.  We may choose to
run some iterations incompletely (as long as the last iteration is run
completely).  If iteration $i$ is incomplete, we first check for each pair
$(s_1,s_2)\in R_i$ whether the corresponding $\psize_i$ is greater than
$1$.  If not, we skip the test and just add $(s_1,s_2)$ to $R_{i+1}$.  The
intuition is that in the next complete iteration $i' > i$, for each such  pair
$(s_1,s_2)$ we hope to get $h_{i'} > 1$ because some other elements of
$R_i$ have been eliminated from it.  We only perform the expensive
computation if an incomplete iteration does no longer refine the relation.

\subsection{An Algorithm for CTMCs} 
\label{sec:weak_ctmc}
Let $\ctmc$ be a CTMC.  We now discuss how to handle CTMCs. Recall that in
Definition~\ref{def:ctmc_weak_simulation}, we have the rate
condition~$\ref{ws:reachability}'$: $K_1\R(s_1,S)\le K_2\R(s_2,S)$. To
determine $\wsrel_\M$, we simplify the algorithm for DTMCs. If $K_1 > 0$
and $K_2=0$, $s_1\not\wsrel_R s_2$ because of the rate condition. Hence, we
do not need to check the reachability condition, and
lines~\ref{algws:testposts2simss1}--\ref{algws:testreachcond.end} of the
algorithm $\prog{Ws}(\C,s_1,s_2,R)$ can be skipped. For states $s_1,s_2$
and relation $R$, we use $\N(\gamma)$ to denote the network defined in the
embedded DTMC $emb(\M)$.  To check the additional rate condition we use the
following lemma:

\begin{lem}\label{valid_minimal}
Let $s_1\ R\ s_2$. Assume that there exists $s_1'\in post(s_1)$ such that
$s_1'\not\in R^{-1}(s_2)$. Then, $s_1\wsrel_R s_2$ in $\C$ iff there exists
$\gamma\le{\R(s_2,S)}/{\R(s_1,S)}$ such that $\gamma$ is valid for
$\N(\gamma)$.
\end{lem}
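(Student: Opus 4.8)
The plan is to reduce the statement to the DTMC characterisation already established in Lemma~\ref{valid_gamma}, exploiting the fact that for CTMCs the rate condition~$\ref{ws:reachability}'$ plays two roles at once: it encodes the parameter bound $\gamma \le \R(s_2,S)/\R(s_1,S)$, and it takes over the job that the extra hypothesis ``$s_2'\in\post(s_2)$ with $s_2'\not\in R(s_1)$'' (equivalently $MU_2\neq\emptyset$) performed in Lemma~\ref{valid_gamma}. The decisive identity is that whenever $\delta_i,U_i,V_i,K_i,\Delta$ witness the embedded-DTMC conditions with $K_2>0$, one has $\gamma=K_1/K_2$, and then $K_1\R(s_1,S)\le K_2\R(s_2,S)$ is, after dividing by $K_2\R(s_1,S)>0$, precisely $\gamma\le\R(s_2,S)/\R(s_1,S)$. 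Since $\N(\gamma)$ is built from the embedded DTMC, Conditions~\ref{ws:stuttersimulates} and~\ref{ws:weightfunction} of Definition~\ref{def:dtmc_weak_simulation} are checked against $\P(s_i,\cdot)=\R(s_i,\cdot)/\R(s_i,S)$ exactly as before, so all the weight-function machinery is inherited verbatim; only the bookkeeping around $K_2$ and the rate bound is new.

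For the \emph{only if} direction I would take witnesses $\delta_i,U_i,V_i,K_i,\Delta$ for $s_1\wsrel_R s_2$ in $\C$. As $s_1'\in\post(s_1)\setminus R^{-1}(s_2)=MU_1$ and $\delta_1(s_1')=1$ (otherwise $s_1'\in V_1$ would force $s_1'\mathrel{R}s_2$, contradicting the hypothesis), we have $s_1'\in U_1$ and $K_1\ge\P(s_1,s_1')>0$; the rate condition then forces $K_2>0$, a step that needs \emph{no} assumption on the $s_2$-side. Setting $\gamma=K_1/K_2$ and defining $f(\source,s)=\P(s_1,s)\delta_1(s)$, $f(s,\overline t)=K_1\Delta(s,t)$, $f(\overline s,\sink)=\gamma\P(s_2,s)\delta_2(s)$ exactly as in Lemma~\ref{valid_gamma} yields a flow saturating every edge $(\source,u_1)$ with $u_1\in MU_1$ and every edge $(\overline{u_2},\sink)$ with $u_2\in MU_2$ (the latter vacuous if $MU_2=\emptyset$), so $\gamma$ is valid; and dividing the rate condition gives $\gamma\le\R(s_2,S)/\R(s_1,S)$.

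For the \emph{if} direction I would start from a valid flow $f$ for some $\gamma\le\R(s_2,S)/\R(s_1,S)$ and first record that $\gamma>0$: validity together with $MU_1\neq\emptyset$ gives $|f|\ge\P(s_1,MU_1)>0$, whereas $|f|=f(\overline{U_2},\sink)\le\gamma\,\P(s_2,U_2)\le\gamma$. Reusing the constructions $\delta_1(s)=f(\source,s)/\P(s_1,s)$, $\delta_2(s)=f(\overline s,\sink)/(\gamma\P(s_2,s))$ and $\Delta(w,w')=f(w,\overline{w'})/K_1$ from Lemma~\ref{valid_gamma} establishes Conditions~\ref{ws:stuttersimulates} and~\ref{ws:weightfunction}, together with $K_1=|f|>0$ and $K_1/K_2=\gamma$, whence $K_2=|f|/\gamma>0$. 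The rate condition then follows by multiplying $\gamma\le\R(s_2,S)/\R(s_1,S)$ through by $K_2\R(s_1,S)>0$, using $\gamma K_2=K_1$. Because $K_1,K_2>0$, the reachability condition is not needed (the CTMC definition omits it), and I conclude $s_1\wsrel_R s_2$ in $\C$.

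The main obstacle, and the only genuine departure from Lemma~\ref{valid_gamma}, is that here $MU_2$ may be empty, so I cannot derive $K_2>0$ from ``$MU_2\subseteq U_2$'' as in the DTMC proof. The care point is that positivity of $K_2$ must instead be extracted from the rate condition (in the \emph{only if} direction) and from $\gamma>0$ together with $K_1/K_2=\gamma$ (in the \emph{if} direction); I would therefore state the argument $\gamma>0$ explicitly, since a valid $\gamma=0$ is incompatible with $MU_1\neq\emptyset$, and it is exactly this observation that lets the rate condition assume the role previously played by the reachability condition.
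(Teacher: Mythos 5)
Your proof is correct and follows essentially the same route as the paper's: both directions reduce to the flow constructions of Lemma~\ref{valid_gamma}, with the rate condition supplying $K_2>0$ in the \emph{only if} direction and $\gamma=K_1/K_2$ turning the rate condition into the bound $\gamma\le\R(s_2,S)/\R(s_1,S)$. The only (harmless) variation is in the \emph{if} direction, where you derive $K_2>0$ by first establishing $\gamma>0$ and using $K_2=\size{f}/\gamma$, whereas the paper gets it by flow conservation (the saturated $MU_1$ edge forces positive flow into $\sink$, hence some state in $U_2$); your explicit remark that a valid $\gamma=0$ is impossible when $MU_1\neq\emptyset$ is a welcome clarification the paper leaves implicit.
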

\begin{proof}
Assume first $s_1\wsrel_R s_2$ in $\C$. Let
$\delta_i,U_i,V_i,K_i,\Delta$ (for $i=1,2$) as described in
Definition~\ref{def:ctmc_weak_simulation}. Obviously, $s_1'$ must be in
$U_1$, implying that $K_1>0$. Because of the rate condition it holds that
$K_1\R(s_1,S)\le K_2\R(s_2,S)$, which implies that $K_2>0$. It is
sufficient to show that $\gamma:={K_1}/{K_2}$ is valid for
$\N(\gamma)$. Exactly as in the proof of Lemma~\ref{valid_gamma} (the
\emph{only if} direction), we can construct a valid flow $f$ for
$\N(\gamma)$. Thus, $\gamma$ is valid for $\N(\gamma)$ and
$\gamma\le{\R(s_2,S)}/{\R(s_1,S)}$. 

Now we show the other direction. By assumption, $\gamma$ is valid for
$\N(\gamma)$. We may assume that there exists a valid flow function $f$ for
$\N(\gamma)$. We define $\delta_i,V_i,U_i,K_i,\Delta$ (for $i=1,2$) as in
the proof (the
\emph{if} direction) of Lemma~\ref{valid_gamma}. Recall that $s_1'$ must be
in $U_1$, implying that $f(\source,s_1')>0$. Thus, there must be a node
$\overline{s}$ in $\N(\gamma)$ with $f(\overline{s},\sink)>0$, which
implies that $s\in U_2$. Thus we have $K_2>0$. Using the proof (the
\emph{if} direction) of Lemma~\ref{valid_gamma}, it holds that
$s_1\wsrel_R s_2$ in $emb(\C)$, moreover, it holds that
$\gamma={K_1}/{K_2}$. By assumption it holds that
$\gamma\le\R(s_2,S)/\R(s_1,S)$ which is exactly the rate condition.
\end{proof}

To check the rate condition for the case $\psize>1$, we replace line~\ref{algwsd:gammavalid}
of the algorithm $\prog{WsImproved}$ by:
\[\textbf{return } \left(\gamma_1\le \gamma^* \wedge
        \gamma_1 \text{ is valid for } \N(\gamma_1) \right)
\]
where $\gamma^*=\R(s_2,S)/\R(s_1,S)$ can be computed directly.  In
case $\psize=1$, we replace line~\ref{algwsf:returnbiisvalid} of
$\prog{Ws}$ by:
\[\textbf{return } \left(\exists i \in \{ 1, \ldots, j \}.\ b_i\le
  \gamma^* \wedge b_i \text{ is valid for } \N(b_i) \right)
\]
to check the rate condition. Or, equivalently, we can check whether the
minimal valid breakpoint $\gamma_m$ is smaller than or equal to
$\gamma^*$. The binary search algorithm introduced for DTMCs can also be
modified slightly to find the minimal valid breakpoint.  The idea is that,
if we find a valid breakpoint, we first save it, and then continue the
binary search on the left side.  If another breakpoint is valid, we save
the smaller one.  As the check for the reachability condition disappears
for CTMCs, we get  better bound for sparse CTMCs:
\begin{thm}
If the fanout $g$ of $\C$ is bounded by a constant,
the time complexity for CTMC is $\O(n^4)$.
\end{thm}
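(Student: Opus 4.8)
The plan is to derive the $\O(n^4)$ bound by revisiting the per-call cost analysis of $\prog{Ws}$ carried out in the proof of Theorem~\ref{thm:fpsws_complexity} and observing which contributions disappear once the reachability check is dropped for CTMCs. Recall that for CTMCs the algorithm $\prog{Ws}(\C,s_1,s_2,R)$ differs from its DTMC counterpart only in that lines~\ref{algws:testposts2simss1}--\ref{algws:testreachcond.end} (the reachability condition) are skipped, while the rate condition is folded into the validity test via $\gamma^*=\R(s_2,S)/\R(s_1,S)$, using Lemma~\ref{valid_minimal} together with the minimal-valid-breakpoint variant of the binary search. Everything else---the stutter test in line~\ref{algws:onlystutter}, the computation of the breakpoints of $\N(\gamma)$, and the binary search invoking the maximum flow algorithm---is unchanged.

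First I would bound the cost of a single call of $\prog{Ws}(\C,s_1,s_2,R_i)$. The key point is that omitting lines~\ref{algws:testposts2simss1}--\ref{algws:testreachcond.end} removes exactly the two terms in Inequality~\ref{eq:complexity_fps} that stemmed from the reachability check: the $\O(m)$ breadth-first search used to build $reach(s)$, and the $\O(\size{\post(s_1)}\,n)$ spent scanning it. What remains per call is the stutter test, taking time $\O(\size{\post(s_1)})$; the computation of $\gamma^*$, which reduces to summing the rates out of $s_1$ and $s_2$ and is thus $\O(\size{\post(s_1)}+\size{\post(s_2)})$; the breakpoint computation, taking $\O(\size{V_1}^2\size{V_2})$; and the $\O(\log\size{V})$ maximum-flow computations of the binary search, together costing $\O(\size{V_1}\size{V_2}^2)$, exactly as in the DTMC analysis. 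Hence a single CTMC call runs in time $\O(\size{V_1}\size{V_2}^2)$.

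Next I would specialise to bounded fanout. If $g=\max_{s\in S}\size{\post(s)}$ is a constant, then $\size{\post(s_i)}\le g$ for $i=1,2$, so the bipartition sizes $\size{V_1},\size{V_2}$ of $\N(\gamma)$ are bounded by $g+2=\O(1)$; in particular $\N(\gamma)$ has $\O(1)$ vertices and $\O(1)$ breakpoints. Consequently every per-call term above is $\O(1)$, so each call of $\prog{Ws}(\C,s_1,s_2,R_i)$ costs $\O(1)$. Since $\prog{Ws}$ is invoked once per pair $(s_1,s_2)\in R_i$, there are at most $\size{R_i}\le n^2$ calls per iteration, and the number of iterations $k$ is at most $n^2$; multiplying, the total time is $\sum_{i=1}^{k}\sum_{(s_1,s_2)\in R_i}\O(1)\le n^2\cdot n^2\cdot\O(1)=\O(n^4)$, as claimed. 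This is precisely a factor $n$ better than the $\O(n^5)$ DTMC bound, the saving being the $\O(m)=\O(n)$ per-call breadth-first search that the reachability check required.

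The only genuinely delicate point is confirming that the two CTMC-specific modifications do not reintroduce a super-constant per-call cost. For the rate condition this is immediate, since $\gamma^*$ is obtained by a single division after summing $\O(g)$ rates. For the minimal-valid-breakpoint search I would argue that it is the same binary search as in the DTMC case with one extra bookkeeping step (keep the smaller of two valid breakpoints and recurse to the left), so its asymptotic cost is still $\O(\log\size{V})$ max-flow calls and hence $\O(1)$ under bounded fanout; correctness of returning the \emph{minimal} valid breakpoint rather than an arbitrary one follows from Lemma~\ref{valid_minimal}. Beyond that the argument is a direct term-by-term specialisation of the proof of Theorem~\ref{thm:fpsws_complexity}, so I expect no substantive obstacle.
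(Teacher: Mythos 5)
Your proposal is correct and follows essentially the same route as the paper's proof: both start from the per-call analysis of $\prog{Ws}$ in Theorem~\ref{thm:fpsws_complexity}, drop the reachability-check terms (the $\O(m)$ BFS and the $\O(\size{\post(s_1)}\,n)$ scan) from Inequality~\ref{eq:complexity_fps}, retain the $\O(\size{V_1}\size{V_2}^2)$ breakpoint-plus-binary-search cost, and sum over at most $n^2$ pairs and $k\le n^2$ iterations. The only cosmetic difference is that you specialise to bounded fanout immediately (making each call $\O(1)$), whereas the paper first derives the general bound $\O(gm^2n^2)$ and then substitutes $m\le gn$; your extra check that the rate condition and the minimal-valid-breakpoint search add no super-constant cost is a welcome explicit confirmation of what the paper leaves implicit.
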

\begin{proof}
  In the proof of Theorem~\ref{thm:fpsws_complexity} we have shown that
  $\prog{Ws}$ has complexity $\O(\size{V_1} \size{V_2}^2)$.  As we do not need to
  check the reachability condition, the overall complexity of the
  algorithm $\prog{SimRel}_w(\C)$ (see
  Inequality~\ref{eq:complexity_fps}) is $\sum_{s_1\in S} \sum_{s_2\in
    S} \sum_{i=1}^l \left(\size{\post(s_1)} + \size{V_1}\size{V_2}^2
  \right)$ which is bounded by $2kgm^2$.  Since $k$ is bounded by
  $n^2$, the time complexity is bounded by $4gm^2n^2$. If $g$ is a
  constant, we have $m\le gn$, hence, the time complexity is
  $4g^3n^4\in \O(n^4)$.
\end{proof}

\section{Conclusion and Future Work}
\label{sec:concl}

\noindent In this paper we have proposed efficient algorithms  deciding
simulation on Markov models with complexity $\O(m^2n)$.  For sparse
models where the fanout is bounded by a constant, we achieve
for strong simulation the complexity $\O(n^2)$
and for weak simulation $\O(n^4)$ on CTMCs and $\O(n^5)$ for DTMCs.  We
extended the algorithms for computing simulation preorders to handle
PAs with the complexity $\O(m^2n)$ for strong simulation. For strong
probabilistic simulation, we have shown that the preorder can be
determined by solving LP problems. We also considered
their continuous-time analogon,
CPAs, and arrived at an algorithm with same complexities as for PAs. 

\bibliographystyle{abbrv} 
\bibliography{bib}\vskip-40 pt
\end{document}